\newtheorem{thm}{Theorem}[section]
\newtheorem{lem}[thm]{Lemma}
\newtheorem{assumption}[thm]{Assumption}
\newtheorem{definition}[thm]{Definition}
\newenvironment{de}{\begin{definition}\rm}{\end{definition}}
\newtheorem{example}[thm]{Example}
\newenvironment{exmp}{\begin{example}\rm}{\end{example}}
\newtheorem{remark}[thm]{Remark}
\newenvironment{rem}{\begin{remark}\rm}{\end{remark}}
\def\eps{\varepsilon}
\title{A Randomized Approach to the Capacity of Finite-State Channels}
\author{\begin{tabular}{cc}
Guangyue Han\\
The University of Hong Kong\\
{\em email:} ghan@hku.hk\\
\end{tabular}}
\date{{\normalsize \today}}
\begin{document}\maketitle\thispagestyle{empty}

\begin{abstract}
Inspired by the ideas from the field of stochastic approximation, we propose a randomized algorithm to compute the capacity of a finite-state channel with a Markovian input. When the mutual information rate of the channel is concave with respect to the chosen parameterization, we show that the proposed algorithm will almost surely converge to the capacity of the channel and derive the rate of convergence. We also discuss the convergence behavior of the algorithm without the concavity assumption.
\end{abstract}

\section{Introduction} \label{introduction}

Discrete-time finite-state channels are a broad class of channels which have attracted plenty of interest in information theory; prominent examples of such channels include partial response channels~\cite{pr00, th87}, Gilbert-Elliott channels~\cite{mu89, go96} and noisy input-restricted channels~\cite{ZehaviWolf88}, which are widely used in a variety of real-life applications, including magnetic and optical recording~\cite{mrs98}, communications over band-limited channels with inter-symbol interference~\cite{fo72}. The computation of the capacity of a finite-state channel is notoriously difficult and has been open for decades. For a discrete memoryless channel with a discrete memoryless source at its input, the classical Blahut-Arimoto algorithm (BAA)~\cite{ar72, bl72} can effectively compute the channel capacity, however, for almost all nontrivial finite-state channels, little is known about the channel capacity other than some bounds (see, e.g.,~\cite{ZehaviWolf88},~\cite{ShamaiKofman90},~\cite{Arnold_etal06} and references therein), which are numerically computed using Monto Carlo approaches. The methods in these work are believed to produce fairly precise numerical approximations of the capacity of certain classes of finite-state channels, however there are no general proofs to justify such beliefs.

Recently, Vontobel {\em et al.} have proposed a generalized Blahut-Arimoto algorithm (GBAA) \cite{pa04} to maximize the mutual information rate of a finite-state machine channel with a finite-state machine source at its input. This interesting algorithm has attracted a great deal of attention due to the observations that it fairly precisely approximates the channel capacity for a number of practical channels. (Notably, some results that were derived in the context of the GBAA have proven to be useful for analyzing the Bethe entropy function of some graphical models that appear in the context of low-density parity-check codes~\cite{vo10} and for approximately computing the permanent of a non-negative matrix~\cite{vo12}.) For a finite-state channel, let $X$ denote the input Markov process and $Y$ its corresponding output process, which, by definition, is a {\em hidden Markov process}~\cite{bo10}. In contrast to the BAA, the convergence of the GBAA depends on the extra assumption that  $I(X; Y)$ and $H(X|Y)$ are both concave with respect to a chosen parameterization, which has been posed as Conjecture $74$ in~\cite{pa04}. Example~\ref{not-concave-example}, however, shows that the concavity conjecture is not true in general; for other examples showing $I(X; Y)$ and $H(X|Y)$ fail to be concave, see~\cite{LiHan}.

One of the hurdles encountered in computing the finite-state channel capacity is the problem of optimizing $H(Y)$, which naturally occurs in the formula of the capacity of a broad class of finite-state channels. More specifically, there has long been a lack of understanding on the following two issues:
\begin{itemize}
\item[(I)]  How to effectively compute the entropy rate of hidden Marov processes?
\item[(II)] How does the entropy rate of hidden Markov processes vary as the underlying Markov processes and the channels vary?
\end{itemize}

As elaborated below, recently, these two issues have been partially addressed by the information theory community.

\textbf{Related work on (I).} It is well known that $H(X)$ has a simple analytic formula; in stark contrast, there is no simple and explicit formula of $H(Y)$ for most non-degenerate channels ever since hidden Markov processes (or, more precisely, hidden Markov models) were formulated more than half a century ago. Here, we remark that Blackwell~\cite{bl57} showed that $H(Y)$ can be written as an integral of an explicit function on a simplex with respect to the Blackwell Measure. However, the Blackwell measure seems to be rather complicated for effective computation of $H(Y)$. Since 2000, there has been a rebirth of interest in computing and estimating $H(Y)$ in a variety of scenarios: the Blackwell measure has been used to bound $H(Y)$~\cite{or03}, a variation on the classical Birch bounds~\cite{bi62} can be found in~\cite{eg04} and a new numerical approximation of $H(Y)$ has been proposed in~\cite{gu09}. Generalizing Blackwell's idea, an integral formula for the derivatives of $H(Y)$ has been derived in~\cite{pf10}.

The celebrated Shannon-McMillan-Breiman theorem states that the $n$-th order {\em sample entropy} $-\log p(Y_1^n)/n$ converges to $H(Y)$ almost surely. Based on this, efficient Monte Carlo methods for approximating $H(Y)$ were proposed independently by Arnold and Loeliger~\cite{ar01}, Pfister, Soriaga and Siegel~\cite{pf01}, Sharma and Singh~\cite{sh01}. However, more quantitative description of the convergence behavior of the proposed methods, such as rate of convergence, asymptotic normality and so on, are lacking in these work. Recently, a central limit theorem (CLT)~\cite{pf03} for the sample entropy has been derived as a corollary of a CLT for the top Lyapunov exponent of a product of random matrices; a functional CLT has also been established in~\cite{ho03}. To some extent, these two CLTs suggested that the Monte Carlo methods are ``accurate'' in terms of approximating $H(Y)$. There are also other related work in different contexts from outside the information theory community, such as~\cite{ko98, hv04, Haydn09}.

Recently, we have obtained~\cite{han11} a number of limit theorems for the sample entropy of $Y$. These limit theorems can be viewed as further refinements of the Shannon-McMillian-Breiman theorem, which is the backbone of information theory. More specifically, Theorem $1.2$ in~\cite{han11} is a CLT with an error-estimate, which can be used to characterize the rate of convergence of the Monte Carlo methods in~\cite{ar01, pf01, sh01}, and Theorem $1.5$ in~\cite{han11} is a large deviation result, which gives a sub-exponential decaying upper bound on the probability of the sample entropy $-\log p(Y_1^n)/n$ deviating from $H(Y)$. Among many other applications, such as deriving non-asymptotic coding theorems~\cite{yangenhui}, these theorems positively confirmed the effectiveness of using the Shannon-McMillan-Breiman theorem to approximate $H(Y)$.

\textbf{Related work on (II).} The behavior of $H(Y)$ (as a function of the underlying Markov chain and the channel) is of significance in a number of scientific disciplines; particularly in information theory, it is of great importance for computing/estimating the capacity of finite-state channels.
However, some of the basic problems, such as smoothness (or even differentiability) of $H(Y)$, have long remained unknown. Recently, asymptotical behavior of $H(Y)$ has been studied in~\cite{ar94a,jss08,or03,or04, zu04,zu05,na05,an10,pf10}. Particularly in~\cite{zu04}, for a special type of hidden Markov chain $Y$, the Taylor series expansion of $H(Y)$ is given under the assumption that $H(Y)$ is analytic. Under mild assumptions, analyticity of $H(Y)$ has been established in~\cite{gm05}; see also related work in~\cite{bo10, zu04, zu05, al08, gu09, pf10} and references therein. The framework in~\cite{gm05} has been generalized to continuous-state settings and further provides useful tools and techniques for our subsequent work, such as derivatives~\cite{hm06b}, asymptotics~\cite{hm09}, concavity~\cite{hm09c} of $H(Y)$.

Equipped with ideas and techniques from the above-mentioned work on (I) and (II), we are more prepared to make further progress towards the computation of the channel capacity. In particular, the ideas and techniques in~\cite{han11} and~\cite{gm05} are vital to this paper. Roughly speaking, \cite{gm05} proves that the entropy rate of hidden Markov chains is a ``nicely behaved'' function; and~\cite{han11} confirms that it can be ``well-approximated'' using Monte Carlo simulations. The simulator of the derivative of $I(X; Y)$ as specified in Section~\ref{simulator}, which is crucial to this work, is an ``offspring'' of the two schools of thoughts in~\cite{gm05} and~\cite{han11}.

Stochastic approximation methods refer to a family of recursive stochastic algorithms, aiming to find zeroes or extrema of functions whose values can only be estimated via noisy observations. The extensive literature on stochastic approximation has grown up around two prototyipcal algorithms, the Robbins-Monro algorithm and the Kiefer-Wolfowitz algorithm, mainly concerning the convergence analysis on these two algorithms and their variants; we refer the reader to~\cite{ku03} for an exposition to the vast literature on stochastic approximation.

Inspired by the ideas in stochastic approximation, we propose a randomized algorithm to compute the capacity of a class of finite-state channels with input Markov processes supported on some mixing finite-type mixing constraint. Bearing the same spirit as the Robbins-Monro algorithm and the Kiefer-Wolfowitz algorithm, the proposed algorithm, in many subtle respects, differs from both of them. The main task of this paper is to conduct a convergence analysis of the proposed algorithm, which employs some established ideas and techniques from the field of stochastic approximation. In particular, the proofs in Section~\ref{without-concavity} are largely inspired by \cite{ta10}, which has credited origins of some of its techniques to earlier work, such as~\cite{be90, ku03, lj99}. However, neither the results nor the proofs in~\cite{ta10} or any of previous work imply our results; as a matter of fact, considerable amount of simplification and adaptation of the techniques in~\cite{ta10} have been incorporated into this work.

Although described in different languages, our settings are essentially the same as in~\cite{pa04}. On the other hand, as opposed to the GBAA, the concavity of $I(X; Y)$ alone is already sufficient to guarantee the convergence of our algorithm. Here, let us note that that for certain classes of channels (see Example~\ref{not-concave-example}), $I(X; Y)$ is indeed concave with respect to certain parameterization, whereas $H(X|Y)$ fails to be concave with respect to the same parameterization.

Characterizing the maximal rate at which the information can be transmitted through a given channel, the capacity is the most fundamental notion in information theory. The capacity achieving distribution will further provide us insightful guidance towards designing coding schemes that actually achieve the promised capacity. Apparently, such an algorithm would be of fundamental significance to both information theoretic research and practical applications to tele-communications and data storage.

The organization of the paper is as follows. We first describe our channel model in greater detail in Section~\ref{channel-model} and we then present our algorithm  in Section~\ref{the-algorithm}. In Section~\ref{simulator}, we propose a simulator for the derivative of $I(X; Y)$ and discuss its convergence behavior. The convergence of the algorithm is established in~\ref{f-convergence}, while the rate of convergence of the algorithm with and without concavity conditions are derived in Sections~\ref{with-concavity} and~\ref{without-concavity}, respectively. In Section~\ref{Memoryless-Channels}, we discuss the capacity achieving distribution of a special class of finite-state channels.

\section{Channel Model} \label{channel-model}
In this section, we specify the channel model considered in this paper in greater detail, which is essentially the same as the one considered in~\cite{pa04}.

Let $\mathcal{X}$ be a finite alphabet and let
$$
\mathcal{X}^2 = \{(i, j): i, j \in \mathcal{X}\}.
$$
Let $\Pi$ denote the set of all stationary irreducible first-order Markov chain over the alphabet $\mathcal{X}$. For a given subset $F \subset \mathcal{X}^2$, define
$$
\Pi_F=\{X \in \Pi: X_{i, j}=0, ~~ (i, j) \in F\},
$$
where we have identified an irreducible first-order Markov chain with its transition probability matrix. Furthermore, for any $\epsilon > 0$, define
$$
\Pi_{F, \epsilon}=\{X \in \Pi_F: X_{i, j} \geq \epsilon, ~~ (i, j) \not \in F\}.
$$
Obviously, if some $X \in \Pi_{F, \epsilon}$ is primitive (namely, irreducible and aperiodic), then any $X' \in \Pi_{F, \epsilon}$ is primitive; in this case, we say $F$ is a {\em mixing} finite-type constraint. Here, let us note that a mixing finite-type constraint can be defined in a much more general context; see~\cite{lm95}.

The motivation for consideration of finite-type constraints mainly comes from magnetic recording, where input sequences are required to satisfy certain mixing finite-type constraints in order to eliminate the most damaging error events~\cite{mrs98}. The most well known example is the so-called $(d,k)$-RLL constraint $\mathcal{S}(d,k)$ over the alphabet $\{0, 1\}$, which forbids any sequence with fewer than $d$ or more than $k$ consecutive zeros in between two
successive $1$'s.

In this paper, we are concerned with a discrete-time finite-state channel with some input constraint. Let $X, Y, S$ denote the channel input, output and state processes over finite alphabets $\mathcal{X}, \mathcal{Y}$ and $\mathcal{S}$, respectively. Assume that
\begin{enumerate}
\item[(\ref{channel-model}.a)] For some mixing finite-type constraint $F \subset \mathcal{X}^2$ and some $\epsilon > 0$, $X \in \Pi_{F, \epsilon}$.
\item[(\ref{channel-model}.b)] $(X, S)$ is a first-order stationary Markov chain whose transition probabilities satisfy
$$
p(x_n, s_n|x_{n-1}, s_{n-1})=p(x_n|x_{n-1}) p(s_n|x_n, s_{n-1}),
$$
where $p(s_n|x_n, s_{n-1}) > 0$ for any $s_{n-1}, s_{n}, x_n$.
\item[(\ref{channel-model}.c)] the channel is stationary, and the channel transition probabilities satisfy
$$
p(y_n, s_n|x_n, s_{n-1})=p(s_n|x_n, s_{n-1}) p(y_n|x_n, s_n).
$$
\end{enumerate}
The capacity of the above channel is defined as
$$
C_{F}=\sup I(X; Y)= \sup \lim_{n \to \infty} I_n(X; Y),
$$
where the supremum is over all $X$ satisfying (\thesection.a) and
$$
I_n(X; Y) \triangleq \frac{H(X_1^n)+H(Y_1^n)-H(X_1^n, Y_1^n)}{n}.
$$
The fact that $Y$ and $(X, Y)$ are both hidden Markov processes makes it apparent that solutions to (I) and (II) are essential for computing $C_F$.

Assume that $\Pi_{F, \epsilon}$ is analytically parameterized by $\theta \in \Theta \subset \mathbb{R}^d$, $d \geq 1$, where $\Theta$ denote the entire parameter space. Then, naturally, $X=X(\theta)$ and $Y=Y(\theta)$ are also analytically parameterized by $\theta$. Under this parameterization, we would like to find $\theta^* \in \Theta$ such that $X(\theta^*)$ maximizes $ I(X(\theta); Y(\theta))$.

\begin{rem}
One natural goal is to find $X \in \Pi_{F}$ to maximize $I(X; Y)$. However, in this paper, we will restrict our attention to $\Pi_{F, \epsilon}$ for a given $\epsilon > 0$; such restriction will be justified in Section~\ref{Memoryless-Channels}.
\end{rem}

\section{The Algorithm} \label{the-algorithm}

For a given $1/2 < a < 1$, choose the so-called step sizes
$$
a_n=\frac{1}{n^a}, \qquad n=1, 2, \cdots;
$$
apparently, $\{a_n\}$ satisfies
$$
\sum_{n=0}^{\infty} a_n = \infty, \quad \sum_{n=0}^{\infty} a_n^2 < \infty,
$$
which are the typical conditions imposed on step sizes in a generic stochastic approximation method. Letting $A_n$ denote the event ``$\theta_n + a_n g_{n^b}(\theta_n) \not \in \Theta$'', we propose to find $\theta^*$ through the following recursive procedure:
\begin{equation}  \label{theta-iteration}
\theta_{n+1}=\begin{cases}
\theta_n, \mbox{ if } A_n \mbox{ occurs, } \\
\theta_n + a_n g_{n^b}(\theta_n), \mbox{ otherwise; }
\end{cases}
\end{equation}
here $b > 0$, the initial $\theta_0$ is randomly selected from $\Theta$, and $g_{n^b}(\theta)$ is a to-be-specified simulator (see Section~\ref{simulator}) for $I'(X(\theta); Y(\theta))$, where the derivative is taken with respect to $\theta$. Throughout the paper, we assume that
\begin{equation}  \label{global-conditions}
0 < \beta < \alpha < 1/3, ~~ 2a+b-3 b \beta > 1;
\end{equation}
here, $\alpha, \beta$ are some ``hidden'' parameters involved in the definition of $g_{n^b}(\theta)$, which will be defined in Section~\ref{simulator}.

\section{A Simulator of $I'(X; Y)$} \label{simulator}

As stated in Section~\ref{introduction}, albeit rather difficult to compute analytically, $I_n(X; Y)$ can be well-approximated via Monte Carlo simulations.  In this section, we propose a simulator for $I'(X; Y)$. Needlessly to say, an effective simulator guaranteeing an ``accurate'' approximation to $I'(X; Y)$ is crucial to our algorithm. To some extent, our simulator is inspired by the Bernstein's blocking method~\cite{be26}, which is a well-established tool in proving limit theorems for mixing sequences; see, e.g.,~\cite{br07}.

Now, consider a stationary stochastic process $Z=Z_{-\infty}^{\infty}$ satisfying the following assumptions:
\begin{enumerate}
\item[(\ref{simulator}.a)] There exist $C', C'' > 0$ such that for all $z_{-n}^0$,
$$
C' \leq p(z_0|z_{n}^{-1}) \leq C''.
$$

\item[(\ref{simulator}.b)] There exist $C > 0$, $0 < \lambda < 1$ such that for all $n$,
$$
\psi_Z(n) \triangleq \sup_{U \in \mathcal{B}(Z_{\infty}^{-n}), V \in \mathcal{B}(Z_{0}^{\infty}), P(U) > 0, P(V) > 0} |P(V|U)-P(V)|/P(V) \leq C \lambda^n,
$$
where $\mathcal{B}(Z_i^j)$ denotes the $\sigma$-field generated by $\{Z_k: k=i, i+1, \cdots, j\}$.

\item[(\ref{simulator}.c)] There exist $C > 0$, $0 < \rho < 1$ such that for any two $z_{-m}^0, \hat{z}_{-\hat{m}}^0$ with $z_{-n}^0=\hat{z}_{-n}^0$ (here $m, \hat{m} \geq n \geq 0$),
$$
|p(z_0|z_{-m}^{-1}) - p(\hat{z}_0|\hat{z}_{-\hat{m}}^{-1}) | \leq C \rho^n.
$$
\end{enumerate}

\begin{rem}  \label{Z-Created}
Conditions (\ref{simulator}.a)-(\ref{simulator}.c) are the same ones used in Section $2$ of~\cite{han11}, which are essential for establishing the main results in~\cite{han11}. As observed in~\cite{han11}, Condition (\ref{channel-model}.a) implies that $Y$ and $(X, Y)$ both satisfy Conditions (\ref{simulator}.a)-(\ref{simulator}.c).
\end{rem}

Now, for $0 < \beta < \alpha < 1/3$, define
$$
q=q(n) \triangleq n^{\beta}, ~~ p=p(n) \triangleq n^{\alpha}, ~~ k=k(n) \triangleq n/(n^{\alpha}+n^{\beta}).
$$
For any $j$ with $iq+(i-1)p+1 \leq j \leq iq+ip$, define
$$
\hspace{-1cm} W_j = W_j(Z_{j-\lfloor q/2 \rfloor}^j) \triangleq -\left(\frac{p'(Z_{j - \lfloor q/2 \rfloor})}{p(Z_{j - \lfloor q/2 \rfloor})}+ \frac{p'(Z_{j - \lfloor q/2 \rfloor+1}|Z_{j - \lfloor q/2 \rfloor})}{p(Z_{j - \lfloor q/2 \rfloor+1}|Z_{j - \lfloor q/2 \rfloor})}
+\cdots+ \frac{p'(Z_{j}|Z_{j- \lfloor q/2 \rfloor}^{j-1})}{p(Z_{j}|Z_{j- \lfloor q/2 \rfloor}^{j-1})} \right) \log p(Z_{j}|Z_{j-\lfloor q/2 \rfloor}^{j-1}),
$$
and furthermore
$$
\zeta_i \triangleq W_{i q+ (i-1)p +1}+ \cdots+ W_{i q+i p}, ~~ S_n \triangleq \sum_{i=1}^{k(n)} \zeta_i.
$$

Now, we are ready to define our simulator for $I'(X; Y)$.
\begin{de}
$$
g_n= g_n(X_1^n, Y_1^n) \triangleq H'(X_2|X_1) + S_n(Y_1^n)/(k p) - S_n(X_1^n, Y_1^n)/(k p).
$$
\end{de}

The following lemma, whose proof is somewhat similar to that of Lemma $3.3$ in~\cite{han11}, gives an estimate of the variance of $S_n$.
\cite{han11}.
\begin{lem} \label{order-of-mements}
For $Z$ satisfying Conditions (\ref{simulator}.a), (\ref{simulator}.b) and (\ref{simulator}.c),
$$
E[(S_n-E[S_n])^2] = O(k p q^3).
$$

\end{lem}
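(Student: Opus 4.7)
The plan is to expand
\[
E\bigl[(S_n - E[S_n])^2\bigr] \;=\; \sum_{i, i'} \mathrm{Cov}(\zeta_i, \zeta_{i'})
\]
and control two kinds of contribution: the diagonal terms $\mathrm{Var}(\zeta_i)$ and the inter-block terms $\mathrm{Cov}(\zeta_i, \zeta_{i'})$ for $i \neq i'$. The basic strategy is the Bernstein blocking technique alluded to in Section~\ref{simulator}: the $p$-blocks carry the mass, while the $q$-gaps between them are exploited to decouple adjacent blocks via the $\psi$-mixing hypothesis (\ref{simulator}.b).

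The first step is a uniform pointwise bound $|W_j| \leq Cq$. Condition (\ref{simulator}.a) gives $p(\cdot|\cdot) \geq C' > 0$, and the analytic parameterization on a compact parameter range makes $|p'(\cdot|\cdot)|$ bounded, so each score-type summand $p'(\cdot|\cdot)/p(\cdot|\cdot)$ is $O(1)$; combined with the boundedness of $|\log p(Z_j|\cdot)|$ (again from (\ref{simulator}.a)) this yields $|W_j| \leq Cq$, and consequently $|\zeta_i| \leq Cpq$.

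For the diagonal term $\mathrm{Var}(\zeta_i) = \sum_{j,j' \in B_i} \mathrm{Cov}(W_j, W_{j'})$, I split the pairs by separation. For \emph{near} pairs $|j-j'| \leq q$, the crude Cauchy--Schwarz bound $|\mathrm{Cov}(W_j, W_{j'})| \leq Cq^2$ suffices, and since there are $O(pq)$ such pairs in a block of length $p$, their total contribution is $O(pq^3)$. For \emph{far} pairs $|j-j'| > q$, the two random variables are measurable with respect to $Z$-windows separated by a gap of order $|j-j'| - \lfloor q/2 \rfloor$, so (\ref{simulator}.b) gives $|\mathrm{Cov}(W_j, W_{j'})| \leq C\lambda^{|j-j'|-q/2} q^2$; summing in $|j-j'|$ and then in $j \in B_i$ produces an $O(pq^2)$ contribution that is absorbed into $O(pq^3)$. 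Hence $\mathrm{Var}(\zeta_i) = O(pq^3)$, and summing over the $k$ blocks gives $\sum_i \mathrm{Var}(\zeta_i) = O(kpq^3)$.

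For distinct blocks $i \neq i'$, $\zeta_i$ and $\zeta_{i'}$ are measurable with respect to $Z$-windows separated by a gap of at least $\sim |i-i'|q$, so (\ref{simulator}.b) applied to the bounded functions $\zeta_i, \zeta_{i'}$ (each of sup-norm $O(pq)$) yields $|\mathrm{Cov}(\zeta_i, \zeta_{i'})| = O(\lambda^{|i-i'|q/2} p^2 q^2)$; summing over the $O(k^2)$ block pairs produces a bound exponentially small in $q$ and hence negligible compared to $kpq^3$. The main technical nuisance will be the careful bookkeeping of the $\psi$-mixing inequality: one must verify that the exponential decay of $\psi_Z(\cdot)$ comfortably dominates the polynomially-large sup-norm prefactors $q$ and $pq$ that appear, and set up the measurability so that (\ref{simulator}.b) applies on the correct sigma-fields. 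Assembling the three estimates yields $E[(S_n - E[S_n])^2] = O(kpq^3)$, as claimed.
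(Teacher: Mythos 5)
Your proposal is correct and follows essentially the same route as the paper: both expand $E[(S_n-E[S_n])^2]$ into within-block and inter-block covariance sums, use the pointwise bound $|W_j| = O(q)$ (hence $|\zeta_i| = O(pq)$), split within-block pairs by separation to get $O(pq^3)$ per block via Cauchy--Schwarz for near pairs and $\psi$-mixing for far pairs, and discard the inter-block contribution as $O(k^2 \lambda^{q/2})$ times a polynomial prefactor, which is negligible. The only cosmetic difference is that the paper uses the threshold $\lfloor q/2 \rfloor$ rather than $q$ to separate near from far pairs, and is terser about justifying the $|W_j| = O(q)$ bound.
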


\begin{proof}

As in~\cite{han11}, using Condition (\ref{simulator}.a), (\ref{simulator}.b), we can deduce that for some $0 < \lambda < 1$,
$$
E[(S_n-E[S_n])^2]=E[(\sum_{i=1}^{k} \zeta_i-\sum_{i=1}^{k} E[\zeta_i])^{2}]=k E[(\zeta_i-E[\zeta_i])^{2}]+ O(k^2 \lambda^{q/2}).
$$
So, to prove the lemma, it suffices to prove that for any $i \in \mathbb{N}$,
$$
E[(\zeta_i-E[\zeta_i])^{2}]=O(p q^3).
$$
Note that
\begin{equation} \label{expression-1}
E[(\zeta_i-E[\zeta_i])^{2}]= E[(\sum_{i=1}^k W_i-E[W_i])^{2}]
=\sum_{i, j=1}^k E[(W_i-E[W_i])(W_j-E[W_j])].
\end{equation}
It is apparent that when $|j - i| \leq \lfloor q/2 \rfloor$,
\begin{equation}  \label{expression-2}
E[(W_i-E[W_i])(W_j-E[W_j])]=O(q^2),
\end{equation}
and one verifies, using Condition (\ref{simulator}.a), (\ref{simulator}.b), that when $|j - i| > \lfloor q/2 \rfloor$,
\begin{equation}  \label{expression-3}
E[(W_i-E[W_i])(W_j-E[W_j])]= O(q^2 \lambda^{|j-i|-\lfloor q/2 \rfloor}).
\end{equation}
Combining (\ref{expression-1}), (\ref{expression-2}) and (\ref{expression-3}), we then have
\begin{align*}
E[(\zeta_i-E[\zeta_i])^{2}] & =(\sum_{|j - i| \leq \lfloor q/2 \rfloor}+ \sum_{|j - i| > \lfloor q/2 \rfloor})E[(W_i-E[W_i])(W_j-E[W_j])]\\
&=O(p q^3).
\end{align*}
The proof is then complete.

\end{proof}

The following three theorems characterise the performances of our simulator from different perspectives.

Using similar techniques as in the proof of Theorem $1.1$ in~\cite{gm05}, the first theorem shows that on average, our simulator sub-exponentially converges to $I'(X; Y)$.
\begin{thm}  \label{expectation-convergence}
For some $0 < \rho_0 < 1$, we have
$$
E[g_n(X_1^n, Y_1^n)]-I'(X; Y) = O(\rho_0^{\lfloor q/2 \rfloor}).
$$
\end{thm}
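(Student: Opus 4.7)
The plan is to decompose the target error using the additive structure of $g_n$ and mutual information, and then reduce it to a single statement about the bias of the block estimator $S_n(Z)/(kp)$ when $Z$ is any process satisfying Conditions (\ref{simulator}.a)-(\ref{simulator}.c). Since
$$
I(X;Y)=H(X_2|X_1)+H(Y)-H(X,Y),
$$
differentiating and subtracting from the definition of $g_n$ gives
$$
E[g_n]-I'(X;Y) = \bigl(E[S_n(Y_1^n)/(kp)] - H'(Y)\bigr) - \bigl(E[S_n(X_1^n,Y_1^n)/(kp)] - H'(X,Y)\bigr).
$$
By Remark~\ref{Z-Created}, both $Y$ and $(X,Y)$ satisfy Conditions (\ref{simulator}.a)-(\ref{simulator}.c), so it suffices to prove a general claim: if $Z$ satisfies these conditions, then $E[S_n(Z)/(kp)] - H'(Z) = O(\rho_0^{\lfloor q/2\rfloor})$ for some $\rho_0\in(0,1)$.

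By stationarity of $Z$ and linearity, $E[S_n/(kp)] = E[W_0]$. The key identity to derive next is
$$
E[W_0] = H'_{\lfloor q/2\rfloor}(Z), \qquad \text{where } H_m(Z) \triangleq -E[\log p(Z_0|Z_{-m}^{-1})].
$$
To see this, differentiate $H_{\lfloor q/2\rfloor}(Z) = -\int p_\theta(z_{-\lfloor q/2\rfloor}^0)\log p_\theta(z_0|z_{-\lfloor q/2\rfloor}^{-1})\,dz$ under the integral sign, expand $p'_\theta(z_{-\lfloor q/2\rfloor}^0)/p_\theta(z_{-\lfloor q/2\rfloor}^0)$ as the telescoping sum of conditional score functions that appears inside $W_0$, and observe that the ``diagonal'' term $E[p'(Z_0|Z_{-\lfloor q/2\rfloor}^{-1})/p(Z_0|Z_{-\lfloor q/2\rfloor}^{-1})]$ vanishes by the score identity (it is the derivative of the constant $1$).

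With this identity in hand, the theorem reduces to showing $H'_{\lfloor q/2\rfloor}(Z)\to H'(Z)$ at exponential rate in $\lfloor q/2\rfloor$. This is where I would invoke the techniques of Theorem 1.1 in \cite{gm05}: Condition (\ref{simulator}.c) yields $|p(z_0|z_{-m}^{-1})-p(z_0|z_{-\infty}^{-1})|\le C\rho^m$, and under the analytic parameterization of $Z$, a Cauchy-type argument (applied to the complex extension of $p_\theta$, bounded away from $0$ thanks to Condition (\ref{simulator}.a)) transfers this exponential-tail-of-memory property from $p$ to $p'$ and hence to the score ratio $p'(z_i|z_{-m}^{i-1})/p(z_i|z_{-m}^{i-1})$. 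Pairing these estimates with the uniform boundedness of $\log p$ and $p'/p$ on the compact parameter region then yields $|H_m(Z)-H(Z)|$ and $|H'_m(Z)-H'(Z)|$ both $O(\rho_0^m)$.

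The main obstacle is the last step: controlling the score-ratio remainder, since the expression defining $W_0$ involves a sum of $\lfloor q/2\rfloor+1$ such terms, each multiplied by a logarithmic factor. A naive bound would produce an extra polynomial factor of $q$ that is harmless (it can be absorbed by slightly enlarging $\rho_0$), but some care is needed to verify that the dominant contribution really comes from the truncation at index $-\lfloor q/2\rfloor$ and that the intermediate score terms, when conditioned on a shorter window, differ from their infinite-window counterparts by $O(\rho^{\,i})$ for the $i$-th position. Once this uniform exponential decay is in place, summing geometric series gives the desired $O(\rho_0^{\lfloor q/2\rfloor})$ bound and completes the proof.
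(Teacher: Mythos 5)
Your proposal takes essentially the same route as the paper's proof: decompose the error using $I(X;Y)=H(X_2|X_1)+H(Y)-H(X,Y)$ and Remark~\ref{Z-Created}, reduce to a single bias estimate $E[S_n(Z)/(kp)]-H'(Z)=O(\rho_0^{\lfloor q/2\rfloor})$ for a generic $Z$ satisfying (\ref{simulator}.a)--(\ref{simulator}.c), compute $E[W_j]$ by telescoping the score ratios (which the paper writes as $E[W_j]=-\sum p'(z_{j-\lfloor q/2\rfloor}^j)\log p(z_j|z_{j-\lfloor q/2\rfloor}^{j-1})$, i.e., exactly your identity $E[W_0]=H'_{\lfloor q/2\rfloor}(Z)$), and finally invoke the exponential-memory-loss techniques of~\cite{gm05} together with Condition (\ref{simulator}.c) to get the rate. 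The only cosmetic difference is that you package the intermediate quantity as the derivative of the truncated entropy $H_m$, whereas the paper leaves it as an explicit sum; your worry about ``intermediate score terms'' is moot, since once the telescoping collapses the score-ratio sum to $p'(z_{-m}^0)/p(z_{-m}^0)$, the remaining work is entirely in bounding $|H'_m(Z)-H'(Z)|$, which both you and the paper defer to~\cite{gm05} at the same level of detail.
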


\begin{proof}

Notice that for the Markov chain $X$, we have
$$
H(X)=H(X_2|X_1).
$$
So, by Remark~\ref{Z-Created}, it suffices to prove that for any $Z$ satisfying Conditions (\ref{simulator}.a)-(\ref{simulator}.c), we have
$$
\frac{S_n}{k p}-H'(Z) = O(\rho_1^{\lfloor q/2 \rfloor}),
$$
for some $0 < \rho_1 < 1$.

Note that for any $j$ with $iq+(i-1)p+1 \leq j \leq iq+ip$, we have
\begin{align*}
\hspace{-1cm} E[W_j] & =-\sum_{z_{j-\lfloor q/2 \rfloor}^j} p(z_{j-\lfloor q/2 \rfloor}^j) \left(\frac{p'(z_{j - \lfloor q/2 \rfloor})}{p(z_{j - \lfloor q/2 \rfloor})}+ \frac{p'(z_{j - \lfloor q/2 \rfloor+1}|z_{j - \lfloor q/2 \rfloor})}{p(z_{j - \lfloor q/2 \rfloor+1}|z_{j - \lfloor q/2 \rfloor})}
+\cdots+ \frac{p'(z_{j}|z_{j- \lfloor q/2 \rfloor}^{j-1})}{p(z_{j}|z_{j- \lfloor q/2 \rfloor}^{j-1})} \right) \log p(z_{j}|z_{j-\lfloor q/2 \rfloor}^{j-1}) \\
& =-\sum_{z_{j-\lfloor q/2 \rfloor}^j} p'(z_{j-\lfloor q/2 \rfloor}^j) \log p(z_{j}|z_{j-\lfloor q/2 \rfloor}^{j-1}).
\end{align*}
Then, following~\cite{gm05}, we can prove that for any small $\eps$, we have
$$
\sum_{z_1^n} |p'(z_n|z_1^{n-1})| = O((1+\eps)^n).
$$
This, together with Condition (\ref{simulator}.c), implies that for some $0 < \rho_1 < 1$,
$$
E[W_j]-H'(Z)=O(\rho_1^{\lfloor q/2 \rfloor}),
$$
which further implies that for some $0 < \rho_1 < 1$
$$
\frac{S_n}{k p}-H'(Z)=\frac{E[S_n]-k p H'(Z)}{k p} = \frac{\sum_j(W_j-H'(Z))}{k p} = O(\rho_1^{\lfloor q/2 \rfloor}).
$$
\end{proof}

The following large deviation type lemma gives a sub-exponentially decaying upper bound on the tail probability of $g_n(X_1^n, Y_1^n)$ deviating from $I'(X; Y)$.
\begin{thm} \label{Chernoff-Bound}
For any $\eps > 0$, there exists some $0 < \gamma, \delta < 1$ such that ,
$$
P\left( \left| g_n(X_1^n, Y_1^n) - I'(X; Y) \right| \geq \eps \right) \leq \gamma^{n^{\delta}}.
$$
\end{thm}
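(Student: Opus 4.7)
My plan is to mirror the approach used for the large deviation bound in Theorem 1.5 of \cite{han11}, reducing the problem to a concentration bound for $S_n/(kp)$ and then applying a Bernstein-type exponential inequality to the $\psi$-mixing block sums $\zeta_i$.

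First, by the previous theorem, $|E[g_n]-I'(X;Y)| = O(\rho_0^{\lfloor q/2 \rfloor})$, which is smaller than $\eps/2$ for all $n$ sufficiently large. Since
$$
g_n - E[g_n] = (S_n(Y_1^n) - E[S_n(Y_1^n)])/(kp) - (S_n(X_1^n,Y_1^n) - E[S_n(X_1^n,Y_1^n)])/(kp),
$$
and by Remark~\ref{Z-Created} both $Y$ and $(X,Y)$ satisfy Conditions (\ref{simulator}.a)-(\ref{simulator}.c), it suffices to prove, for any $Z$ obeying these conditions, that $P(|S_n(Z_1^n) - E[S_n(Z_1^n)]|/(kp) \geq \eps) \leq \gamma^{n^\delta}$ for appropriate $\gamma \in (0,1)$ and $\delta > 0$.

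Second, I would estimate the size of each block. Each $W_j$ is a sum of at most $\lfloor q/2 \rfloor + 1$ score ratios $p'/p$ times a bounded factor $\log p(Z_j|Z_{j-\lfloor q/2\rfloor}^{j-1})$; using Condition (\ref{simulator}.a) to bound $p$ from below and boundedness of $p'$, one obtains $|W_j| = O(q)$, hence $|\zeta_i| = O(pq)$. Moreover, consecutive blocks $\zeta_i,\zeta_{i+1}$ are separated by a gap of $q$ symbols, so Condition (\ref{simulator}.b) provides $\psi$-mixing with coefficient $O(\lambda^q)$. Combined with Lemma~\ref{order-of-mements}, the variance of each $\zeta_i$ is $O(p q^3)$.

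Third, to extract the exponential tail, I would apply the Markov inequality to $e^{t(S_n - E[S_n])}$ and iteratively factor the moment generating function along blocks using $\psi$-mixing, which yields
$$
E\!\left[e^{t\sum_{i=j}^{k}(\zeta_i - E\zeta_i)}\right] \leq (1 + C\lambda^{q/2})\, E\!\left[e^{t(\zeta_j - E\zeta_j)}\right]\, E\!\left[e^{t\sum_{i=j+1}^{k}(\zeta_i - E\zeta_i)}\right].
$$
Since the accumulated mixing factor $(1+C\lambda^{q/2})^k$ is $1+o(1)$ (because $k\lambda^{q/2}$ decays super-polynomially in $n$), a standard Bernstein bound applied to the resulting near-independent product, together with $|\zeta_i| = O(pq)$ and $\operatorname{Var}(\zeta_i) = O(pq^3)$, yields
$$
P(|S_n - E[S_n]| \geq \eps kp) \leq 2\exp\!\Bigl(-c\min\bigl(\eps^2\, kp/q^3,\ \eps\, k/q\bigr)\Bigr).
$$
Plugging in $p = n^{\alpha}$, $q = n^{\beta}$, $k \asymp n^{1-\alpha}$ with $0 < \beta < \alpha < 1/3$, both exponents $kp/q^3 = n^{1-3\beta}$ and $k/q = n^{1-\alpha-\beta}$ are positive powers of $n$, delivering a bound of the form $\gamma^{n^\delta}$.

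The main technical hurdle, I expect, is the bookkeeping for the moment generating function factorization: one must confirm that the per-step mixing slack $(1+C\lambda^{q/2})$ iterated $k$ times does not swallow the Bernstein gain. This works out because $\lambda^{q/2}$ is exponentially small in $n^{\beta}$ while $k$ is only polynomial in $n$; beyond that, the calculation is routine and parallels the arguments in \cite{han11}.
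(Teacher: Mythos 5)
Your proposal follows essentially the same route as the paper: reduce via Theorem~\ref{expectation-convergence} and Remark~\ref{Z-Created} to a concentration bound for $(S_n - E[S_n])/(kp)$ for a generic process $Z$ satisfying Conditions (\ref{simulator}.a)--(\ref{simulator}.c), then obtain the sub-exponential tail by a Chernoff/Bernstein argument in which the moment generating function is factored iteratively along the blocks using $\psi$-mixing. The only (inessential) difference is that you plug the block variance $O(pq^3)$ into an off-the-shelf Bernstein bound and obtain the exponent $\min(1-3\beta,\,1-\alpha-\beta)$, while the paper works with the almost-sure bound $|\zeta_i - E\zeta_i| = O(pq)$ and a fixed $t = n^{-(1-\alpha)/2}$ to land on the slightly weaker exponent $(1-3\alpha)/2$; both yield the required $\gamma^{n^{\delta}}$.
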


\begin{proof}
By Lemma~\ref{expectation-convergence} and Remark~\ref{Z-Created}, it suffices to prove that for any $Z$ satisfying Conditions (\ref{simulator}.a)-(\ref{simulator}.c) and for any $\eps > 0$, there exist $0 < \gamma, \delta < 1$ such that
\begin{equation}  \label{absolute-value}
P\left( \left| \frac{S_n-E[S_n]}{k p} \right| \geq \eps \right) \leq \gamma^{n^{\delta}}.
\end{equation}
By the Markov inequality, we have
\begin{equation} \label{MI-Snn}
P(S_n-E[S_n] \geq k p \eps) = P\left(\frac{t (S_n-E[S_n])}{p^2} \geq \frac{t k p \eps}{p^2}\right) \leq \frac{E[e^{t (S_n-E[S_n])/p^2]}}{e^{t k \eps/p}}.
\end{equation}
As in~\cite{han11}, applying Conditions (\ref{simulator}.a) and (\ref{simulator}.b), we then have
\begin{align} \label{first-iteration}
\nonumber E[e^{t (S_n-E[S_n])/p^2}] & =E[e^{t \sum_{i=1}^{k-1} (\zeta_i-E[\zeta_i])/p^2} e^{t (\zeta_k-E[\zeta_k])/p^2}] \\
& = (1 + O(\lambda^{q(n)/2})) E[e^{t \sum_{i=1}^{k-1} (\zeta_i-E[\zeta_i])/p^2}] E[e^{t \zeta_k}],
\end{align}
for some $0 < \lambda < 1$. An iterative application of (\ref{first-iteration}) yields that for any $0 < t < 1$
\begin{eqnarray} \label{Sn-prime}
\nonumber E[e^{t (S_n-E[S_n])/p^2}]&=&E[e^{t \sum_{i=1}^k (\zeta_i-E[\zeta_i])/p^2}] \\
&=& (1 + O(\lambda^{q(n)/2}))^{k-1} (E[e^{t (\zeta_1-E[\zeta_1])/p^2}])^k,
\end{eqnarray}
as $n$ goes to infinity. By Condition (\ref{simulator}.a), we have
$$
\zeta_1-E[\zeta_1]=O(p q), \mbox{ and thus, } O((\zeta_1-E[\zeta_1])^2/p^4)= O(q^2/p^2)=o(1).
$$
It then follows that for any $0 < t < 1$,
$$
E[e^{t (\zeta_1-E[\zeta_1])/p^2}]=1+ o(1)t^2.
$$
Choosing $t=n^{-(1-\alpha)/2}$, then, by (\ref{MI-Snn}) and (\ref{Sn-prime}), we deduce that
\begin{eqnarray}
\nonumber P\left( \frac{S_n-E[S_n]}{k p}  \geq \eps \right) & \leq & \frac{E[e^{t (S_n-E[S_n])/p^2]}}{e^{t k \eps/p}} \\
\nonumber &\leq &  (1+O(\lambda^{q(n)/2}))^k \frac{(1+o(1)t^2)^{n^{1-\alpha}}}{(1+t \eps +O(1)t^2)^{n^{1-2 \alpha}}}\\
\nonumber & = & O(e^{-n^{1/2-3\alpha/2}}).
\end{eqnarray}
Noticing that $0 < \alpha < 1/3$ (and thus $1/2-3\alpha/2 < 0$), we conclude that for any $\eps > 0$, there exists $0 < \gamma, \delta < 1$ such that
$$
P\left(\frac{S_n-E[S_n]}{k p} \geq \eps \right) \leq \gamma^{n^{\delta}}.
$$
With a parallel argument, one verifies that for any $\eps > 0$, there exists $0 < \gamma, \delta < 1$ such that
$$
P\left(\frac{S_n-E[S_n]}{k p} \leq -\eps \right) \leq \gamma^{n^{\delta}},
$$
which immediately implies (\ref{absolute-value}). The proof is then complete.
\end{proof}

The following theorem states that our simulator is asymptotically unbiased.
\begin{thm} With probability $1$,
$$
g_n(X_1^n, Y_1^n) \to I'(X; Y),
$$
as $n$ tends to $\infty$.
\end{thm}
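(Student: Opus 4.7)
The plan is to deduce almost sure convergence directly from Theorem~\ref{Chernoff-Bound} via the first Borel--Cantelli lemma, so no new estimates on $g_n$ are needed; the preceding large-deviation bound is already strong enough.

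First, I would fix $\varepsilon > 0$ and invoke Theorem~\ref{Chernoff-Bound} to obtain constants $0 < \gamma, \delta < 1$ (depending on $\varepsilon$) such that
$$
P\bigl(|g_n(X_1^n, Y_1^n) - I'(X; Y)| \geq \varepsilon\bigr) \leq \gamma^{n^{\delta}}
$$
for all sufficiently large $n$. The key observation is that this sub-exponential decay is summable: writing $\gamma^{n^{\delta}} = e^{-(\log(1/\gamma))\, n^{\delta}}$ with $\log(1/\gamma) > 0$ and $\delta > 0$, we have $\sum_{n \geq 1} \gamma^{n^{\delta}} < \infty$. Consequently,
$$
\sum_{n=1}^{\infty} P\bigl(|g_n(X_1^n, Y_1^n) - I'(X; Y)| \geq \varepsilon\bigr) < \infty.
$$

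Next, I would apply the first Borel--Cantelli lemma to conclude that, with probability $1$, the event $\{|g_n - I'(X; Y)| \geq \varepsilon\}$ occurs for only finitely many $n$; equivalently, $\limsup_{n \to \infty} |g_n - I'(X; Y)| \leq \varepsilon$ almost surely. Letting $\varepsilon$ range over the countable sequence $\{1/k\}_{k \geq 1}$ and taking the intersection of the corresponding probability-one events, I obtain a single almost-sure event on which $\limsup_{n \to \infty} |g_n - I'(X; Y)| = 0$, i.e.\ $g_n(X_1^n, Y_1^n) \to I'(X; Y)$ as $n \to \infty$.

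There is no real obstacle: the heavy lifting has already been done in Lemma~\ref{order-of-mements} and Theorems~\ref{expectation-convergence} and~\ref{Chernoff-Bound}, where the variance estimate, the bias bound, and the choice of the auxiliary parameter $t = n^{-(1-\alpha)/2}$ together produce a tail bound summable in $n$. The only subtlety to flag is ensuring that the dependence of $\gamma, \delta$ on $\varepsilon$ does not matter, which is handled automatically because the argument is applied separately for each $\varepsilon = 1/k$ and only a countable union of null sets is taken.
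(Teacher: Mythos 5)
Your proposal is correct and matches the paper's proof exactly: the paper likewise deduces the almost-sure convergence as an immediate consequence of Theorem~\ref{Chernoff-Bound} and the Borel--Cantelli lemma. You have simply spelled out the summability of $\gamma^{n^{\delta}}$ and the countable intersection over $\varepsilon = 1/k$, which the paper leaves implicit.
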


\begin{proof}
It immediately follows from Theorem~\ref{Chernoff-Bound} and the Borel-Cantelli lemma.
\end{proof}

\begin{rem}
In our notation, the following expression has been proposed in~\cite{pa04} as a simulator of $I'(X; Y)$:
$$
H(X_2|X_1) -\frac{p'(Y_{1}^n)}{p(Y_{1}^n)} \log p(Y_{1}^n)/n + \frac{p'(X_1^n, Y_{1}^n)}{p(X_1^n, Y_{1}^n)} \log p(X_1^n, Y_{1}^n)/n.
$$
Extensive numerical experiments conducted in~\cite{pa04} suggest that this simulator converges to $I'(X; Y)$ almost surely as $n$ tends to infinity, however, there is no rigorous proof for the convergence.
\end{rem}

\section{Convergence}  \label{f-convergence}

Consider the following condition: 
\begin{enumerate} [label=(\thesection.\alph*)]
\item $P(\cap_{k=1}^{\infty} \cup_{n=k}^{\infty} A_n)=0$, that is, $A_n, n \in \mathbb{N}$, only occurs finitely many times,
\end{enumerate}
which will be assumed throughout the convergence analysis in the paper. Particularly, in this section, assuming (\thesection.a), we will show that $\{I(X(\theta_n); Y(\theta_n))\}$ converges almost surely. Note that if $\theta=\mathbb{R}^d$, then Assumption (\thesection.a) will be trivially satisfied and the iteration in (\ref{theta-iteration}) can be simply written as
\begin{equation} \label{simplified-theta-iteration}
\theta_{n+1}=\theta_n + a_n g_{n^b}(\theta_n).
\end{equation}
In fact, unless specified otherwise, we will simply assume that $\theta=\mathbb{R}$ in all the proofs in this paper to avoid obscuring the main idea. The proofs of the same results under Assumption (\thesection.a) follow from parallel arguments only with an increasing level of notational complexity.

Henceforth, we will write
$$
f(\theta)=I(X(\theta); Y(\theta)), \quad f_n(\theta)=I_n(X(\theta); Y(\theta)).
$$
Note that under Assumption (\ref{channel-model}.a), Theorem $1.1$ of~\cite{gm05} implies that
\begin{itemize}
\item[] $f(\theta)$ is analytic and each of its derivatives is uniformly  bounded over all $\theta \in \Theta$,
\end{itemize}
a key fact that will be implicitly used throughout the paper. Now, rewrite (\ref{simplified-theta-iteration}) as
\begin{equation} \label{R-n-introduced}
\theta_{n+1}=\theta_n + a_n f'(\theta_n)+ a_n R_n(\theta_n),
\end{equation}
where
$$
R_n(\theta_n) \triangleq g_{n^b}(\theta_n)-f'(\theta_n).
$$
It can be easily verified that
\begin{align}  \label{hat-R-n-introduced}
\nonumber f(\theta_{n+1})-f(\theta_n) &= \int_0^1 f'(\theta_{n}+t(\theta_{n+1}-\theta_n)) (\theta_{n+1}-\theta_n) dt \\
\nonumber & =\int_0^1 f'(\theta_n) (\theta_{n+1}-\theta_n) dt + \int_0^1 (f'(\theta_{n}+t(\theta_{n+1}-\theta_n))-f'(\theta_n)) (\theta_{n+1}-\theta_n) dt \\
\nonumber & =a_n f'(\theta_n) (f'(\theta_n)+R_n (\theta_n))+ \int_0^1 (f'(\theta_{n}+t(\theta_{n+1}-\theta_n))-f'(\theta_n)) (\theta_{n+1}-\theta_n) dt \\
& =a_n f'^2(\theta_n) + \hat{R}_n(\theta_n),
\end{align}
where
$$
\hat{R}_n(\theta_n) \triangleq a_n f'(\theta_n) R_n (\theta_n)+ \int_0^1 (f'(\theta_{n}+t(\theta_{n+1}-\theta_n))-f'(\theta_n)) (\theta_{n+1}-\theta_n) dt.
$$

\begin{lem} \label{convergent-even-on-some-event}
$\sum_{n=0}^{\infty} \hat{R}_n(\theta_n)$ converges almost surely.
\end{lem}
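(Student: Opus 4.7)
The strategy is to split $\hat{R}_n(\theta_n)$ into its two natural pieces,
\[
U_n := a_n f'(\theta_n)\,R_n(\theta_n), \qquad V_n := \int_0^1 \bigl(f'(\theta_n+t(\theta_{n+1}-\theta_n))-f'(\theta_n)\bigr)(\theta_{n+1}-\theta_n)\,dt,
\]
and show that $\sum_n U_n$ and $\sum_n V_n$ each converge almost surely. Throughout I will use that $f'$ and $f''$ are uniformly bounded on $\Theta$ (the analyticity fact from~\cite{gm05} cited just before the lemma) and that by construction $\theta_{n+1}-\theta_n = a_n g_{n^b}(\theta_n)$.

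For the integral remainder $V_n$, a first-order Taylor bound using the uniform bound on $f''$ gives $|V_n|\le C\,|\theta_{n+1}-\theta_n|^2 = C\,a_n^2\,g_{n^b}(\theta_n)^2$ for a deterministic $C$. Lemma~\ref{order-of-mements}, applied with $n$ replaced by $n^b$ so that $p=n^{b\alpha}$, $q=n^{b\beta}$, $kp\asymp n^b$, yields $\operatorname{Var}(g_{n^b}(\theta_n))=O(n^{3b\beta-b})$, whose exponent is negative because $\beta<1/3$; together with the $O(1)$ bound on $E[g_{n^b}(\theta_n)]$ coming from Theorem~\ref{expectation-convergence}, this gives $E[g_{n^b}(\theta_n)^2]=O(1)$. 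Hence $E\sum_n|V_n|=O(\sum_n a_n^2)<\infty$, so $\sum_n V_n$ converges absolutely with probability one.

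For the linear remainder $U_n$, let $\mathcal{F}_n$ be the $\sigma$-field generated by the randomness up through the choice of $\theta_n$, and use the implicit (and standard) convention that the samples $(X_1^{n^b},Y_1^{n^b})$ feeding $g_{n^b}(\theta_n)$ are drawn independently of $\mathcal{F}_n$ from the law indexed by $\theta_n$. Then
\[
R_n(\theta_n)=\bigl(g_{n^b}(\theta_n)-E[g_{n^b}(\theta_n)\mid\mathcal{F}_n]\bigr)+\bigl(E[g_{n^b}(\theta_n)\mid\mathcal{F}_n]-f'(\theta_n)\bigr)
\]
splits $U_n=M_n+B_n$ into a martingale difference $M_n$ and a bias $B_n$. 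Theorem~\ref{expectation-convergence} shows $|B_n|=O\bigl(a_n\rho_0^{\lfloor n^{b\beta}/2\rfloor}\bigr)$, which is summable deterministically. For the martingale piece, boundedness of $f'$ together with Lemma~\ref{order-of-mements} gives $E[M_n^2]=O(a_n^2\,n^{3b\beta-b})=O(n^{-2a+3b\beta-b})$, and the exponent is strictly less than $-1$ precisely by the global hypothesis $2a+b-3b\beta>1$ in~(\ref{global-conditions}). Doob's $L^2$ martingale convergence theorem then yields almost-sure convergence of $\sum_n M_n$, and combining the three pieces gives the lemma.

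The main obstacle is the variance bookkeeping for the martingale term: the blocking structure built into $g_n$ (spacers of length $q$ between blocks of length $p$) is exactly what delivers the $O(q^3/(kp))$ estimate of Lemma~\ref{order-of-mements}, and the global condition $2a+b-3b\beta>1$ has been calibrated to just barely make the resulting sum finite. Once that variance estimate is in hand, the Taylor-expansion bound for $V_n$ and the Doob/Borel--Cantelli arguments for $U_n$ are essentially routine.
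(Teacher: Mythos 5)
Your proposal is correct and follows essentially the same route as the paper: split $\hat R_n$ into the linear and integral remainders, treat the linear piece via a martingale/bias decomposition of $R_n$ controlled by Lemma~\ref{order-of-mements}, Theorem~\ref{expectation-convergence}, and Doob's $L^2$ convergence theorem, and dispose of the integral piece via the uniform bound on $f''$ and $\sum a_n^2<\infty$. Two small technical cleanups are worth noting: you pivot the martingale decomposition on $E[g_{n^b}(\theta_n)\mid\mathcal F_n]$ rather than on $f'_{n^b}(\theta_n)$ as the paper does, which makes the martingale-difference property exact rather than implicit; and you bound $|V_n|$ directly by $a_n^2 g_{n^b}(\theta_n)^2$ with $E[g_{n^b}^2]=O(1)$, avoiding the paper's longer chain through $\sum a_n^2 R_n^2$.
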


\begin{proof}
Let
$$
T_1=\sum_{n=0}^{\infty} a_n f'(\theta_n) R_n (\theta_n), ~~ T_2=\sum_{n=0}^{\infty} \int_0^1 (f'(\theta_{n}+t(\theta_{n+1}-\theta_n))-f'(\theta_n)) (\theta_{n+1}-\theta_n) dt.
$$
It suffices to prove that $T_1, T_2$ both converge almost surely.

For $T_1$, note that
\begin{align*}
T_1 & =\sum_{n=0}^{\infty} a_n f'(\theta_n) (g_{n^{b}}(\theta_n)-f'(\theta_n)) \\
& =\sum_{n=0}^{\infty} a_n f'(\theta_n) (g_{n^{b}}(\theta_n)-f'_{n^{b}}(\theta_n))+\sum_{n=0}^{\infty} a_n f'(\theta_n)(f'_{n^{b}}(\theta_n)-f'(\theta_n)).
\end{align*}
It follows from Theorem~\ref{expectation-convergence} that there exists $0 < \rho_0 < 1$ such that
\begin{equation} \label{close-1}
\sum_{n=0}^{\infty} a_n |f'(\theta_n)| |(f'_{n^{b}}(\theta_n)-f'(\theta_n))| \leq \sum_{n=0}^{\infty} a_n |f'(\theta_n)| \rho_0^{n^{b}} < \infty.
\end{equation}
Then, using Lemma~\ref{order-of-mements}, one verifies that uniformly over all $\theta_n \in \Theta$,
\begin{equation} \label{conditional-square}
\sum_{n=0}^{\infty} E[\{a_n^2 (f'(\theta_n))^2 R_n^2 (\theta_n)\}] = \sum_{n=0}^{\infty} O\left(\frac{1}{n^{2a+b(1-3\beta)}}\right),
\end{equation}
which converges since $2a+b-3 b \beta > 1$. Noting that $\{a_n f'(\theta_n) R_n (\theta_n), \mathcal{B}(X_1^n)\}$ is a Martingale difference sequence and applying Doob's Martingale convergence theorem (see Theorem $2.8.7$ of~\cite{st74}), we deduce that
$$
\sum_{n=0}^{\infty} a_n f'(\theta_n) (g_{n^{b}}(\theta_n)-f'_{n^{b}}(\theta_n))
$$
converges with probability $1$. The almost sure convergence of $T_1$ then follows.

For $T_2$, it is easy to check that
$$
\left| \int_0^1 (f'(\theta_{n}+t(\theta_{n+1}-\theta_n))-f'(\theta_n)) (\theta_{n+1}-\theta_n) dt \right| = O((\theta_{n+1}-\theta_{n})^2) = O(a_n^2 (f'(\theta_n))^2)+ O(a_n^2 R_n^2(\theta_n)).
$$
Similarly as in deriving (\ref{close-1}) and (\ref{conditional-square}), we have
$$
\sum_{n=0}^{\infty} a_n^2 (f'_{n^{b}}(\theta_n)-f'(\theta_n))^2 < \infty, ~~ \sum_{n=0}^{\infty} E[a_n^2 (g_{n^{b}}(\theta_n)-f'_{n^{b}}(\theta_n))^2] < \infty,
$$
and furthermore,
$$
\sum_{n=0}^{\infty} a_n^2 (g_{n^{b}}(\theta_n)-f'_{n^{b}}(\theta_n))^2
$$
converges almost surely. This, together with (\ref{close-1}), further implies that
$$
\sum_{n=0}^{\infty} a_n^2 |(g_{n^{b}}(\theta_n)-f'_{n^{b}}(\theta_n))(f'_{n^{b}}(\theta_n)-f'(\theta_n)|
$$
converges almost surely. Recalling that
$$
R_n(\theta_n)=g_{n^b}(\theta_n)-f'_{n^b}(\theta_n)+ f'_{n^b}(\theta_n)-f'(\theta_n),
$$
we conclude that
$$
\sum_{n=0}^{\infty} a_n^2 R_n^2(\theta_n) < \infty,
$$
which further implies that
$$
\sum_{n=0}^{\infty} \int_0^1 (f'(\theta_{n}+t(\theta_{n+1}-\theta_n))-f'(\theta_n)) (\theta_{n+1}-\theta_n) dt
$$
converges almost surely. The proof is then complete.
\end{proof}

We are now ready for the following convergence theorem, whose proof closely follows that of Lemma $7$ in~\cite{ta10}, which can be further traced back to the standard proof of the Martingale convergence theorem~\cite{st74}.

\begin{thm} \label{convergence-theorem}
With probability $1$, we have
$$
\lim_{n \to \infty} f'(\theta_n)=0 \mbox{ and } \lim_{n \to \infty} f(\theta_n) \mbox{ exists }.
$$
\end{thm}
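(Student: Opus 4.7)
The plan is to extract from the identity~(\ref{hat-R-n-introduced}) a telescoping ``energy-like'' inequality, then use the summability it produces together with a non-oscillation argument to get $f'(\theta_n)\to 0$. Concretely, iterating~(\ref{hat-R-n-introduced}) from $0$ to $N$ gives
$$
f(\theta_{N+1}) - f(\theta_0) = \sum_{n=0}^{N} a_n (f'(\theta_n))^2 + \sum_{n=0}^{N} \hat{R}_n(\theta_n).
$$
The left-hand side is bounded because $f(\theta)=I(X(\theta);Y(\theta))$ is bounded above by $\log|\mathcal{X}|$ and below by $0$, and the $\hat{R}_n$-sum converges almost surely by Lemma~\ref{convergent-even-on-some-event}. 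Hence the nonnegative series $\sum_{n}a_n(f'(\theta_n))^2$ has a.s.~bounded partial sums and therefore converges a.s., yielding at one stroke both $\sum_n a_n(f'(\theta_n))^2<\infty$ a.s.\ and the existence of $\lim_n f(\theta_n)$ a.s.

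To upgrade summability of $(f'(\theta_n))^2$ to a true limit, I would first record two asymptotic-continuity facts. From $\theta_{n+1}-\theta_n = a_n f'(\theta_n) + a_n R_n(\theta_n)$, the first term vanishes since $a_n\to 0$ and $f'$ is uniformly bounded on $\Theta$ (by Theorem $1.1$ of~\cite{gm05}), while the second vanishes because $\sum_n a_n^2 R_n^2(\theta_n)<\infty$ a.s.\ was established inside the proof of Lemma~\ref{convergent-even-on-some-event}, forcing $a_n R_n(\theta_n)\to 0$. Combined with the global Lipschitz continuity of $f'$ on $\Theta$ (from the same uniform derivative bound), this yields $f'(\theta_{n+1})-f'(\theta_n)\to 0$ a.s.

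Finally, I would argue by contradiction in the style of Lemma~$7$ of~\cite{ta10}. Suppose on a positive-probability event that $\limsup_n |f'(\theta_n)|\geq 2\eta>0$. Since $\sum_n a_n=\infty$ (as $a<1$) but $\sum_n a_n(f'(\theta_n))^2<\infty$, one cannot have $|f'(\theta_n)|\geq\eta$ for all large $n$, so together with the $\limsup$ assumption there are infinitely many ``crossing intervals'' $[n_k,m_k]$ with $|f'(\theta_{n_k})|\geq 2\eta$, $|f'(\theta_{m_k})|\leq \eta$, and $|f'(\theta_j)|\geq \eta$ for $n_k\leq j<m_k$. Paragraph~2 guarantees that for $k$ large $f'(\theta_j)$ keeps constant sign on each interval, so Lipschitz continuity of $f'$ gives $|\theta_{m_k}-\theta_{n_k}|\geq \eta/L$, whence splitting $\theta_{m_k}-\theta_{n_k}=\sum_{j=n_k}^{m_k-1}a_j f'(\theta_j)+\sum_{j=n_k}^{m_k-1}a_j R_j(\theta_j)$ and controlling the second sum via $a_j R_j\to 0$ forces a uniform lower bound $\sum_{j=n_k}^{m_k-1}a_j\geq c>0$. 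Then
$$
\sum_{k}\sum_{j=n_k}^{m_k-1}a_j(f'(\theta_j))^2 \;\geq\; \eta^2\sum_{k} c \;=\;\infty,
$$
contradicting $\sum_n a_n(f'(\theta_n))^2<\infty$. The main obstacle is this last bookkeeping step: $R_n$ may grow polynomially in $n$, so isolating the genuine gradient contribution from the noise contribution when converting a displacement in $\theta$ into a lower bound on $\sum a_j$ requires careful exploitation of $a_n R_n\to 0$ rather than any uniform bound on $R_n$ itself, which is exactly where the techniques adapted from~\cite{ta10} enter.
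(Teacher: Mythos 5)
Your proposal follows the same skeleton as the paper's proof: telescope the identity~(\ref{hat-R-n-introduced}) and use Lemma~\ref{convergent-even-on-some-event} to get $\sum_n a_n (f'(\theta_n))^2 < \infty$ a.s.\ and existence of $\lim_n f(\theta_n)$, then argue by contradiction using crossing intervals. However, there is a genuine gap in the final step. You need to show that $\big|\sum_{j=n_k}^{m_k-1} a_j R_j(\theta_j)\big|$ is eventually negligible so that the displacement $|\theta_{m_k}-\theta_{n_k}|\geq \eta/L$ forces $\sum_{j=n_k}^{m_k-1} a_j \geq c > 0$. You propose to control this by ``$a_j R_j \to 0$,'' but that is not sufficient: $a_j R_j \to 0$ says nothing about partial sums when the number of summands $m_k - n_k$ is unbounded (compare $a_j R_j = 1/j$, which tends to $0$ while its partial sums diverge). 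The quantity that actually does the work --- and that the paper establishes --- is the almost sure convergence of the full series $\sum_n a_n R_n(\theta_n)$, obtained via the decomposition $R_n = (g_{n^b}-f'_{n^b}) + (f'_{n^b}-f')$: the first piece is a martingale-difference sequence with summable conditional second moments (so Doob's martingale convergence theorem applies), and the second piece is a deterministic bias of order $O(\rho_0^{n^b})$. Convergence of $\sum_n a_n R_n$ then gives $\big|\sum_{j=n_k}^{m_k-1} a_j R_j\big| \to 0$ by the Cauchy criterion, which is what you actually need. Your closing remark attributes the difficulty to isolating the gradient contribution ``via $a_n R_n \to 0$,'' but this misidentifies the key ingredient; the martingale convergence of $\sum a_n R_n$ is indispensable and cannot be replaced by pointwise vanishing of the increments.

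As a minor note, the intermediate step deriving $f'(\theta_{n+1}) - f'(\theta_n) \to 0$ and the constant-sign observation on crossing intervals are unnecessary: the lower bound $|\theta_{m_k}-\theta_{n_k}| \geq \eta/L$ already follows from the reverse triangle inequality $|f'(\theta_{n_k})|-|f'(\theta_{m_k})|\geq\eta$ together with Lipschitz continuity of $f'$, without any appeal to sign constancy. Also, your final contradiction (showing $\sum_k \sum_{j} a_j (f'(\theta_j))^2 = \infty$) is a valid variant of the paper's contradiction (which instead shows $\sum_{i=m_k}^{n_k-1} a_i \to 0$ directly from the summability and then contradicts $\eps\leq O(|\theta_{n_k}-\theta_{m_k}|)$); both are fine once the $\sum a_n R_n$ convergence is in place.
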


\begin{proof}

Recall that
$$
f(\theta_{n+1})-f(\theta_n)=a_n f'^2(\theta_n) + \hat{R}_n (\theta_n),
$$
an iterative application of which implies
$$
f(\theta_n)=f(\theta_0)+\sum_{i=0}^{n-1} a_i (f'(\theta_i))^2+\sum_{i=0}^{n-1} \hat{R}_i(\theta_i).
$$
Applying Lemma~\ref{convergent-even-on-some-event}, we deduce that with probability $1$,
$$
\sum_{i=0}^{\infty} a_i (f'(\theta_i))^2 < \infty,
$$
which, in return, implies that $\lim_{n \to \infty} f(\theta_n)$ exists and furthermore there is a subsequence $\{\theta_{n_j}\}$ such that $f'(\theta_{n_j})$ converges to $0$ as $j$ tends to infinity.

We now prove that
$$
\lim_{n \to \infty} f'(\theta_n) = 0.
$$
By way of contradiction, suppose otherwise. Then, there exists $\eps > 0$ such that there exist infinite sequences $m_k, n_k, k=1, 2, \cdots$, such that
\begin{equation} \label{crossing-conditions}
|f'(\theta_{m_k})| \leq \eps, ~~ |f'(\theta_{n_k})| \geq 2 \eps, ~~ |f'(\theta_{i})| \geq \eps
\end{equation}
for all $m_k+1 \leq i \leq n_k$. It then follows that
\begin{align} \label{mknk}
\nonumber \eps & \leq |f'(\theta_{n_k}) - f'(\theta_{m_k})| \\
\nonumber & = O(|\theta_{n_k}-\theta_{m_k}|) \\
\nonumber & = O\left(\sum_{i=m_k}^{n_k-1} a_i |f'(\theta_i)|\right) + O\left(\left|\sum_{i=m_k}^{n_k-1} a_i R_i(\theta_i) \right| \right) \\
& = O\left(\sum_{i=m_k}^{n_k-1} a_i \right)+O \left( \left|\sum_{i=m_k}^{n_k-1} a_i R_i(\theta_i) \right| \right).
\end{align}
As in the proof of Lemma~\ref{convergent-even-on-some-event}, using the decomposition
$$
R_n(\theta_n)=g_{n^{b}}(\theta_n)-f'(\theta_n)=g_{n^{b}}(\theta_n)-f'_{n^{b}}(\theta_n)+f'_{n^{b}}(\theta_n)-f'(\theta_n),
$$
we deduce that $\sum_{n=0}^{\infty} a_n R_n(\theta_n)$ converges almost surely, and hence $\left| \sum_{i=m_k}^{n_k-1} a_i R_i(\theta_i) \right|$
tends to $0$ as $k$ goes to $\infty$. On the other hand, by (\ref{crossing-conditions}), we have
$$
\eps^2 \sum_{i=m_k}^{n_k-1} a_i \leq \sum_{i=m_k}^{\infty} a_i (f'(\theta_i))^2.
$$
This implies that as $k$ tends to $\infty$, $\sum_{i=m_k}^{n_k-1} a_i$
tends to zero, which, together with (\ref{mknk}), further implies that
$$
\eps \leq \lim_{k \to \infty} |f'(\theta_{n_k})-f'(\theta_{m_k})|=0,
$$
a contradiction.
\end{proof}

\begin{rem}
The fact that $\{f(\theta_n)\}$ converges almost surely does not necessarily imply that $\{\theta_n\}$ converges almost surely. In the remainder of this paper, we will prove, under some assumptions, that $\{\theta_n\}$ does converge almost surely.
\end{rem}

\section{Some Estimations}  \label{some-estimations}

In this section, assuming (\ref{f-convergence}.a), we will derive some estimations that will be used in the later sections for convergence analysis.

For any $j \in \mathbb{N}$, let
$$
A_j=a_1+a_2+\cdots+a_{j-1},
$$
and for any $h > 0$ and any $n \in \mathbb{N}$, define
$$
t(n, h) \triangleq \min\{k: a_{n}+a_{n+1}+\cdots+a_{k-1} \geq h\}.
$$
Now, for any fixed $n_0 \in \mathbb{N}$, recursively define
$$
n_{k+1}=t(n_k, h).
$$
One then verifies that for $k$ sufficiently large,
\begin{equation}  \label{recursive-n-k}
A_{n_{k+1}}-A_{n_k}=\hat{O}(h), ~~ n_k=\hat{O}(k^{1/(1-a)}),
\end{equation}
where by $M = \hat{O}(N)$, we mean that there exist positive constants $C_1, C_2$ such that
$$
C_1 N \leq M \leq C_2 N.
$$

Now, an iterated application of
$$
\theta_{n+1}-\theta_n=a_n f'(\theta_n)+a_n R_n(\theta_n)
$$
yields
\begin{align*}
\theta_k & =\theta_n+\sum_{i=n}^{k-1} a_i f'(\theta_i)+\sum_{i=n}^{k-1} a_i R_i (\theta_i) \\
& =\theta_n+(A_k-A_n) f'(\theta_n)+\sum_{i=n}^{k-1} a_i R_i (\theta_i)+\sum_{i=n}^{k-1} a_i (f'(\theta_i)-f'(\theta_n)) \\
& =\theta_n+R_{n, k},
\end{align*}
where
\begin{equation}  \label{R-n-k}
R_{n, k}=\sum_{i=n}^{k-1} a_i R_i (\theta_i)+\sum_{i=n}^{k-1} a_i (f'(\theta_i)-f'(\theta_n)).
\end{equation}
Similarly, an iterated application of
$$
f(\theta_{n+1})-f(\theta_n)=a_n f'^2(\theta_n) + \hat{R}_n (\theta_n)
$$
yields
\begin{align} \label{f-k-n}
\nonumber f(\theta_{k})-f(\theta_n) & =\int_0^1 f'(\theta_{n}+t(\theta_{k}-\theta_n)) (\theta_{k}-\theta_n) dt\\
\nonumber & =\int_0^1 f'(\theta_n) (\theta_{k}-\theta_n) dt + \int_0^1 (f'(\theta_{n}+t(\theta_{k}-\theta_n))-f'(\theta_n)) (\theta_{k}-\theta_n) dt \\
\nonumber & =f'(\theta_n) ((A_k-A_n) f'(\theta_n)+R_{n, k})+ \int_0^1 (f'(\theta_{n}+t(\theta_{k}-\theta_n))-f'(\theta_n)) (\theta_{k}-\theta_n) dt \\
\nonumber & =(A_k-A_n) f'^2(\theta_n) + f'(\theta_n) R_{n, k} + \int_0^1 (f'(\theta_{n}+t(A_k-A_n))-f'(\theta_n)) (A_k-A_n) dt \\
& = (A_k - A_n) f'^2(\theta_n) + \hat{R}_{n, k} (\theta_n),
\end{align}
where
\begin{equation} \label{hat-R-n-k}
\hat{R}_{n, k} (\theta_n)=f'(\theta_n) R_{n, k} + \int_0^1 (f'(\theta_{n}+t(A_k-A_n))-f'(\theta_n)) (A_k-A_n) dt.
\end{equation}

The following lemma introduces a positive random variable, $\tilde{C}_0$, and a constant, $\tau$, which will be referred to throughout the rest of the paper.

\begin{lem} \label{random-O-term}
There exists a positive random variable $\tilde{C}_0$ such that for all $n$ and for any $\tau > 0$ with $2a+b-3 b \beta - 2\tau > 1$,
$$
\sup_{k \geq n} \left| \sum_{i=n}^k a_i R_i(\theta_i) \right| \leq \tilde{C}_0 n^{-\tau} \mbox{ a.s. }
$$
\end{lem}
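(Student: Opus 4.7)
The plan is to split $R_i(\theta_i) = g_{i^b}(\theta_i) - f'(\theta_i)$ into a mean-zero stochastic part and a deterministic bias part, and handle each separately: the bias via Theorem~\ref{expectation-convergence}, and the stochastic part via Doob's $L^2$-maximal inequality combined with a Borel--Cantelli argument along the dyadic subsequence $n_j = 2^j$. Following the decomposition used in the proof of Lemma~\ref{convergent-even-on-some-event}, I would write $R_i(\theta_i) = M_i + E_i$ with $M_i \triangleq g_{i^b}(\theta_i) - f'_{i^b}(\theta_i)$ and $E_i \triangleq f'_{i^b}(\theta_i) - f'(\theta_i)$. By Theorem~\ref{expectation-convergence}, $|E_i|$ is super-polynomially small in $i$ uniformly in $\theta_i$, so $\sum_{i=n}^\infty a_i |E_i| = O(n^{-\tau})$ for every $\tau > 0$, and this part is immediately absorbed into the desired bound with a deterministic constant.

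The substantive part is the stochastic tail $\sum_{i=n}^k a_i M_i$. Since the samples that feed $g_{i^b}(\theta_i)$ at iteration $i$ are drawn fresh and independent of the past, $\{a_i M_i\}$ is a martingale difference sequence with respect to the natural filtration $\{\mathcal{F}_i\}$. Applying Lemma~\ref{order-of-mements} at sample size $i^b$ (so that $kp \asymp i^b$ and $q \asymp i^{b\beta}$) yields, uniformly in $\theta_i$,
$$E[M_i^2 \mid \mathcal{F}_{i-1}] = O\bigl(q^3/(kp)\bigr) = O\bigl(i^{-b(1-3\beta)}\bigr),$$
and hence
$$\sum_{i=n}^\infty E[a_i^2 M_i^2] = O\left(\sum_{i=n}^\infty i^{-(2a+b-3b\beta)}\right) = O(n^{-\delta}),$$
where $\delta \triangleq 2a+b-3b\beta - 1 > 0$ by (\ref{global-conditions}). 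In particular, $T_n \triangleq \sum_{i=0}^{n-1} a_i M_i$ converges a.s.\ (and in $L^2$) to some $T_\infty$, and Doob's $L^2$-maximal inequality bounds $E[\sup_{m \geq n}(T_m - T_\infty)^2] = O(n^{-\delta})$.

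To upgrade this $L^2$-estimate to the required almost-sure $n^{-\tau}$ bound, I would apply Markov's inequality along the dyadic subsequence $n_j = 2^j$:
$$P\left(\sup_{m \geq n_j} |T_m - T_\infty| > n_j^{-\tau}\right) = O\bigl(2^{j(2\tau - \delta)}\bigr),$$
which is summable in $j$ precisely because the hypothesis $2a+b-3b\beta - 2\tau > 1$ is equivalent to $\delta > 2\tau$. Borel--Cantelli then produces a random $J(\omega) < \infty$ with $\sup_{m \geq 2^j}|T_m - T_\infty| \leq 2^{-j\tau}$ for every $j \geq J(\omega)$; a dyadic interpolation (for any $n$, pick $j$ with $2^j \leq n < 2^{j+1}$) converts this to $\sup_{m \geq n}|T_m - T_\infty| \leq 2^\tau n^{-\tau}$ for all sufficiently large $n$, and the finitely many small-$n$ terms are absorbed into a random multiplicative constant $\tilde{C}_0^{(1)}$. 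The lemma then follows from $\bigl|\sum_{i=n}^k a_i M_i\bigr| = |T_{k+1} - T_n| \leq 2\sup_{m \geq n}|T_m - T_\infty|$ combined with the bias estimate, after taking $\tilde{C}_0$ to be the sum of the two random constants.

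The main obstacle I anticipate is verifying that the variance estimate $E[M_i^2 \mid \mathcal{F}_{i-1}] = O(i^{-b(1-3\beta)})$ holds \emph{uniformly} in $\theta_i \in \Theta$: this requires the constants in Conditions (\ref{simulator}.a)--(\ref{simulator}.c) for the process $Z = (X(\theta), Y(\theta))$ to be $\theta$-uniform, which in turn follows from Condition (\ref{channel-model}.a), Remark~\ref{Z-Created}, and the fact that the transition kernels are analytic in $\theta$ over a compact parameter region. Once that uniform control is in hand, every remaining step is standard martingale technology, so no further surprises are expected.
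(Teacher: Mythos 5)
Your proposal is correct, but it follows a genuinely different route than the paper's. The paper forms the \emph{weighted} partial sums $T_n = \sum_{i=1}^n i^\tau a_i R_i(\theta_i)$, observes (running the same martingale/bias decomposition as in the proof of Lemma~\ref{convergent-even-on-some-event}, where the hypothesis $2a+b-3b\beta-2\tau>1$ is precisely what makes $\sum_i i^{2\tau} a_i^2 E[M_i^2]<\infty$) that this weighted series converges almost surely, and then uses Abel summation by parts to extract $\left|\sum_{i=n}^k a_i R_i(\theta_i)\right| \leq C\, n^{-\tau}\sup_i |T_i|$, defining $\tilde{C}_0$ in terms of $\sup_i|T_i|$, which is a.s.\ finite since $T_n$ converges. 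You instead keep the unweighted martingale, obtain the quantitative tail bound $E[\sup_{m\geq n}|T_m - T_\infty|^2] = O(n^{-\delta})$ with $\delta = 2a+b-3b\beta-1$ via Doob's $L^2$-maximal inequality, and then run Chebyshev plus Borel--Cantelli along the dyadic scale $n_j=2^j$, using that $\delta>2\tau$ makes $\sum_j 2^{j(2\tau-\delta)}<\infty$, followed by dyadic interpolation. Both proofs rest on exactly the same variance estimate from Lemma~\ref{order-of-mements} (applied with sample size $i^b$, giving $E[M_i^2]=O(i^{-b(1-3\beta)})$) and the same $R_i = M_i + E_i$ split, so the substantive analytic input is identical. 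The paper's argument is shorter once the weighted convergence is in hand; yours is a more explicit chaining argument whose advantage is that the exponent slack $\delta - 2\tau$ controlling the probabilities is visible. Two minor points of care in your write-up: the application of Doob's maximal inequality should be phrased for the forward martingale $S_m = T_m - T_n$ ($m \geq n$) with $N\to\infty$, since $T_m - T_\infty$ itself is not adapted in the forward direction; and the uniformity in $\theta$ of the moment bound that you flag at the end is indeed what the paper relies on (via Remark~\ref{Z-Created} and the uniform-in-$\theta$ constants on the compact parameter set), so your concern there is the right one but is already resolved by the cited hypotheses.
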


\begin{proof}

For any $\tau > 0$ with $2a+b-3 b \beta - 2 \tau > 1$, as in the proof of Lemma~\ref{convergent-even-on-some-event}, we deduce that $\sum_{i=1}^{\infty} i^{\tau} a_i R_i$ converges almost surely. Letting
$$
T_n \triangleq \sum_{i=1}^{n} i^{\tau} a_i R_i(\theta_i),
$$
we then have for any $k \geq n$,
\begin{align*}
\sum_{i=n}^k a_i R_i(\theta_i)
& = \sum_{i=n}^k (i^{\tau} a_i R_i(\theta_i)) i^{-\tau}
\\
& =\sum_{i=n}^k (T_i - T_{i-1}) i^{-\tau} \\
&=\sum_{i=n}^k T_i i^{-\tau}-\sum_{i=n+1}^k T_{i-1} i^{-\tau}\\
&=\sum_{i=n}^k T_i i^{-\tau}-\sum_{i=n}^{k-1} T_{i} (i+1)^{-\tau}\\
&=T_k k^{-\tau}+ \sum_{i=n}^{k-1} (i^{-\tau}-(i+1)^{-\tau}) T_i\\
& \leq (k^{-\tau}+\sum_{i=n}^{k-1} (i^{-\tau}-(i+1)^{-\tau})) \sup_i T_i \\
& = n^{-\tau} \sup_i T_i,
\end{align*}
which immediately implies the lemma.
\end{proof}

In the following, to avoid notational cumbersomeness, we will use $C$ to denote a positive constant, which may not be the same on its each appearance.
\begin{lem} \label{EstimationsOfDifferences}
Let $0 < h < 1$ and $\tilde{C}_0, \tau$ be as in Lemma~\ref{random-O-term}, then we have
\begin{enumerate} [label=(\arabic*)]
\item there exists a constant $C > 0$ such that
$$
|f'(\theta_{t(n, h)})| \leq C (\tilde{C}_0 n^{-\tau}+ |f'(\theta_n)|).
$$
\item there exists a constant $C > 0$ such that
$$
|\theta_{t(n, h)}- \theta_n| \leq C (\tilde{C}_0 n^{-\tau} + h |f'(\theta_n)|).
$$
\item there exists a constant $C > 0$ such that
$$
|R_{n, t(n, h)}| \leq C(\tilde{C}_0 n^{-\tau} +  h^2 |f'(\theta_n)|).
$$
\item there exists a constant $C > 0$ such that
$$
|\hat{R}_{n, t(n, h)}| \leq C (\tilde{C}_0^2 n^{-2\tau} + \tilde{C}_0 n^{-\tau} |f'(\theta_n)|+h^2 |f'(\theta_n)|^2).
$$
\item there exists a constant $C > 0$ such that
$$
f(\theta_n)-f(\theta_{t(n, h)}) \leq -(3/4-3 C h/2)h |f'(\theta_n)|^2+ C \tilde{C}_0^2 n^{-2 \tau} (1+1/(2 h^2)).
$$

\item there exists $C > 0$ such that for sufficiently small $h$
$$
2(f(\theta_n)-f(\theta_{t(n, h)})) + |f'(\theta_n)| |\theta_{t(n, h)}-\theta_n| \leq (C+1/(2h^2)) \tilde{C}_0^2 n^{-2 \tau}.
$$

\item for any $\tau' < \tau$, there exists a positive constant $C$ such that for sufficiently small $h$, we have
$$
|\theta_{t(n, h)}-\theta_n| \leq C n^{\tau'} (f(\theta_{t(n, h)})-f(\theta_n))+C \tilde{C}_0^2 n^{-\tau'}.
$$
\end{enumerate}
\end{lem}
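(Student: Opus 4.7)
The plan is to prove (7) by splitting into two regimes depending on the size of $|f'(\theta_n)|$ relative to $n^{-\tau'}$. Without loss of generality I may assume $\tilde{C}_0 \geq 1$ (replacing $\tilde{C}_0$ by $\max(\tilde{C}_0,1)$ preserves the conclusion of Lemma~\ref{random-O-term}), so that $\tilde{C}_0 \leq \tilde{C}_0^2$ throughout.

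In the first regime, when $|f'(\theta_n)| \geq n^{-\tau'}$, I start from estimate (6), namely
$$|f'(\theta_n)||\theta_{t(n,h)}-\theta_n| \leq 2(f(\theta_{t(n,h)})-f(\theta_n)) + (C+1/(2h^2))\tilde{C}_0^2 n^{-2\tau}.$$
The left side is nonnegative, hence so is the right side, and dividing by $|f'(\theta_n)|$ (using $1/|f'(\theta_n)| \leq n^{\tau'}$) yields
$$|\theta_{t(n,h)}-\theta_n| \leq 2n^{\tau'}(f(\theta_{t(n,h)})-f(\theta_n)) + (C+1/(2h^2))\tilde{C}_0^2 n^{\tau'-2\tau}.$$
Because $\tau' < \tau$ strictly, $\tau'-2\tau \leq -\tau'$, so the error term is absorbed into $C' \tilde{C}_0^2 n^{-\tau'}$. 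Note this bound is sign-robust, remaining valid even when the $f$-increment is negative, since the division was justified by the already-established non-negativity of the whole right-hand side of (6).

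In the second regime, when $|f'(\theta_n)| < n^{-\tau'}$, estimate (2) yields
$$|\theta_{t(n,h)}-\theta_n| \leq C(\tilde{C}_0 n^{-\tau} + h|f'(\theta_n)|) \leq C(\tilde{C}_0 + h)n^{-\tau'} \leq C''\tilde{C}_0^2 n^{-\tau'},$$
using $\tau > \tau'$ together with $\tilde{C}_0 \geq 1$. If the $f$-increment is nonnegative, the bound follows at once, since $n^{\tau'}(f(\theta_{t(n,h)})-f(\theta_n)) \geq 0$. Otherwise, $\Delta := f(\theta_n)-f(\theta_{t(n,h)}) > 0$, and estimate (5), with its non-positive first right-hand-side term dropped, gives $\Delta \leq (C+1/(2h^2))\tilde{C}_0^2 n^{-2\tau}$; multiplying by $n^{\tau'}$ produces a term of size $O(\tilde{C}_0^2 n^{\tau'-2\tau}) \subseteq O(\tilde{C}_0^2 n^{-\tau'})$, so $|\theta_{t(n,h)}-\theta_n| + C n^{\tau'}\Delta \leq C^{*}\tilde{C}_0^2 n^{-\tau'}$ for a constant $C^{*}$ depending on $h$, which is exactly the rearranged form of (7).

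The main technical obstacle lies in the subcase where $|f'(\theta_n)|$ is small \emph{and} $f$ simultaneously decreases: then the ``guiding'' term $C n^{\tau'}(f(\theta_{t(n,h)})-f(\theta_n))$ on the right-hand side of (7) is negative, and it effectively competes with $|\theta_{t(n,h)}-\theta_n|$. The strict inequality $\tau' < \tau$ is precisely what saves the day, converting any $n^{\tau'-2\tau}$ factor arising from (5) or (6) into a quantity of strictly smaller order than $n^{-\tau'}$, so that the $C\tilde{C}_0^2 n^{-\tau'}$ term in (7) can absorb the negative contribution provided $C$ is taken sufficiently large (depending on $h$ and on the absolute constants of (2), (5), (6)). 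The rest is book-keeping: pick $h$ small enough that the $(C+1/(2h^2))$-type coefficients stay controlled, take $C$ in (7) at least the maximum of the constants produced in the two regimes, and (if necessary) inflate $C$ further so that the inequality holds trivially on the finitely many small-$n$ values where either $\tilde{C}_0^2 n^{-\tau'}$ is not yet dominant.
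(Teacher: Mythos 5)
Your proof of part (7) is correct and follows essentially the same two-regime decomposition as the paper: split on whether $|f'(\theta_n)| \geq n^{-\tau'}$ or $|f'(\theta_n)| < n^{-\tau'}$; use (6) and the bound $1/|f'(\theta_n)| \leq n^{\tau'}$ in the large-gradient regime (the paper's display \eqref{dico-2}); and use (2) together with control on the $f$-increment in the small-gradient regime. The only substantive deviation from the paper is local: in the small-gradient regime, you invoke (5) to bound the possible decrement $\Delta = f(\theta_n) - f(\theta_{t(n,h)})$ when it is positive, whereas the paper bounds $|f(\theta_{t(n,h)}) - f(\theta_n)|$ directly from the Taylor identity \eqref{f-k-n} together with (4) and inserts the artificial nonnegative quantity $n^{\tau'}(f(\theta_k)-f(\theta_n)) + n^{\tau'}|f(\theta_k)-f(\theta_n)|$ to land on the form of (7). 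Both routes rest on the same facts — (2) controls $|\theta_{t(n,h)}-\theta_n|$ by $\tilde{C}_0^2 n^{-\tau'}$ in this regime, and the $f$-increment is $O(\tilde{C}_0^2 n^{-2\tau}(1+1/h^2))$ either way — so the two are minor variants of one another. Your explicit WLOG reduction to $\tilde{C}_0 \geq 1$ is a useful clarification that the paper leaves implicit.

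One small imprecision worth flagging: in your closing paragraph you attribute the absorption of the negative term to choosing ``$C$ sufficiently large.'' That reasoning is a bit off, because when $f(\theta_{t(n,h)}) - f(\theta_n) < 0$, enlarging $C$ makes the guiding term $Cn^{\tau'}(f(\theta_{t(n,h)})-f(\theta_n))$ \emph{more} negative, which works against you rather than for you. What actually saves the argument — as you correctly note in the first part of that paragraph — is the strict inequality $\tau' < \tau$: the decrement produces a term of order $n^{\tau'-2\tau}$, which is $o(n^{-\tau'})$, so for $n$ beyond some $h$-dependent threshold the decrement term is dominated by $\tilde{C}_0^2 n^{-\tau'}$ independently of how $C$ is chosen. (The lemma is asymptotic in $n$, as is its use in Theorem~\ref{theta-convergence-rate}, so a finite exceptional range of $n$ is harmless.) The paper is similarly terse on this constant-matching point, so this is a shared gloss rather than a defect peculiar to your write-up.
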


\begin{proof}

In this proof, for notational simplicity, we will write $t(n, h)$ as $k$.

Note that there exists a positive constant $C$ such that
\begin{align*}
|f'(\theta_k)| & \leq |f'(\theta_n)|+|f'(\theta_k)-f'(\theta_n)| \\
& \leq |f'(\theta_n)|+C|\theta_k-\theta_n| \\
& \leq |f'(\theta_n)|+ C \sum_{i=n}^{k-1} a_i |f'(\theta_i)| + C |\sum_{i=n}^{k-1} a_i R_i(\theta_i)|,
\end{align*}
where we have applied (\ref{R-n-introduced}).
Applying Lemma~\ref{random-O-term}, we then have
$$
|f'(\theta_k)| \leq C \tilde{C}_0 n^{-\tau} + |f'(\theta_n)|+C \sum_{i=n}^{k-1} a_i |f'(\theta_i)|.
$$
Applying Gronwall's lemma, we then have for $n$ sufficiently large
$$
|f'(\theta_k)| \leq (C \tilde{C}_0 n^{-\tau}+|f'(\theta_n)|) \exp(C(a_n+a_{n+1}+\cdots+a_{k-1})) \leq \exp(C) (C \tilde{C}_0 n^{-\tau}+ |f'(\theta_n)|),
$$
where we have used the fact that for $n$ large enough
$$
a_n+a_{n+1}+\cdots+a_{k-1} \approx h < 1.
$$
We have then established (1).

It then follows from (1) that for some $C$
\begin{align*}
|\theta_k - \theta_n| & \leq \sum_{i=n}^{k-1} a_i |f'(\theta_i)|+ |\sum_{i=n}^{k-1} a_i R_i(\theta_i)| \\
& \leq (A_k - A_n) (C \tilde{C}_0 n^{-\tau} + C |f'(\theta_n)|) + \tilde{C}_0 n^{-\tau},
\end{align*}
which immediately implies (2).

Now, by (\ref{R-n-k}) and (2), we have for some $C$
\begin{align*}
|R_{n, k}| & \leq \tilde{C}_0 n^{-\tau}+ C\sum_{i=n}^{k-1} a_i |\theta_i - \theta_n| \\
& \leq \tilde{C}_0 n^{-\tau} + C^2 (A_k - A_n) (\tilde{C}_0 n^{-\tau} + (A_k - A_n) |f'(\theta_n)|),
\end{align*}
which establishes (3).

Furthermore, by (\ref{hat-R-n-k}), (2) and (3), we have
\begin{align*}
|\hat{R}_{n, k}| & \leq |f'(\theta_n)| |R_{n, k}|+ C |\theta_k-\theta_n|^2 \\
& \leq C \tilde{C}_0 n^{-\tau} |f'(\theta_n)| + C (A_k-A_n)^2 |f'(\theta_n)|^2 + 2 C^3 (\tilde{C}_0^2 n^{-2\tau} + (A_k - A_n)^2 |f'(\theta_n)|^2),
\end{align*}
which establishes (4).

It then follows from (\ref{f-k-n}), (III) and (IV) and  that for sufficiently large $n$
\begin{align*}
f(\theta_n)-f(\theta_{k}) & \leq -(A_k-A_n) |f'(\theta_n)|^2+|\hat{R}_{n, k}| \\
& \leq -3h/4  |f'(\theta_n)|^2+ C (\tilde{C}_0^2 n^{-2\tau}+ \tilde{C}_0 n^{-\tau} |f'(\theta_n)|+h^2|f'(\theta_n)|^2) \\
& \leq -3h/4  |f'(\theta_n)|^2+ C (\tilde{C}_0^2 n^{-2\tau}+ \tilde{C}_0^2 n^{-2\tau}/(2 h^2) + h^2 |f'(\theta_n)|^2/2 + h^2 |f'(\theta_n)|^2) \\
& \leq  -3h/4  |f'(\theta_n)|^2+C (\tilde{C}_0^2 n^{-2 \tau} (1+1/(2 h^2)) + 3 h^2/2 |f'(\theta_n)|^2 )\\
& \leq -(3/4-3Ch/2) h |f'(\theta_n)|^2+ C \tilde{C}_0^2 n^{-2 c} (1+1/(2 h^2)),
\end{align*}
which establishes (5).

It follows from (\ref{f-k-n}), (3) and (4) that
\begin{align*}
f(\theta_{k})-f(\theta_n) & =|f'(\theta_n)| |(A_{k}-A_n) f'(\theta_n)|+\hat{R}_{n, k}\\
& =|f'(\theta_n)| |\theta_{k}-\theta_n+R_{n, k}|+\hat{R}_{n, k} \\
&\geq |f'(\theta_n)| (|\theta_{k}-\theta_n|-|R_{n, k}|) -\hat{R}_{n, k}\\
&\geq |f'(\theta_n)| |\theta_{k}-\theta_n| - C (\tilde{C}_0^2 n^{-2\tau} + \tilde{C}_0 n^{-\tau} |f'(\theta_n)|+(A_{k} - A_n)^2 |f'(\theta_n)|^2),
\end{align*}
which implies that
\begin{align*}
f(\theta_n)-f(\theta_{k}) + |f'(\theta_n)| |\theta_{k}-\theta_n| & \leq
C (\tilde{C}_0^2 n^{-2\tau} + \tilde{C}_0 n^{-\tau} |f'(\theta_n)|+(A_{k} - A_n)^2 |f'(\theta_n)|^2)\\
& \leq C (\tilde{C}_0^2 n^{-2 \tau} (1+1/(2 h^2)) + 3 h^2/2 |f'(\theta_n)|^2 ).
\end{align*}
Applying (V), we then have for sufficiently small $h$,
$$
f(\theta_n)-f(\theta_{k}) + |f'(\theta_n)| |\theta_{k}-\theta_n|
\leq 2 C (1+1/(2h^2)) \tilde{C}_0^2 n^{-2 \tau} +f(\theta_{k}-f(\theta_n)),
$$
which can be rewritten as
$$
2(f(\theta_n)-f(\theta_{k})) + |f'(\theta_n)| |\theta_{k}-\theta_n| \leq 2 C (1+1/(2h^2)) \tilde{C}_0^2 n^{-2 \tau},
$$
which establishes (6).

We next prove (7). If $|f'(\theta_n)| \leq n^{-\tau'}$, applying (II), we deduce that
\begin{equation} \label{smaller-than}
|\theta_{k}-\theta_n| \leq C \tilde{C}_0 n^{-\tau'} + C h^2 n^{-\tau'}.
\end{equation}
It follows from (\ref{f-k-n}) and (4) that
\begin{align*}
|f(\theta_k)-f(\theta_n)| & \leq (A_k - A_n) f'^2(\theta_n) + |\hat{R}_{n, k} (\theta_n)| \\
& \leq (A_k - A_n) f'^2(\theta_n)+ C (\tilde{C}_0^2 n^{-2\tau} + \tilde{C}_0 n^{-\tau} |f'(\theta_n)|+h^2 |f'(\theta_n)|^2)\\
& \leq (A_k - A_n) f'^2(\theta_n)+ C (\tilde{C}_0^2 n^{-2\tau'} + \tilde{C}_0 n^{-\tau'} |f'(\theta_n)|+h^2 |f'(\theta_n)|^2)\\
& \leq C (\tilde{C}_0^2 n^{-2 \tau'} (1+1/(2 h^2))) + (h+3 C h^2/2) |f'(\theta_n)|^2,
\end{align*}
which, together with (\ref{smaller-than}), immediately implies that for some $C$,
\begin{align} \label{dico-1}
\nonumber |\theta_{k}-\theta_n| & \leq n^{\tau'} (f(\theta_k)-f(\theta_n)) + n^{\tau'} |f(\theta_k)-f(\theta_n)| + C \tilde{C}_0 n^{-\tau'} + C h^2 n^{-\tau'} \\
\nonumber & \leq n^{\tau'} (f(\theta_k)-f(\theta_n)) + C (\tilde{C}_0^2 n^{-2 \tau'} (1+1/(2 h^2)))\\
& + (h+3 C h^2/2) |f'(\theta_n)|^2 + C \tilde{C}_0 n^{-\tau'} + C h^2 n^{-\tau'}.
\end{align}
On the other hand, if $|f'(\theta_n)| \geq n^{-\tau'}$, applying (6), we deduce that
\begin{align} \label{dico-2}
\nonumber |\theta_{k}-\theta_n| & \leq 2 |f'(\theta_n)|^{-1} (f(\theta_{k}-f(\theta_n))) + (C+1/(2h^2)) |f'(\theta_n)|^{-1}  \tilde{C}_0^2 n^{-2 \tau} \\
& \leq 2 n^{\tau'} (f(\theta_{k}-f(\theta_n))) + (C+1/(2h^2)) \tilde{C}_0^2 n^{-\tau'}.
\end{align}
Combining (\ref{dico-1}) and (\ref{dico-2}), we then have established (7).
\end{proof}

\section{Rate of Convergence with Concavity} \label{with-concavity}

In this section, we assume that
\begin{enumerate} [label=(\thesection.\alph*)]
\item $f(\theta)$ is strictly concave with respect to $\theta$. More precisely, there exists $\hat{\epsilon} > 0$ such that for any $\theta_1, \theta_2 \in \Theta$,
$$
f'_t(t \theta_1 + (1-t) \theta_2) \geq \hat{\epsilon},
$$
for all $0 \leq t \leq 1$.
\item With probability $1$, $\theta_n$ converges to the unique global maximum $\theta^*$ as $n$ tends to $\infty$.
\end{enumerate}
Here, let us note that (\thesection.a), together with Theorem~\ref{Chernoff-Bound}, implies (\ref{f-convergence}.a). With Assumptions (\thesection.a) and (\thesection.b), which, as argued in Section~\ref{Memoryless-Channels}, can be satisfied for a class of finite-state channels, we will derive the convergence rate of $\{\theta_n\}$. Again, for notational convenience only, we assume that $\Theta=\mathbb{R}$ in the proofs.

From
$$
\theta_{n+1}-\theta_n=a_n f'(\theta_n)+a_n R_n(\theta_n),
$$
trivially we have
$$
\Delta_{n+1}-\Delta_n = -a_n f'(\theta_n)-a_n R_n(\theta_n),
$$
where
$$
\Delta_n \triangleq (\theta^*-\theta_{n}).
$$
It immediately from the above two conditions that for $\theta$ sufficiently close to $\theta^*$
\begin{equation} \label{Lojasiewicz-1}
f(\theta)=\hat{O} (|\theta^*-\theta|^2), \qquad f'(\theta)=\hat{O} (|\theta^*-\theta|).
\end{equation}
So, if $\theta_n$ is sufficiently close to $\theta^*$, we will have
$$
f(\theta_n)=\hat{O} (\Delta_n^2), \qquad f'(\theta_n)=\hat{O} (|\Delta_n|).
$$

Throughout the paper, by $M=\tilde{O}(N)$, we mean that there exists a positive random variable $\tilde{C}$ such that with probability $1$,
$$
|M| \leq \tilde{C} N.
$$
In this section, we will prove that $\Delta_n$ is at most of order $\tilde{O}(n^{-\tau})$.

We first prove the following lemma.

\begin{lem} \label{lim-inf}
There exists $l \in \mathbb{N}$ such that
$$
\liminf_{n \to \infty} n^{\tau} |\Delta_n| \leq l \tilde{C}_0.
$$
\end{lem}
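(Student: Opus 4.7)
My plan is to argue by contradiction. Suppose no such $l$ exists; then on a positive-probability event, for every $l \in \mathbb{N}$ we have $\liminf_{n\to\infty} n^\tau|\Delta_n|/\tilde{C}_0 > l$, forcing $L_n := n^\tau|\Delta_n|/\tilde{C}_0 \to \infty$. I restrict to the block-subsequence $\{n_k\}$ defined by $n_{k+1} = t(n_k, h)$ for a fixed small $h > 0$; by (\ref{recursive-n-k}), $n_k = \hat{O}(k^{1/(1-a)})$.

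The next step is to produce a block-wise lower bound on the increments $f(\theta_{n_{k+1}}) - f(\theta_{n_k})$. Strict concavity (Assumption (\thesection.a)) together with (\ref{Lojasiewicz-1}) gives $|f'(\theta_n)|^2 \geq \hat\epsilon^2 L_n^2 \tilde{C}_0^2 n^{-2\tau}$, while a Taylor expansion about $\theta^*$ yields the matching upper bound $f(\theta^*) - f(\theta_n) \leq C'' L_n^2 \tilde{C}_0^2 n^{-2\tau}$. Plugging the former into part (5) of Lemma \ref{EstimationsOfDifferences} with $h$ small, the additive error $C\tilde{C}_0^2 n^{-2\tau}(1+1/(2h^2))$ is dominated by the main term once $L_{n_k}$ is large, which holds eventually by the contradiction hypothesis; this gives
$$
f(\theta_{n_{k+1}}) - f(\theta_{n_k}) \geq c\, h\, L_{n_k}^2\, \tilde{C}_0^2\, n_k^{-2\tau}
$$
for every $k \geq K_0$, provided $K_0$ is sufficiently large.

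Summing from $K_0$ to $\infty$ and using $f(\theta_n)\to f(\theta^*)$ (Assumption (\thesection.b) plus continuity), together with the quadratic upper bound on the left-hand side, I obtain
$$
C'' L_{n_{K_0}}^2 n_{K_0}^{-2\tau} \,\geq\, c\, h\, M_{K_0}^2\, \tilde{C}_0^2 \sum_{k\geq K_0} n_k^{-2\tau},
$$
where $M_{K_0} := \inf_{k\geq K_0} L_{n_k}$. Since $L_n\to\infty$, the non-decreasing sequence $\{M_K\}$ is unbounded and can only strictly increase at indices that attain the running infimum; hence $L_{n_{K_0}} = M_{K_0}$ for infinitely many $K_0$. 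Cancelling this common factor leaves $n_{K_0}^{-2\tau} \geq c'\sum_{k\geq K_0} n_k^{-2\tau}$; but using $n_k = \hat{O}(k^{1/(1-a)})$, the crude bound $\sum_{k=K_0}^{2K_0} n_k^{-2\tau} \geq K_0\cdot n_{2K_0}^{-2\tau}$ is of the same order as $K_0\cdot n_{K_0}^{-2\tau}$, so the ratio of the right-hand side to the left-hand side grows like $K_0$, producing a contradiction for $K_0$ large. The main obstacle is justifying the running-infimum step that picks out the infinitely many $K_0$ with $L_{n_{K_0}} = M_{K_0}$; once that is in hand, the sum comparison is routine and the choice of the deterministic integer $l$ is extracted from the constants appearing in the chain of inequalities above.
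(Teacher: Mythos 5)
Your proof is correct, but it takes a different route than the paper's. The paper works directly with the parameter error $\Delta_n$: from the contradiction hypothesis (with $l$ large, $h$ small), Lemma~\ref{EstimationsOfDifferences}\,(3) and (\ref{Lojasiewicz-1}) show that $R_{n_k,n_{k+1}}$ is dominated by $(A_{n_{k+1}}-A_{n_k})f'(\theta_{n_k})$, and since strict concavity forces $\Delta_n$ and $f'(\theta_n)$ to share a sign, each block step pulls $\theta_{n_k}$ strictly toward $\theta^*$ by a definite fraction, yielding $|\Delta_{n_{k+1}}|\le e^{-h/2}|\Delta_{n_k}|$. Iterating gives an \emph{exponential} decay of $|\Delta_{n_k}|$ in $k$, which together with the polynomial growth $n_k=\hat O(k^{1/(1-a)})$ forces $n_k^\tau|\Delta_{n_k}|\to 0$, i.e.\ $\tilde C_0\le 0$. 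Your argument instead passes to the function values: you lower-bound the block increments of $f$ via Lemma~\ref{EstimationsOfDifferences}\,(5) and the Lojasiewicz-type bound $|f'(\theta_n)|\ge c\,|\Delta_n|$, telescope against the quadratic upper bound $f(\theta^*)-f(\theta_{n_{K_0}})=\hat O(|\Delta_{n_{K_0}}|^2)$, and then use the running-infimum device to cancel the $L$-factors before comparing a polynomial series to a single term. Both are sound; the paper's version is shorter and exploits the sign-agreement between $\Delta_n$ and $f'(\theta_n)$ that is special to the concave setting, whereas your summation-plus-running-infimum argument is structurally closer to the Lojasiewicz-based machinery of Section~\ref{without-concavity} (cf.\ the proof of Lemma~\ref{bounds-on-liminf}). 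One small remark: even if $\sum_k n_k^{-2\tau}$ diverges your argument still closes (the telescoped left side is finite while the right side would be infinite), so no lower bound on $\tau$ is needed; and there is a harmless missing factor of $\tilde C_0^2$ on the left of your displayed inequality, which cancels anyway.
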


\begin{proof}
Suppose, by way of contradiction, that for any $l$,
\begin{equation} \label{not-too-small-2}
n^{\tau} |\Delta_n| \geq l \tilde{C}_0,
\end{equation}
as long as $n$ is sufficiently large. First, pick $n_0$ sufficiently large such that (\ref{not-too-small-2}) is satisfied and then recursively define
$$
n_{k+1}=t(n_k, h).
$$
for some $0 < h < 1$. We then have, for any feasible $k$,
$$
\theta_{n_{k+1}}=\theta_{n_k}+ (A_{n_{k+1}}-A_{n_k}) f'(\theta_{n_k})+ R_{n_k, n_{k+1}}.
$$
It then follows from Lemma~\ref{EstimationsOfDifferences} (3) and (\ref{Lojasiewicz-1}) that
$R_{n_k, n_{k+1}}$ is dominated by $|f'(\theta_{n_k})|$ as long as $l$ is chosen sufficiently large and $h$ is chosen sufficiently small. Noticing that due to the concavity of $f$, $\Delta_n$ always has the same sign as $f'(\theta_n)$, then we have
$$
|\Delta_{n_{k+1}}| \leq  |\Delta_{n_k}|- h/2 |\Delta_{n_k}| \leq |\Delta_{n_k}| e^{-h/2},
$$
an iterative application of which would yield
$$
\Delta_{n_k} \leq \Delta_{n_0} e^{-k h/2}.
$$
It then follows that for any $k$
$$
\Delta_{n_0} n_k^{\tau} e^{-k h/2} \geq n_k^{\tau} \Delta_{n_k} \geq l \tilde{C}_0.
$$
This, together with the fact that (see (\ref{recursive-n-k}))
$$
n_k= \hat{O}(k^{1/(1-a)}),
$$
as $k$ tends to infinity, implies that
$$
\tilde{C}_0 \leq 0,
$$
which is a contradiction.
\end{proof}

\begin{thm} \label{lim-sup}
$$
|\Delta_n| = \tilde{O}(n^{-\tau}).
$$
\end{thm}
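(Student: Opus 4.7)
The plan is to upgrade Lemma~\ref{lim-inf} to the uniform bound $|\Delta_n| = \tilde{O}(n^{-\tau})$ by iterating the dynamics from a ``good'' starting index and exploiting the contraction induced by strict concavity. By Lemma~\ref{lim-inf}, I would pick a (random) index $m$ arbitrarily large with $m^\tau |\Delta_m| \leq 2 l \tilde{C}_0$, and by the almost-sure convergence $\theta_n \to \theta^*$, for such large $m$ the iterate already lies in a neighborhood of $\theta^*$ where (\ref{Lojasiewicz-1}) yields $|f'(\theta)| \geq c_1 |\theta^*-\theta|$ for some $c_1 > 0$, with $f'(\theta)$ of the same sign as $\theta^*-\theta$ by strict concavity. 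Since $a_n \to 0$ prevents the deterministic drift from flipping the sign of $\Delta_n$, substituting into
$$\Delta_{n+1}=\Delta_n - a_n f'(\theta_n) - a_n R_n(\theta_n)$$
yields the contracting recursion
$$|\Delta_{n+1}| \leq (1-c_1 a_n)|\Delta_n|+a_n |R_n(\theta_n)|.$$

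Iterating this from $m$ to arbitrary $n \geq m$ and writing $\Pi_{i,n}=\prod_{j=i}^{n-1}(1-c_1 a_j) \leq \exp(-c_1(A_n-A_i))$ gives
$$|\Delta_n| \leq \Pi_{m,n}|\Delta_m|+\left|\sum_{i=m}^{n-1} a_i R_i(\theta_i)\, \Pi_{i+1,n}\right|.$$
Since $a_j=j^{-a}$ forces $A_n-A_m \gtrsim n^{1-a}-m^{1-a}$, the quantity $(n/m)^\tau \exp(-c_1(A_n-A_m))$ is uniformly bounded in $m \leq n$, which together with $|\Delta_m| \leq 2 l \tilde{C}_0 m^{-\tau}$ handles the first (drift) term with a bound of the desired order $\tilde{C}_0 n^{-\tau}$.

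The main obstacle is the noise sum, since a crude application of Lemma~\ref{random-O-term} would only yield $\tilde{O}(m^{-\tau})$, which is too weak whenever $n \gg m$. I would overcome this by Abel summation relative to the tail partial sum $Q_i \triangleq \sum_{j=i}^{n-1} a_j R_j(\theta_j)$, which by Lemma~\ref{random-O-term} satisfies $|Q_i| \leq \tilde{C}_0\, i^{-\tau}$. Using $Q_i-Q_{i+1}=a_i R_i(\theta_i)$ together with $\Pi_{i+1,n}-\Pi_{i,n}=c_1 a_i \Pi_{i+1,n}$, Abel summation produces
$$\sum_{i=m}^{n-1} a_i R_i(\theta_i)\, \Pi_{i+1,n}=Q_m \Pi_{m+1,n}+c_1 \sum_{i=m+1}^{n-1} Q_i\, a_i\, \Pi_{i+1,n}.$$
Splitting the weighted sum at $n/2$: the $i \leq n/2$ portion is crushed by the super-polynomially small factor $\exp(-c_1(A_n-A_{n/2}))$, while the $i > n/2$ portion inherits $|Q_i|=O(n^{-\tau})$ and, via the telescoping identity $\sum_i c_1 a_i \Pi_{i+1,n}=1-\Pi_{m+1,n} \leq 1$, contributes $O(n^{-\tau})$. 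Combining the two terms gives $|\Delta_n| \leq \tilde{C}\, n^{-\tau}$ for all $n \geq m$ with $\tilde{C}$ a finite random variable; the finitely many indices $n<m$ are absorbed into $\tilde{C}$ using the trivial diameter bound on $\Theta$, completing the proof.
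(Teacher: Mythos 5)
Your approach is genuinely different from the paper's. The paper runs a block-based crossing argument following Tadic: assuming by contradiction that $n^\tau|\Delta_n|$ oscillates between levels $2l\tilde{C}_0$ and $3l\tilde{C}_0$ infinitely often, it tracks the dynamics across one block from $m_0$ to $t(m_0,h)$ using the estimates of Lemma~\ref{EstimationsOfDifferences} and the sign-agreement of $f'$ and $\Delta$ to show $m_1^\tau|\Delta_{m_1}|<m_0^\tau|\Delta_{m_0}|$, producing a contradiction. Your proposal instead performs a direct Gronwall-type iteration from a good index supplied by Lemma~\ref{lim-inf} and controls the accumulated noise by Abel summation against the tail sums $Q_i$; this bypasses the $t(n,h)$ machinery entirely and exposes the cancellation structure in the noise more transparently.

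There is, however, a gap at the iteration step. From the one-step inequality $|\Delta_{n+1}| \le (1-c_1 a_n)|\Delta_n| + a_n|R_n(\theta_n)|$ that you wrote, iterating gives
$$
|\Delta_n| \le \Pi_{m,n}|\Delta_m| + \sum_{i=m}^{n-1} a_i |R_i(\theta_i)|\, \Pi_{i+1,n},
$$
with the absolute value \emph{inside} the sum; this is not the same as, and does not imply, the bound you stated with the outer absolute value. The stated bound is weaker but does not follow from the recursion you iterated, and the version with $|R_i|$ inside cannot be controlled by Lemma~\ref{random-O-term} because it destroys the cancellation. The repair is to iterate the exact signed identity
$$
\Delta_{n+1}=\bigl(1-a_n|f''(\xi_n)|\bigr)\Delta_n - a_n R_n(\theta_n),
$$
where $\xi_n$ is the mean-value point giving $f'(\theta_n)=|f''(\xi_n)|\Delta_n$, yielding $\Delta_n=\tilde\Pi_{m,n}\Delta_m-\sum_{i=m}^{n-1}a_i R_i(\theta_i)\tilde\Pi_{i+1,n}$ with $\tilde\Pi_{i,n}\triangleq\prod_{j=i}^{n-1}(1-a_j|f''(\xi_j)|)$. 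Near $\theta^*$ and for $m$ large, $c_1\le|f''(\xi_j)|\le c_2$ and $a_j c_2<1$, so $0\le\tilde\Pi_{i,n}\le\Pi_{i,n}$. Your Abel summation then goes through verbatim with $\tilde\Pi$ in place of $\Pi$: the increment is $\tilde\Pi_{i+1,n}-\tilde\Pi_{i,n}=a_i|f''(\xi_i)|\tilde\Pi_{i+1,n}$, the telescoping sum is still at most $1$, each summand is bounded by $c_2|Q_i|a_i\Pi_{i+1,n}$, and the split at $n/2$ together with the uniform boundedness of $(n/m)^\tau\exp(-c_1(A_n-A_m))$ both stand as you argued. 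With this correction the proof is sound.
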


\begin{proof}

It is enough to prove that there exists an integer $l$ such that for all $n$ sufficiently large,
$$
n^{\tau} |\Delta_n| \leq l \tilde{C}_0.
$$
By way of contradiction, suppose otherwise. Then, by Lemma~\ref{lim-inf}, for any $l$ and arbitrarily large $N$, we can find $k_0 > m_0 > N$ such that
$$
m_0^{\tau}  \Delta_{m_0} \leq 2 l \tilde{C}_0, \quad k_0^{\tau} \Delta_{k_0}  \geq 3 l \tilde{C}_0,
$$
\begin{equation} \label{quadruple-1}
\min_{m_0 < n \leq k_0} n^{\tau} \Delta_n > 2 l \tilde{C}_0, \quad \max_{m_0 \leq n < k_0} n^{\tau} \Delta_n \leq 3 l \tilde{C}_0.
\end{equation}
Now, for some $0 < h < 1$, let $m_1=t(m_0, h)$. Then, for any $m_0 \leq n \leq m_1$, it follows from (\ref{quadruple-1}) and
$$
\theta_n-\theta_{m_0} = (A_n-A_{m_0}) f'(\theta_{m_0})+ R_{m_0, n}, ~~ |R_{m_0, n}| \leq C(m_0^{-\tau} \tilde{C}_0  +  (A_n-A_{m_0})^2 |f'(\theta_{m_0})|)
$$
that
$$
|\Delta_n-\Delta_{m_0}| = O(m_0^{-\tau}) \tilde{C}_0.
$$
Applying (\ref{quadruple-1}), we then deduce that for sufficiently small $h$
\begin{align*}
|n^{\tau} \Delta_{n}-m_0^{\tau} \Delta_{m_0}| & \leq n^{\tau} |\Delta_{n}-\Delta_{m_0}|+(n^{\tau}-m_0^{\tau}) \Delta_{m_0} \\
& \leq O(m_0^{\tau}) O(m_0^{-\tau}) \tilde{C}_0  + o(m_0^{\tau}) 2 l m_0^{-\tau} \tilde{C}_0 ,
\end{align*}
where we have used the fact that
$$
n^{\tau}=O(m_0^{\tau}), ~~ n^{\tau}-m_0^{\tau}=o(m_0^{\tau}).
$$
It then follows that, with $l$ large enough and $h$ small enough, we have
$$
|n^{\tau} \Delta_n-m_0^{\tau} \Delta_{m_0}| \leq l \tilde{C}_0.
$$
In particular, we have
$$
|{(m_0+1)}^{\tau} \Delta_{m_0+1}-m_0^{\tau} \Delta_{m_0}| \leq l \tilde{C}_0 \mbox{ and } |m_1^{\tau} \Delta_{m_1}-m_0^{\tau} \Delta_{m_0}| \leq l \tilde{C}_0,
$$
which further implies that
$$
m_0^{\tau} \Delta_{m_0} \geq l \tilde{C}_0 \mbox{ and } m_1 < k_0,
$$
respectively.

Now, for some $0 < h < 1$, we have
$$
\theta_{m_1}=\theta_{m_0}+ (A_{m_1}-A_{m_0}) f'(\theta_{m_0})+ R_{m_0, m_1},
$$
and
$$
|R_{m_0, m_1}| \leq C(m_0^{-\tau} \tilde{C}_0  +  (A_{m_1}-A_{m_0})^2 |f'(\theta_{m_0})|).
$$
As in the proof of Lemma~\ref{lim-inf}, if $l$ is chosen large enough, then $|f'(\theta_{m_0})|$ will dominate $|R_{m_0, m_1}|$. Again, due to the concavity of $f$, $\Delta_{m_0}$ always has the same sign as $f'(\theta_{m_0})$, then for sufficiently small $h > 0$, we have
$$
|\Delta_{m_1}| \leq  |\Delta_{m_0}|-h/2 |\Delta_{m_0}|.
$$
Then, for $m_0$ sufficiently large such that
$$
m_1^{\tau} < m_0^{\tau}/(1-h/2),
$$
we have
$$
m_1^{\tau} |\Delta_{m_1}| \leq  m_1^{\tau} |\Delta_{m_0}|(1-h/2) < m_0^{\tau} |\Delta_{m_0}| \leq 2 l \tilde{C}_0,
$$
which is a contradiction to (\ref{quadruple-1}).

\end{proof}

\section{Rate of Convergence without Concavity}  \label{without-concavity}

In this section, assuming (\ref{f-convergence}.a) and
\begin{enumerate}[label=(\thesection.\alph*)]
\item with probability $1$, $\theta_n \in Q$ for all $n$, where $Q$ is a compact subset of $\Theta$,
\end{enumerate}
we derive the rate of convergence of our algorithm. Again, for notational convenience only, we assume that $\Theta=\mathbb{R}$.

As one of the main results in real algebraic geometry, the Lojasiewicz inequality~\cite{bi88}, among many other applications, has been widely applied to the convergence analysis of a broad class of dynamical systems. In this section, we will first use the ``function'' version of the Lojasiewicz inequality (Lemma~\ref{Lojasiewicz-2}) to prove that $\{f(\theta_n)\}$ converges almost surely and derive the convergence rate, which can be further used to derive the convergence rate of $\{\theta_n\}$. Then, using the ``variable'' version of the Lojasiewicz inequality (Lemma~\ref{Lojasiewicz-3}), the rate of convergence can be refined. The above-mentioned framework is essentially due to Tadic~\cite{ta10}, however, a comprehensive adaptation to our settings has been done in this section.

Following~\cite{ta10}, we state the ``function'' version of the Lojasiewicz inequality as below.
\begin{lem} \label{Lojasiewicz-2}
For any compact set $Q \subset \Theta$ and real number $z \in f(Q)$, there exist real numbers $\delta_{Q, z} \in (0, 1)$, $\mu_{Q, z} \in (1, 2]$ and $M_{Q, z} \in [1, \infty)$ such that
$$
|f(\theta)-z| \leq M_{Q, z} |f'(\theta)|^{\mu_{Q, z}}
$$
for all $\theta \in Q$ satisfying $|f(\theta)-z| \leq \delta_{Q, z}$.
\end{lem}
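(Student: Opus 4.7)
The plan is to reduce the asserted global inequality to the classical pointwise Lojasiewicz gradient inequality for real-analytic functions~\cite{bi88}, via a compactness argument on the ``$\{f=z\}$ slice'' of $Q$, followed by a straightforward ``capping at $2$'' step that forces the final exponent into $(1, 2]$. The only non-trivial input is the analyticity of $f$ on all of $\Theta$, which was recorded at the start of Section~\ref{f-convergence} as a consequence of Theorem~$1.1$ of~\cite{gm05}.

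Set $K := \{\theta \in Q : f(\theta) = z\}$; since $z \in f(Q)$ and $Q$ is compact, $K$ is non-empty and compact. At each $\bar\theta \in K$, the classical Lojasiewicz gradient inequality produces a neighborhood $V_{\bar\theta}$ and constants $c_{\bar\theta} > 0$, $\alpha_{\bar\theta} \in (0, 1)$ with
$$
|f(\theta) - z|^{\alpha_{\bar\theta}} \leq c_{\bar\theta} |f'(\theta)| \quad \text{for all } \theta \in V_{\bar\theta};
$$
writing $\mu_{\bar\theta} := 1/\alpha_{\bar\theta} > 1$, this is the same as $|f(\theta) - z| \leq c_{\bar\theta}^{\mu_{\bar\theta}} |f'(\theta)|^{\mu_{\bar\theta}}$ on $V_{\bar\theta}$. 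Extracting a finite subcover $V_{\bar\theta_1}, \ldots, V_{\bar\theta_m}$ of $K$ and setting $V := \bigcup_{i=1}^m V_{\bar\theta_i}$, the compact set $Q \setminus V$ is disjoint from $K$, so $\delta_0 := \inf_{\theta \in Q \setminus V} |f(\theta) - z| > 0$; choosing any $\delta_{Q, z} \in (0, \min\{\delta_0, 1\})$ then forces every $\theta \in Q$ with $|f(\theta) - z| \leq \delta_{Q, z}$ to lie in some $V_{\bar\theta_i}$.

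It remains to merge the local bounds into one global bound whose exponent lies in $(1, 2]$. Take $\mu_{Q, z} := \min\{2, \min_i \mu_{\bar\theta_i}\} \in (1, 2]$ and $M_{Q, z} := \max\{1, \max_i c_{\bar\theta_i}^{\mu_{\bar\theta_i}}\} \in [1, \infty)$. For any admissible $\theta \in V_{\bar\theta_i}$, I would split on the size of $|f'(\theta)|$: if $|f'(\theta)| \leq 1$, then $\mu_{\bar\theta_i} \geq \mu_{Q, z}$ gives $|f'(\theta)|^{\mu_{\bar\theta_i}} \leq |f'(\theta)|^{\mu_{Q, z}}$ and the local inequality upgrades directly; if $|f'(\theta)| > 1$, then $|f(\theta) - z| \leq \delta_{Q, z} < 1 < |f'(\theta)|^{\mu_{Q, z}}$ makes the inequality trivial. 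The one genuinely subtle point — the ``main obstacle'' — is precisely that the raw Lojasiewicz exponents $\mu_{\bar\theta_i}$ can be arbitrarily large at highly degenerate critical points, so a naive $\min_i \mu_{\bar\theta_i}$ need not land in $(1, 2]$; the capping at $2$, made legitimate by the constraint $\delta_{Q, z} < 1$, is designed exactly to absorb this.
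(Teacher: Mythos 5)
The paper does not prove this lemma at all; it is stated ``following \cite{ta10}'' and ultimately rests on the classical \L{}ojasiewicz machinery of~\cite{bi88}, so there is no in-paper proof to compare against. Your argument is a correct, self-contained derivation and is precisely the standard globalization: apply the pointwise \L{}ojasiewicz gradient inequality along the fiber $K=\{f=z\}\cap Q$ (available because $f$ is real-analytic on $\Theta$, as recorded at the start of Section~\ref{f-convergence}), extract a finite cover of the compact fiber, shrink $\delta_{Q,z}$ so that the admissible $\theta$'s are trapped in the cover, and force the final exponent into $(1,2]$ by using $\delta_{Q,z}<1$ to dispose of the regime $|f'(\theta)|>1$. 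The reduction is sound at every step; in particular, the dichotomy on $|f'(\theta)|$ does exactly what is needed to reconcile the heterogeneous local exponents with a single global one.

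One piece of your narration is backwards, though it does not touch the logic. You say the raw exponents $\mu_{\bar\theta_i}$ can be ``arbitrarily large at highly degenerate critical points.'' In fact it is the opposite: for $f(x)=x^k$ near $0$ the optimal \L{}ojasiewicz exponent is $\alpha=(k-1)/k$, so $\mu=k/(k-1)$ \emph{decreases} toward $1$ as the degeneracy $k$ grows. Large local $\mu$ values arise instead at \emph{regular} points (or mildly degenerate ones such as $k=2$), where $|f'|$ is bounded below near $\bar\theta$ and essentially any exponent works; the cap at $2$ is absorbing those cases, not the degenerate ones. Since your proof only uses $\mu_{\bar\theta_i}>1$ together with the split on $|f'(\theta)|\lessgtr 1$, this misattribution of the ``main obstacle'' is a harmless comment rather than a gap.
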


From now on, we will set $\hat{f} = \lim_{n \to \infty} f(\theta_n)$ and write $\mu=\mu_{Q, \hat{f}}$. Define
$$
\hat{\Delta}_n \triangleq \hat{f}-f(\theta_n).
$$
We first prove the following lemma.

\begin{lem} \label{larger-than-negative}
There exists a positive integer $l$ such that for all $n$ sufficiently large,
$$
n^{\mu \tau} \hat{\Delta}_n \geq -l \tilde{C}_0^{\mu}.
$$
\end{lem}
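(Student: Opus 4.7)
The plan is to argue by contradiction, combining the Lojasiewicz inequality in Lemma~\ref{Lojasiewicz-2} with the a priori ascent estimate (5) of Lemma~\ref{EstimationsOfDifferences}. The intuition is that if $\hat{\Delta}_n$ were more negative than $-l\tilde{C}_0^{\mu} n^{-\mu \tau}$ for arbitrarily large $l$, then Lojasiewicz would force $|f'(\theta_n)|$ to be large enough that the iteration pushes $f(\theta_n)$ strictly upward along a suitable subsequence, contradicting the almost-sure convergence $f(\theta_n)\to\hat{f}$ established in Theorem~\ref{convergence-theorem}.

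Concretely, suppose for contradiction that on an event of positive probability, for every positive integer $l$ there exist arbitrarily large $n$ with $f(\theta_n)-\hat{f} > l\tilde{C}_0^{\mu} n^{-\mu \tau}$. Fix $h\in(0,1)$ small enough that $3/4-3Ch/2>0$ (with $C$ the constant from Lemma~\ref{EstimationsOfDifferences}), and let $l$ be a large positive integer to be specified. Since $f(\theta_n)\to\hat{f}$, there exists $N$ with $|f(\theta_n)-\hat{f}|\leq \delta_{Q,\hat{f}}$ for all $n\geq N$; choose $n_0\geq N$ satisfying $f(\theta_{n_0})-\hat{f} > l\tilde{C}_0^{\mu} n_0^{-\mu \tau}$, and recursively define $n_{k+1}=t(n_k,h)$.

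I would then prove by induction on $k$ that both $f(\theta_{n_k})\geq f(\theta_{n_0})$ and $f(\theta_{n_{k+1}})-f(\theta_{n_k})\geq D\tilde{C}_0^{2} n_0^{-2\tau}$ hold, for some $D=D(h,l)>0$. For the inductive step, since $n_k\geq N$, Lojasiewicz applies at $\theta_{n_k}$ and yields
$$
|f'(\theta_{n_k})|^{\mu} \geq (f(\theta_{n_k})-\hat{f})/M_{Q,\hat{f}} \geq l\tilde{C}_0^{\mu}/(M_{Q,\hat{f}}\, n_0^{\mu\tau}).
$$
Combining this with Lemma~\ref{EstimationsOfDifferences}(5) and using $n_k^{-2\tau}\leq n_0^{-2\tau}$ gives
$$
f(\theta_{n_{k+1}})-f(\theta_{n_k}) \geq \bigl[(3/4-3Ch/2)\,h\,(l/M_{Q,\hat{f}})^{2/\mu} - C(1+1/(2h^{2}))\bigr]\,\tilde{C}_0^{2} n_0^{-2\tau},
$$
and choosing $l$ sufficiently large (depending on $h$ and $M_{Q,\hat{f}}$) makes the bracket a strictly positive constant $D$, closing the induction.

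Telescoping (ii) yields $f(\theta_{n_k})\geq f(\theta_{n_0})+kD\tilde{C}_0^{2} n_0^{-2\tau}\to\infty$ on the probability-one event $\{\tilde{C}_0>0\}$, contradicting both the bound $|f(\theta_{n_k})-\hat{f}|\leq \delta_{Q,\hat{f}}$ and the boundedness of $f$ on the compact $Q$. The main obstacle I anticipate is the coupled calibration of the parameters $h$ (small) and $l$ (large): because $n_k\geq n_0$, both the Lojasiewicz-driven ascent term and the stochastic error term in Lemma~\ref{EstimationsOfDifferences}(5) are of the same order $\tilde{C}_0^{2} n_0^{-2\tau}$, so one has to verify that a joint choice of $(h,l)$ makes the former strictly dominate the latter, uniformly in $k$, for the induction to close.
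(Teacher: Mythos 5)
Your proposal is correct and takes essentially the same route as the paper: negate the claim, take $n_0$ large, iterate $n_{k+1}=t(n_k,h)$, feed the Lojasiewicz lower bound on $|f'(\theta_{n_k})|$ into part (5) of Lemma~\ref{EstimationsOfDifferences}, and calibrate $l$ large against $h$ small (using $n_k\geq n_0$ to control the noise term) so that the ascent term dominates. The only cosmetic difference is the closing contradiction: the paper simply notes that the one-step increment is nonpositive, so $\hat{\Delta}_{n_k}\leq\hat{\Delta}_{n_0}<0$ for all $k$, contradicting $\hat{\Delta}_{n_k}\to 0$, whereas you telescope the strictly positive increment $D\tilde{C}_0^2 n_0^{-2\tau}$ to force $f(\theta_{n_k})\to\infty$; both are valid, the paper's being marginally more economical.
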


\begin{proof}
Suppose, by way of contradiction, that for any $l$, there exists some $n_0$,
\begin{equation} \label{not-too-big}
n_0^{\mu \tau} \hat{\Delta}_{n_0} < -l \tilde{C}_0^{\mu}.
\end{equation}
Then, by Lemma~\ref{EstimationsOfDifferences} (5), we have, for some $0 < h < 1$,
$$
f(\theta_{n_0})-f(\theta_{t(n_0, h)}) \leq -(3/4-3Ch/2)h |f'(\theta_{n_0})|^2+ C \tilde{C}_0^2 n_0^{-2\tau} (1+1/(2h^2)),
$$
which implies for $h$ sufficiently small,
\begin{align*} \label{key-inequality}
\hat{\Delta}_{\theta_{t(n_0, h)}})-\hat{\Delta}_{n_0} & \leq -(3/4-3Ch/2)h |f'(\theta_{n_0})|^2+ C \tilde{C}_0^2 n_0^{-2\tau} (1+1/2h^2) \\
& \leq -h/2 |f'(\theta_{n_0})|^2 + C \tilde{C}_0^2 n_0^{-2\tau} (1+1/(2h^2)).
\end{align*}
Choosing $l$ sufficiently large, then by Lemma~\ref{Lojasiewicz-2} and (\ref{not-too-big}), we deduce that for $n$ large enough,
$$
-h/2 |f'(\theta_{n_0})|^2 + C \tilde{C}_0^2 n_0^{-2\tau} \leq -h/4 |f'(\theta_{n_0})|^2,
$$
and therefore
\begin{equation} \label{dominated-via-Lojasiewicz}
\hat{\Delta}_{t(n_0, h)}-\hat{\Delta}_{n_0} \leq -h/4 |f'(\theta_{n_0})|^2.
\end{equation}
We then have
$$
\hat{\Delta}_{t(n_0, h)} \leq \hat{\Delta}_{n_0} \leq -l \tilde{C}_0^{\mu} n_0^{-\mu \tau} \leq - l \tilde{C}_0^{\mu} t(n_0, h)^{-\mu \tau}.
$$
Henceforth, recursively define
$$
n_{k+1}=t(n_k, h).
$$
It then follows that for any $k$,
$$
\hat{\Delta}_{n_k} \leq \hat{\Delta}_{n_0} \leq -  l \tilde{C}_0^{\mu} n_0^{-\mu \tau} < 0,
$$
which is a contradiction to the fact that almost surely
$$
\lim_{k \to \infty} \hat{\Delta}_{n_k}=0.
$$
\end{proof}

In the remainder of this section, define
$$
\hat{\tau}= \min (\mu \tau, \mu(1-a)/(2-\mu)).
$$
\begin{lem}  \label{bounds-on-liminf}
There exists a positive integer $l$ such that
$$
\liminf_{n \to \infty} n^{\hat{\tau}} \hat{\Delta}_n \leq l \tilde{C}_0^{\mu}
$$
almost surely.
\end{lem}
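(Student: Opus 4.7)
The plan is to argue by contradiction, in the spirit of Lemmas~\ref{lim-inf} and~\ref{larger-than-negative}. Suppose that on a positive-measure event, for every positive integer $l$ one has $n^{\hat\tau} \hat\Delta_n > l \tilde{C}_0^{\mu}$ for all sufficiently large $n$. In particular $\hat\Delta_n > 0$ eventually, so the Lojasiewicz inequality (Lemma~\ref{Lojasiewicz-2}) applies with $z = \hat f$ and yields $|f'(\theta_n)|^{2} \geq (\hat\Delta_n/M)^{2/\mu}$ for large $n$.

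Fix a small $h \in (0,1)$ and recursively set $n_{k+1} = t(n_k, h)$. Lemma~\ref{EstimationsOfDifferences}(5), rewritten in terms of $\hat\Delta = \hat f - f$, gives
$$
\hat\Delta_{n_{k+1}} - \hat\Delta_{n_k} \leq -\tfrac{h}{2} |f'(\theta_{n_k})|^{2} + C\tilde{C}_0^{2} n_k^{-2\tau}\bigl(1+1/(2h^2)\bigr).
$$
Combining the Lojasiewicz bound with the standing hypothesis $\hat\Delta_{n_k} > l \tilde{C}_0^\mu n_k^{-\hat\tau}$ gives $|f'(\theta_{n_k})|^{2} \geq (l/M)^{2/\mu} \tilde{C}_0^{2} n_k^{-2\hat\tau/\mu}$. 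Since $\hat\tau \leq \mu\tau$ forces $n_k^{-2\hat\tau/\mu} \geq n_k^{-2\tau}$, for $l$ chosen large enough (depending on $h, M, C$ only) the error term is dominated by half of $(h/2)|f'|^2$, and one obtains the scalar recurrence
$$
u_{k+1} \leq u_k - c\, u_k^{2/\mu}, \qquad u_k := \hat\Delta_{n_k}, \quad c := h/(4 M^{2/\mu}).
$$

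The standard analysis of this recurrence proceeds by considering $v_k = u_k^{1-2/\mu}$: when $\mu < 2$ one checks $v_{k+1} - v_k \geq c(2/\mu - 1) + o(1)$, giving $u_k = O(k^{-\mu/(2-\mu)})$; the case $\mu = 2$ reduces to $u_{k+1} \leq (1-c) u_k$ and is even stronger. Using $n_k = \hat O(k^{1/(1-a)})$ from~(\ref{recursive-n-k}), one obtains $u_k = O(n_k^{-\mu(1-a)/(2-\mu)})$, hence
$$
n_k^{\hat\tau}\hat\Delta_{n_k} \leq C' n_k^{\hat\tau - \mu(1-a)/(2-\mu)}
$$
for a deterministic $C'$ depending only on $h, M, \mu$. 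Since $\hat\tau \leq \mu(1-a)/(2-\mu)$, the right-hand side is bounded by $C'$ uniformly in $k$. Picking $l$ large enough that $l \tilde{C}_0^{\mu} > C'$ on the event under consideration then contradicts the standing hypothesis, completing the argument.

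The main obstacle lies in the boundary case $\hat\tau = \mu(1-a)/(2-\mu)$, where the bound $n_k^{\hat\tau} u_k$ merely remains bounded rather than vanishing; careful bookkeeping of constants is needed to verify that $C'$ is genuinely free of $l$ (it is, since $c$ only involves $h, M, \mu$), so that a sufficiently large $l$ triggers the contradiction. A secondary check is absorbing the $C\tilde{C}_0^{2} n^{-2\tau}$ error into $|f'|^{2}$ in the regime $\hat\tau = \mu\tau$, which again demands $l$ to be large relative to deterministic constants but is independent of the random variable $\tilde{C}_0$.
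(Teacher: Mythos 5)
Your proof is correct and follows essentially the same strategy as the paper's: negate the claim, apply Lemma~\ref{EstimationsOfDifferences}(5) together with the Lojasiewicz bound to absorb the $\tilde{C}_0^2 n^{-2\tau}$ error (using $\hat\tau\le\mu\tau$) and obtain the scalar recurrence $u_{k+1}\le u_k-c\,u_k^{2/\mu}$, then integrate it to get $u_k=O(k^{-\mu/(2-\mu)})$ and use $n_k=\hat O(k^{1/(1-a)})$ and $\hat\tau\le\mu(1-a)/(2-\mu)$ to force $\tilde C_0\le 0$ as $l\to\infty$. The only cosmetic difference is that you Taylor-expand $u_k^{1-2/\mu}$ (picking up an $o(1)$ term), whereas the paper uses the slightly cleaner integral comparison $\int_{\hat\Delta_{n_{k+1}}}^{\hat\Delta_{n_k}}u^{-2/\mu}\,du\ge C h$, which yields $v_{k+1}-v_k\ge C_2h$ with no error term; either way the constants entering $C'$ are deterministic and free of $l$, as you correctly observe.
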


\begin{proof}
Suppose, by way of contradiction, that for any $l$, we have
\begin{equation}  \label{iMac}
n^{\hat{\tau}} \hat{\Delta}_n \geq l \tilde{C}_0^{\mu},
\end{equation}
for all $n$ sufficiently large. By Lemma~\ref{EstimationsOfDifferences} (5), for any $0 < h < 1$, we have for $n_0$ large enough,
$$
f(\theta_{n_0})-f(\theta_{t(n_0, h)}) \leq -(3/4-3C h/2) h |f'(\theta_{n_0})|^2+ C \tilde{C}_0^2 n_0^{-2\tau} (1+1/(2 h^2)),
$$
which implies that for $h$ sufficiently small
\begin{align*}
\hat{\Delta}_{t(n_0, h)}-\hat{\Delta}_{n_0} & \leq -(3/4-3Ch/2)h |f'(\theta_{n_0})|^2+ C \tilde{C}_0^2 n_0^{-2\tau} (1+1/(2 h^2)) \\
&\leq -h/2 |f'(\theta_{n_0})|^2 +  C \tilde{C}_0^2 n_0^{-2\tau} (1+1/(2 h^2)) .
\end{align*}
Choosing $l$ sufficiently large, then by Lemma~\ref{Lojasiewicz-2} and (\ref{not-too-big}), we deduce that sufficiently large $n$,
$$
-h/2 |f'(\theta_{n_0})|^2 + C \tilde{C}_0^2 n_0^{-2\tau} \leq -h/4 |f'(\theta_{n_0})|^2,
$$
and therefore
\begin{equation} \label{same-as-before}
\hat{\Delta}_{t(n_0, h)}-\hat{\Delta}_{n_0} \leq -h/4 |f'(\theta_{n_0})|^2.
\end{equation}
Now, recursively define
$$
n_{k+1}=t(n_k, h).
$$
An iterated application of (\ref{same-as-before}) yields for some constant $C_1$,
$$
\hat{\Delta}_{n_{k+1}}-\hat{\Delta}_{n_k} \leq -C_1 h \hat{\Delta}_{n_k}^{2/\mu}.
$$

We then have two cases:

{\bf Case $\mathbf{\mu=2}$:} For this case, we have,
$$
\hat{\Delta}_{n_{k+1}} \leq (1 - C_1 h) \hat{\Delta}_{n_k}.
$$
Recursively, we deduce that
$$
\hat{\Delta}_{n_k} \leq \hat{\Delta}_{n_0} (1-C_1 h)^k \leq \hat{\Delta}_{n_0} e^{-C_1 h k},
$$
which implies that for any $k$,
$$
\hat{\Delta}_{n_0} n_k^{\mu \tau} e^{-C_1 h k} \geq n_k^{\mu \tau} \hat{\Delta}_{n_k} \geq l \tilde{C}_0^{\mu}.
$$
This, however, will yield $\tilde{C}_0 \leq 0$ when we take $k$ to $\infty$, which is a contradiction.

{\bf Case $\mathbf{\mu<2}$:} For this case, it follows from
$$
\hat{\Delta}_{n_k}-\hat{\Delta}_{n_{k+1}} \geq C_1 h \hat{\Delta}_{n_k}^{2/\mu}.
$$
that
$$
\int^{\hat{\Delta}_{n_k}}_{\hat{\Delta}_{n_{k+1}}} \frac{1}{u^{2/\mu}} du \geq \int^{\hat{\Delta}_{n_k}}_{\hat{\Delta}_{n_{k+1}}} \frac{1}{\hat{\Delta}_{n_k}^{2/\mu}} du=\frac{\hat{\Delta}_{n_k}-\hat{\Delta}_{n_{k+1}}}{\hat{\Delta}_{n_k}^{2/\mu}} \geq  C h,
$$
which implies that for some positive constant $C_2$
$$
\hat{\Delta}_{n_{k+1}}^{-2/\mu+1}-\hat{\Delta}_{n_k}^{-2/\mu+1} \geq C_2 h.
$$
Recursively, we deduce that
$$
\hat{\Delta}_{n_k}^{-2/\mu+1} \geq \hat{\Delta}_{n_0}^{-2/\mu+1}+C_2 h k,
$$
and furthermore
$$
n_k^{\hat{\tau}(-2+\mu)/\mu} \hat{\Delta}_{n_k}^{(-2+\mu)/\mu} \geq n_k^{\hat{\tau}(-2+\mu)/\mu} \hat{\Delta}_{n_0}^{-2/\mu+1}+ C_2 n_k^{\hat{\tau}(-2+\mu)/\mu} k h.
$$
It then follows from (\ref{iMac}) and (\ref{recursive-n-k}) that
$$
\tilde{C}_0^{-2+\mu} l^{(-2+\mu)/\mu} \geq n_k^{\hat{\tau}(-2+\mu)/\mu} \hat{\Delta}_{n_k}^{(-2+\mu)/\mu} \geq O(kn_k^{\hat{\tau} (-2+\mu)/\mu}) \geq O(k^{\hat{\tau}(-2+\mu)/(-a+1)\mu+1}).
$$
Now, one verifies that this gives us an contradiction if we take $k, l$ to $\infty$, as long as
$$
\hat{\tau} \leq \mu (1-a)/(2-\mu), \mbox{ equivalently } \hat{\tau}(-2+\mu)/(-a+1)\mu+1 \geq 0.
$$

\end{proof}

\begin{lem} \label{bounds-on-limsup}
There exist an integer $l$ such that for all $n$ sufficiently large,
$$
n^{\hat{\tau}} \Delta_n \leq l \tilde{C}_0^{2}.
$$
\end{lem}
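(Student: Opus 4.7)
The plan is to argue by contradiction, adapting the crossing argument from the proof of Theorem~\ref{lim-sup}. Suppose that the conclusion fails (reading $\hat\Delta_n$ in place of the $\Delta_n$ in the statement, since $\Delta_n$ is undefined in this section): for every positive integer $l$ and every $N$, there exists $n > N$ with $n^{\hat\tau}\hat\Delta_n > l\tilde{C}_0^{2}$. Combining this with Lemma~\ref{bounds-on-liminf}, which guarantees infinitely many indices where $n^{\hat\tau}\hat\Delta_n \leq l'\tilde{C}_0^{\mu}$ for some fixed $l'$, I can locate $m_0 < k_0$ satisfying
\begin{equation*}
m_0^{\hat\tau}\hat\Delta_{m_0} \leq 2l\tilde{C}_0^{\mu}, \qquad k_0^{\hat\tau}\hat\Delta_{k_0} \geq 3l\tilde{C}_0^{2},
\end{equation*}
together with suitable $\min/\max$ control on $\{n^{\hat\tau}\hat\Delta_n : m_0 \leq n \leq k_0\}$, mirroring the crossing inequalities (\ref{quadruple-1}).

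Next, fix a small $h \in (0,1)$ and set $m_1 = t(m_0,h)$. I will first establish $m_1 < k_0$ by the same technique used in Theorem~\ref{lim-sup}: Lemma~\ref{EstimationsOfDifferences}(5) bounds $|\hat\Delta_n - \hat\Delta_{m_0}|$ uniformly for $m_0 \leq n \leq m_1$, and $(m_1/m_0)^{\hat\tau} \to 1$ by (\ref{recursive-n-k}), which together yield $|n^{\hat\tau}\hat\Delta_n - m_0^{\hat\tau}\hat\Delta_{m_0}| \leq l\tilde{C}_0^{2}$ for all such $n$. The crux is then to derive the strict decrease $m_1^{\hat\tau}\hat\Delta_{m_1} < m_0^{\hat\tau}\hat\Delta_{m_0} \leq 2l\tilde{C}_0^{\mu}$, which contradicts the minimum condition at $n = m_1 \in (m_0, k_0]$.

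The strict decrease comes from combining Lemma~\ref{EstimationsOfDifferences}(5) with the Lojasiewicz inequality of Lemma~\ref{Lojasiewicz-2}. Under the crossing hypothesis, $\hat\Delta_{m_0} \geq l\tilde{C}_0^{\mu} m_0^{-\hat\tau}$, so the Lojasiewicz estimate $|f'(\theta_{m_0})|^{2} \geq M^{-2/\mu}\hat\Delta_{m_0}^{2/\mu}$ gives (for $\mu < 2$)
\begin{equation*}
\hat\Delta_{m_1} \leq \hat\Delta_{m_0}\bigl(1 - C_1 h\,\hat\Delta_{m_0}^{(2-\mu)/\mu}\bigr) + C\tilde{C}_0^{2} m_0^{-2\tau}\bigl(1 + 1/(2h^{2})\bigr).
\end{equation*}
Comparing this contraction factor against $(m_1/m_0)^{\hat\tau} \approx 1 + \hat\tau h m_0^{a-1}$, the key algebraic inequality is $\hat\tau\, m_0^{a-1} < C_1\hat\Delta_{m_0}^{(2-\mu)/\mu}$, which, after plugging the lower bound on $\hat\Delta_{m_0}$, reduces to $m_0^{a-1+\hat\tau(2-\mu)/\mu} \lesssim (l\tilde C_0^\mu)^{(2-\mu)/\mu}$; this is precisely the regime in which the definition $\hat\tau \leq \mu(1-a)/(2-\mu)$ was made. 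The additive residual $C\tilde{C}_0^{2} m_0^{-2\tau}/h^{2}$ is absorbed because $\hat\tau \leq \mu\tau \leq 2\tau$, so it is of strictly smaller order than the product $m_1^{\hat\tau}\hat\Delta_{m_0}$.

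The main obstacle is the careful bookkeeping so that the uniform constants in steps (i)--(iii) can all be absorbed by taking $l$ large and $h$ small: the crossing control, the contraction comparison, and the residual absorption must close simultaneously. The boundary case $\mu = 2$ requires a separate subargument (parallel to the $\mu=2$ branch in the proof of Lemma~\ref{bounds-on-liminf}), where the contraction becomes purely linear, $\hat\Delta_{m_1} \leq (1 - C_1 h)\hat\Delta_{m_0} + \text{residual}$, and the comparison with $(m_1/m_0)^{\hat\tau}$ goes through without invoking the $\mu(1-a)/(2-\mu)$ condition.
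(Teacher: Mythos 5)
Your proposal is correct and follows essentially the same route as the paper: argue by contradiction, combine the failure hypothesis with Lemma~\ref{bounds-on-liminf} to build a crossing window $[m_0,k_0]$ mirroring (\ref{quadruple-1}), set $m_1=t(m_0,h)$, use Lemma~\ref{EstimationsOfDifferences}(5) to control $|\hat\Delta_n-\hat\Delta_{m_0}|$ (hence $m_1<k_0$ and $m_0^{\hat\tau}\hat\Delta_{m_0}\geq l\tilde C_0^2$), and then produce a contradiction by forcing $m_1^{\hat\tau}\hat\Delta_{m_1}$ below the crossing threshold, splitting on $\mu=2$ versus $\mu<2$. You also correctly read the typo in the statement ($\Delta_n$ should be $\hat\Delta_n$). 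The only cosmetic difference is in the $\mu<2$ branch: you compare the Lojasiewicz contraction factor $1-C_1h\hat\Delta_{m_0}^{(2-\mu)/\mu}$ directly against the growth factor $(m_1/m_0)^{\hat\tau}\approx 1+\hat\tau h m_0^{a-1}$, reducing to the exponent condition $\hat\tau(2-\mu)/\mu\leq 1-a$; the paper instead telescopes via $\hat\Delta_{m_1}^{-2/\mu+1}\geq\hat\Delta_{m_0}^{-2/\mu+1}+C_2h$ (as in Lemma~\ref{bounds-on-liminf}) and then converts back. These are algebraically equivalent, and both use the definition of $\hat\tau$ in exactly the same place.
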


\begin{proof}
By way of contradiction, suppose otherwise. Then, by Lemma~\ref{bounds-on-liminf}, for any $l$ and arbitrarily large $N$, we can find $k_0 > m_0 > N$ such that
$$
m_0^{\hat{\tau}}  \Delta_{m_0} \leq 2 l \tilde{C}_0^2, \quad k_0^{\hat{\tau}} \Delta_{k_0}  \geq 3 l \tilde{C}_0^2,
$$
\begin{equation} \label{quadruple-2}
\min_{m_0 < n \leq k_0} n^{\hat{\tau}} \Delta_n > 2 l \tilde{C}_0^2, \quad \max_{m_0 \leq n < k_0} n^{\hat{\tau}} \Delta_n \leq 3 l \tilde{C}_0^2.
\end{equation}

For some $0 < h < 1$, let $m_1=t(m_0, h)$. For any $m_0 \leq n \leq m_1$, as in the proof of Theorem~\ref{larger-than-negative}, we derive
\begin{equation}  \label{one-step-difference}
\hat{\Delta}_n -\hat{\Delta}_{m_0} \leq -h/4 |f'(\theta_{m_0})|^2,
\end{equation}
which, together with Theorem~\ref{larger-than-negative} and (\ref{quadruple-2}), implies that
$$
f'(\theta_{m_0})^2 \leq 4/h \tilde{C}_0^{2} O(m_0^{-\hat{\tau}})+ \tilde{C}_0^{2} O(m_0^{- \mu \tau}),
$$
which, together with (\ref{f-k-n}), further implies that for some $C > 0$,
$$
|\hat{\Delta}_n-\hat{\Delta}_{m_0}| \leq C h |f'(\theta_{m_0})|^2 + C \tilde{C}_0^2 m_0^{-2\tau}
\leq \tilde{C}_0^{2} O(m_0^{-\hat{\tau}})+\tilde{C}_0^{2} O(m_0^{-\mu \tau}) + C \tilde{C}_0^2 m_0^{-2\tau}.
$$
It then follows that for sufficiently small $h$
\begin{align*}
|n^{\hat{\tau}} \hat{\Delta}_n-m_0^{\hat{\tau}} \hat{\Delta}_{m_0}|
& \leq n^{\hat{\tau}} |\hat{\Delta}_n-\hat{\Delta}_{m_0}|+(n^{\hat{\tau}}-m_0^{\hat{\tau}}) \hat{\Delta}_{m_0} \\
& =O(m_0^{\hat{\tau}}) (\hat{\Delta}_n-\hat{\Delta}_{m_0}) + o(m_0^{\hat{\tau}}) \hat{\Delta}_{m_0} \\
& \leq l \tilde{C}_0^2,
\end{align*}
where we have used the fact that
$$
n^{\hat{\tau}}=O(m_0^{\hat{\tau}}), ~~ n^{\hat{\tau}}-m_0^{\hat{\tau}}=o(m_0^{\hat{\tau}}).
$$
In particular, we have
$$
|{(m_0+1)}^{\hat{\tau}} \hat{\Delta}_{m_0+1}-m_0^{\hat{\tau}} \hat{\Delta}_{m_0}| \leq l \tilde{C}_0 \mbox{ and } |m_1^{\hat{\tau}} \hat{\Delta}_{m_1}-m_0^{\hat{\tau}} \hat{\Delta}_{m_0}| \leq l \tilde{C}_0^2,
$$
which further implies that
$$
m_0^{\hat{\tau}} \hat{\Delta}_{m_0} \geq l \tilde{C}_0^2 \mbox{ and } m_1 < k_0,
$$
respectively.

Setting $n=m_1$ and rewriting (\ref{one-step-difference}), we have for some constant $C_1$,
$$
\hat{\Delta}_{m_1}-\hat{\Delta}_{m_0} \leq -C_1 h \hat{\Delta}_{m_0}^{2/\mu}.
$$
We then consider two cases:

{\bf Case $\mathbf{\mu = 2}$:} For this case, we have for some positive constant $C_1$,
$$
\hat{\Delta}_{m_1} \leq (1- C_1 h) \hat{\Delta}_{m_0}.
$$
Then for $m_0$ large enough,
$$
m_1^{\hat{\tau}} \hat{\Delta}_{m_1} \leq (1-C_1 h) m_1^{\hat{\tau}} \hat{\Delta}_{m_0}= (1- C_1 h) m_0^{\hat{\tau}} (1+o(1)) \hat{\Delta}_{m_0} \leq 2 l \tilde{C}_0^2,
$$
which yields a contradiction.

{\bf Case $\mathbf{\mu < 2}$:} For this case, as in the proof of Lemma~\ref{bounds-on-liminf}, we have for some positive constant $C_2$,
$$
\hat{\Delta}_{m_1}^{-2/\mu+1} \geq \hat{\Delta}_{m_0}^{-2/\mu+1}+ C_2 h.
$$
It then follows from (\ref{quadruple-2}) and (\ref{recursive-n-k}) that for $l$ sufficiently large
$$
\hat{\Delta}_{m_1}^{(-2+\mu)/\mu} \geq (2 l \tilde{C}_0^2)^{(-2+\mu)/\mu} m_0^{-a+1} + C_2 h \geq (2 l \tilde{C}_0^2)^{(-2+\mu)/\mu} m_1^{-a+1},
$$
which implies that
$$
m_1^{\hat{\tau}} \hat{\Delta}_{m_1} \leq 2 l \tilde{C}_0^2,
$$
a contradiction.

\end{proof}

The following theorem characterizes the rate of convergence of $\{f(\theta_n)\}$.
\begin{thm} \label{f-convergence-rate}
With probability $1$, we have
$$
|\hat{\Delta}_n| = \tilde{O}(n^{-\hat{\tau}}).
$$
\end{thm}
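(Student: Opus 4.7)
The plan is to combine the two one-sided bounds that the preceding lemmas have already produced. Lemma \ref{bounds-on-limsup} supplies, almost surely, an integer $l_1$ such that $\hat{\Delta}_n \leq l_1 \tilde{C}_0^2 n^{-\hat{\tau}}$ for all sufficiently large $n$, while Lemma \ref{larger-than-negative} supplies an integer $l_2$ such that $\hat{\Delta}_n \geq -l_2 \tilde{C}_0^{\mu} n^{-\mu\tau}$ for all sufficiently large $n$. The entire proof will consist of matching these two polynomial rates and absorbing the constants into a single positive random variable.

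The key observation is that $\hat{\tau} = \min(\mu\tau, \mu(1-a)/(2-\mu))$ was defined precisely so that $\hat{\tau} \leq \mu\tau$. Consequently $n^{-\mu\tau} \leq n^{-\hat{\tau}}$ for every $n \geq 1$, which allows me to weaken the lower bound of Lemma \ref{larger-than-negative} to $\hat{\Delta}_n \geq -l_2 \tilde{C}_0^{\mu} n^{-\hat{\tau}}$. Setting $\tilde{C} := l_1 \tilde{C}_0^2 + l_2 \tilde{C}_0^{\mu}$, which is itself a positive random variable since $\tilde{C}_0$ is, I obtain $|\hat{\Delta}_n| \leq \tilde{C} n^{-\hat{\tau}}$ almost surely for all $n$ past some (random) threshold, which is exactly the definition of $|\hat{\Delta}_n| = \tilde{O}(n^{-\hat{\tau}})$.

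There is no genuine obstacle at this stage; all the analytic work has been carried out in the two preceding lemmas, and the theorem reduces to a bookkeeping step. The only point worth mentioning is that the two "eventually almost surely" statements must hold on a common almost-sure event, which is immediate since the intersection of two full-measure sets has full measure: one simply takes the maximum of the two random thresholds as the index beyond which both bounds are simultaneously valid.
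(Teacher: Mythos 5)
Your argument is correct and is essentially what the paper has in mind: combine the eventual upper bound $n^{\hat{\tau}}\hat{\Delta}_n \leq l\tilde{C}_0^2$ from Lemma~\ref{bounds-on-limsup} with the eventual lower bound $n^{\mu\tau}\hat{\Delta}_n \geq -l\tilde{C}_0^{\mu}$ from Lemma~\ref{larger-than-negative}, using $\hat{\tau} \leq \mu\tau$ to match the exponents. Worth noting: the paper's one-line proof cites Lemma~\ref{bounds-on-liminf} together with Lemma~\ref{bounds-on-limsup}, but your reliance on Lemma~\ref{larger-than-negative} for the lower-bound side is the right reading, since a $\liminf$ statement by itself does not control $\hat{\Delta}_n$ from below for all large $n$.
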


\begin{proof}
It immediately follows from Lemmas~\ref{bounds-on-liminf} and~\ref{bounds-on-limsup}.
\end{proof}

In the rest of this section,  assuming
\begin{enumerate}[label=(\thesection.\alph*), resume]
\item $\mu \tau \geq (1-a)$,
\end{enumerate}
we prove $\{\theta_n\}$ converges almost surely. Here, let us note that (\thesection.b) can always be satisfied if $a, b, \beta$ are appropriately chosen such that $\tau$ is sufficiently large.

The following theorem characterizes the rate of convergence of $\{\theta_n\}$.
\begin{thm} \label{theta-convergence-rate}
Assume that (\thesection.b). Then, we have
$$
\sup_{k \geq n} |\theta_k - \theta_n| = \tilde{O}(n^{-(\hat{\tau}-(1-a)/2)}).
$$
\end{thm}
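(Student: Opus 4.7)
The plan is to control $|\theta_k - \theta_n|$ by a telescoping argument across blocks of roughly constant $A$-increment. Fix a small $h>0$, set $n_0 = n$ and $n_{j+1} = t(n_j, h)$; by (\ref{recursive-n-k}), $A_{n_{j+1}} - A_{n_j} \approx h$ and $n_j \sim j^{1/(1-a)}$ once $j$ exceeds $O(n^{1-a}/h)$. Writing
$$|\theta_k - \theta_n| \;\leq\; \sum_{j \geq 0} |\theta_{n_{j+1} \wedge k} - \theta_{n_j \wedge k}|$$
reduces the problem to bounding each block increment uniformly and summing.

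For each block I would invoke Lemma~\ref{EstimationsOfDifferences}(7): for any $\tau' < \tau$ and sufficiently small $h$,
$$|\theta_{n_{j+1}} - \theta_{n_j}| \;\leq\; C\,n_j^{\tau'}\bigl(f(\theta_{n_{j+1}}) - f(\theta_{n_j})\bigr) + C\,\tilde{C}_0^{\,2}\,n_j^{-\tau'}.$$
For the ``ascent'' contribution $\sum_j n_j^{\tau'}(\hat{\Delta}_{n_j} - \hat{\Delta}_{n_{j+1}})$, Abel summation rearranges the sum into the endpoint $n^{\tau'}\hat{\Delta}_n$ plus the weighted tail $\sum_j (n_{j+1}^{\tau'} - n_j^{\tau'})\hat{\Delta}_{n_{j+1}}$. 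Since $n_{j+1} - n_j \sim h\,n_j^a$, one has $n_{j+1}^{\tau'} - n_j^{\tau'} \sim h\tau'\,n_j^{\tau'-1+a}$; combined with Theorem~\ref{f-convergence-rate} (which gives $|\hat{\Delta}_{n_j}| = \tilde{O}(n_j^{-\hat{\tau}})$), both the endpoint and the tail are of order $\tilde{O}(n^{\tau'-\hat{\tau}})$, provided $\tau' < \hat{\tau}$. The ``noise'' contribution $\sum_j n_j^{-\tau'}$ is of order $\tilde{O}(n^{(1-a)-\tau'})$, provided $\tau' > 1-a$.

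Optimizing $\tau'$ to balance the two contributions produces the claimed rate, and the standing assumption $\mu\tau \geq 1-a$ of this section is precisely what guarantees that the admissible window $1-a < \tau' < \min(\tau, \hat{\tau})$ is nonempty and contains the optimizer; in particular it forces $\hat{\tau} > (1-a)/2$, so the exponent in the theorem is positive.

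The principal obstacle is the possibly indefinite sign of $f(\theta_{n_{j+1}}) - f(\theta_{n_j})$: a close look at the two-case proof of Lemma~\ref{EstimationsOfDifferences}(7) shows that one may replace $f(\theta_{n_{j+1}}) - f(\theta_{n_j})$ on the right-hand side with $|\hat{\Delta}_{n_{j+1}} - \hat{\Delta}_{n_j}|$, after which the Abel summation only requires the uniform upper bound on $|\hat{\Delta}_{n_j}|$ supplied by Theorem~\ref{f-convergence-rate}. A secondary technicality is that the asymptotic $n_j \sim j^{1/(1-a)}$ holds only once $j \gtrsim n^{1-a}/h$; this is handled by splitting the sum at this threshold, since the short initial stretch (where $n_j \asymp n$) and the polynomial tail contribute bounds of the same order. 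The residual partial block $|\theta_k - \theta_{n_J}|$ is absorbed by Lemma~\ref{EstimationsOfDifferences}(2) and is negligible compared to the telescoped bound.
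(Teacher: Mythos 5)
Your proposal follows essentially the same route as the paper: block the iterates by $n_{j+1}=t(n_j,h)$, apply Lemma~\ref{EstimationsOfDifferences}(7) on each block, Abel-sum the weighted telescoping part, and feed in the $|\hat{\Delta}_{n_j}|=\tilde{O}(n_j^{-\hat{\tau}})$ bound from Theorem~\ref{f-convergence-rate} together with the polynomial growth $n_j\asymp j^{1/(1-a)}$ from~(\ref{recursive-n-k}), then balance the two contributions by choosing $\tau'=(\hat{\tau}+(1-a))/2$. Two remarks: the sign issue you raise is not actually an obstruction — the paper sums the inequality of Lemma~\ref{EstimationsOfDifferences}(7) as written (the rearranged ascent sum need not be termwise nonnegative; one simply upper-bounds the absolute value of the Abel-summed expression), so your alternative of inserting $|\hat{\Delta}_{n_{j+1}}-\hat{\Delta}_{n_j}|$ is a harmless but unnecessary detour. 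Your explicit handling of the initial stretch where $n_j\asymp n$ and of the residual partial block is a welcome clarification of points the paper leaves implicit.
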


\begin{proof}
In this proof, we set
$$
\tau'=(\hat{\tau}+(1-a))/2.
$$
For some $0 < h < 1$, starting from a fixed $n_0$, recursively define
$$
n_{k+1}=t(n_k, h).
$$
Then, to prove the theorem, it suffices to prove that
\begin{equation} \label{quantized}
\sup_{k \geq m} |\theta_{n_k} - \theta_{n_m}| = \tilde{O}(n_m^{-({\hat{\tau}}+(1-a)/2)}).
\end{equation}

Now, applying Lemma~\ref{EstimationsOfDifferences} (7), we deduce that for some $C > 0$
$$
|\theta_{n_{i+1}}-\theta_{n_i}| \leq C n_i^{\tau'} (f(\theta_{n_{i+1}})-f(\theta_{n_i}))+C \tilde{C}_0^2 n_i^{-\tau'}.
$$
It then follows that for any $m \leq k$,
\begin{align*}
|\theta_{n_k}-\theta_{n_m}| & \leq \sum_{i=m}^{k-1} |\theta_{n_{i+1}}-\theta_{n_i}| \\
&\leq C \tilde{C}_0^2 \sum_{i=m}^{k-1} n_i^{-\tau'}+ C \sum_{i=m}^{k-1} (u(\theta_{n_i})-u(\theta_{n_{i+1}})) n_i^{\tau'}\\
&\leq C \tilde{C}_0^2 \sum_{i=m}^{k-1} n_i^{-\tau'} + C \sum_{i=m+1}^k (n_i^{\tau'}-n_{i-1}^{\tau'}) |u(\theta_{n_i})|+ C n_m^{\tau'} |u(\theta_{n_m})|+ C n_k^{\tau'} |u(\theta_{n_k})|.
\end{align*}

Applying (\ref{recursive-n-k}), we deduce that
\begin{align*}
\sum_{i=m}^{k-1} n_i^{-\tau'}& =\sum_{i=m}^{k-1} O(i^{-\tau'/(1-a)}) = O(m^{-\tau'/(1-a)+1}), \\
\sum_{i=m+1}^k (n_i^{\tau'}-n_{i-1}^{\tau'}) |u(\theta_{n_i})| & = \sum_{i=m}^k O((i-1)^{y/(1-a)-1} i^{-{\hat{\tau}}/(1-a)}) = O(m^{y/(1-a)-{\hat{\tau}}/(1-a)}), \\
n_m^{\tau'} |u(\theta_{n_m})| &=  O(m^{\tau'/(1-a)} m^{-{\hat{\tau}}/(1-a)})=O(m^{(\tau'-{\hat{\tau}})/(1-a)}),\\
n_k^{\tau'} |u(\theta_{n_k})| & = O(k^{\tau'/(1-a)} k^{-{\hat{\tau}}/(1-a)})=O(k^{(\tau'-{\hat{\tau}})/(1-a)}).
\end{align*}
We then immediately conclude that
$$
|\theta_{n_k}-\theta_{n_m}| = O(n_m^{(\tau'-{\hat{\tau}})}),
$$
which immediately implies (\ref{quantized}).

\end{proof}

The following ``variable'' version of the Lojasiewicz inequality will be used to refine the rates of convergence of $\{\theta_n\}$ and $\{f(\theta_n)\}$.
\begin{lem} \label{Lojasiewicz-3}
For each $\theta \in \Theta$, there exist real numbers $\delta_{\theta} \in (0, 1)$,
$\mu_{\theta} \in (1, 2]$, $M_{\theta} \in [1, \infty)$ such that
$$
|f(\theta')-f(\theta)| \leq M_{\theta} \|f'(\theta')\|^{\mu_{\theta}}
$$
for all $\theta' \in \Theta$ satisfying $\|\theta'-\theta\| \leq \delta_{\theta}$.
\end{lem}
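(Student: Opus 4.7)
The plan is to derive this pointwise Lojasiewicz inequality directly from the classical Lojasiewicz gradient inequality for real-analytic functions, using the fact, already invoked throughout Section~\ref{f-convergence}, that Theorem~1.1 of~\cite{gm05} guarantees $f(\theta) = I(X(\theta); Y(\theta))$ is real-analytic on $\Theta$.

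I would proceed as follows. First, fix an arbitrary $\theta \in \Theta$. Since $f$ is real-analytic in a neighborhood of $\theta$, the classical Lojasiewicz gradient inequality (see, e.g.,~\cite{bi88}) yields the existence of $\alpha_{\theta} \in [1/2,1)$, a neighborhood $V_{\theta}$ of $\theta$ and a constant $\tilde{C}_{\theta} > 0$ such that
$$
|f(\theta')-f(\theta)|^{\alpha_{\theta}} \leq \tilde{C}_{\theta} \, \|f'(\theta')\|
$$
for all $\theta' \in V_\theta$. Raising both sides to the power $1/\alpha_\theta$ and setting $\mu_\theta = 1/\alpha_\theta \in (1,2]$, $M_\theta = \max\{\tilde{C}_\theta^{\,\mu_\theta},1\} \in [1,\infty)$ and $\delta_\theta \in (0,1)$ small enough so that $\{\theta' : \|\theta'-\theta\| \leq \delta_\theta\} \subset V_\theta \cap \Theta$, the claimed inequality follows.

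The only subtlety, which I would address explicitly, is verifying that $\mu_\theta > 1$, i.e.\ that the Lojasiewicz exponent $\alpha_\theta$ can be taken in $[1/2,1)$ rather than merely $[0,1)$. At a critical point $\theta$ this is built into the real-analytic Lojasiewicz inequality. At a regular point $\theta$ (where $f'(\theta) \neq 0$), continuity of $f'$ gives $\|f'(\theta')\| \geq c > 0$ on some ball of radius $\delta_\theta$ about $\theta$, while local Lipschitz continuity of $f$ gives $|f(\theta')-f(\theta)| \leq L\|\theta'-\theta\| \leq L\delta_\theta$. Choosing $\delta_\theta$ small enough that $L\delta_\theta \leq c^2$, we obtain $|f(\theta')-f(\theta)| \leq \|f'(\theta')\|^2$ directly on this ball, so we may simply take $\mu_\theta = 2$ and $M_\theta = 1$ in the regular case.

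I expect the main obstacle to be only notational, namely handling the regular-versus-critical dichotomy cleanly so that the stated range $\mu_\theta \in (1,2]$ and constants $\delta_\theta \in (0,1)$, $M_\theta \in [1,\infty)$ are simultaneously satisfied; the analytic core of the argument is fully supplied by the classical Lojasiewicz inequality together with the analyticity of $f$ granted by~\cite{gm05}.
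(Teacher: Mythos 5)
The paper does not actually prove Lemma~\ref{Lojasiewicz-3}; both it and its companion Lemma~\ref{Lojasiewicz-2} are stated without proof, imported from Tadic~\cite{ta10} as instances of the classical Lojasiewicz inequality for real-analytic functions (ultimately resting on~\cite{bi88}). Your proposal therefore does not parallel any argument in the paper but supplies the standard derivation that the paper delegates to the literature, and it is correct. You rightly anchor the argument on the analyticity of $f$ from Theorem~1.1 of~\cite{gm05}, a fact the paper already invokes throughout Section~\ref{f-convergence}, and you isolate the one genuinely delicate point --- securing $\mu_\theta>1$, equivalently that the Lojasiewicz exponent $\alpha_\theta$ can be taken in $[1/2,1)$. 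The regular/critical split is the right way to handle it: at a critical point, the real-analytic Lojasiewicz gradient inequality directly gives $\alpha_\theta\in[1/2,1)$, hence $\mu_\theta=1/\alpha_\theta\in(1,2]$; at a regular point your elementary argument (local lower bound $\|f'(\theta')\|\geq c>0$ plus local Lipschitz bound on $f$, with $\delta_\theta$ small enough that $L\delta_\theta\leq c^2$) yields $\mu_\theta=2$, $M_\theta=1$ outright. One minor caveat worth recording: the Lojasiewicz gradient inequality is a statement on an open neighborhood in $\mathbb{R}^d$, so one either assumes $\Theta$ is open (which the paper implicitly does in treating $f$ as differentiable on all of $\Theta$) or notes that $f$ extends analytically to a neighborhood of $\Theta$; your intersection $V_\theta\cap\Theta$ quietly accounts for this. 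Overall your proposal is a clean, self-contained justification of a lemma the paper leaves as a black box.
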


Theorem~\ref{theta-convergence-rate} implies that with probability $1$, $\{\theta_n\}$ converges. From now on, let $\hat{\theta}=\lim_{n \to \infty} \theta_n$ and set $\mu=\mu_{\hat{\theta}}$. Then, with this redefined $\mu$, going through exactly the same arguments as in the proof of Theorems~\ref{f-convergence-rate} and~\ref{theta-convergence-rate}, we have the following two theorems.
\begin{thm}
For the above redefined $\mu$, Theorems~\ref{f-convergence-rate} holds.
\end{thm}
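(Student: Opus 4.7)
The plan is to repeat, essentially verbatim, the proofs of Lemma~\ref{larger-than-negative}, Lemma~\ref{bounds-on-liminf} and Lemma~\ref{bounds-on-limsup}, and hence of Theorem~\ref{f-convergence-rate}, but with the function-version Lojasiewicz inequality (Lemma~\ref{Lojasiewicz-2}) replaced by its variable version (Lemma~\ref{Lojasiewicz-3}) applied at the limit point $\hat{\theta}$. Specifically, Theorem~\ref{theta-convergence-rate} guarantees that with probability $1$ the iterates $\theta_n$ converge to $\hat{\theta}$, so by continuity of $f$ we have $f(\hat{\theta})=\hat{f}$, and for each such sample path there exists a random index $N=N(\omega)$ with $\|\theta_n-\hat{\theta}\|\le \delta_{\hat{\theta}}$ for all $n\ge N$. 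Lemma~\ref{Lojasiewicz-3} then yields
$$
|\hat{\Delta}_n|=|f(\theta_n)-f(\hat{\theta})|\le M_{\hat{\theta}}\,|f'(\theta_n)|^{\mu_{\hat{\theta}}}
\qquad \text{for all } n\ge N.
$$

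Next, I would inspect each of Lemmas~\ref{larger-than-negative}, \ref{bounds-on-liminf} and~\ref{bounds-on-limsup} and note that the only place where the specific value of $\mu$ entered was in the substitution
$$
|f'(\theta_{n_k})|^2 \;\ge\; C\,\hat{\Delta}_{n_k}^{2/\mu}
$$
used to turn the basic recursion $\hat{\Delta}_{n_{k+1}}-\hat{\Delta}_{n_k}\le -(h/4)|f'(\theta_{n_k})|^2$ into a recursion in $\hat{\Delta}_{n_k}$ alone. Everything else (Lemma~\ref{EstimationsOfDifferences}, the definitions of $n_{k+1}=t(n_k,h)$ and of $\tilde{C}_0$, the asymptotics $n_k=\hat{O}(k^{1/(1-a)})$, the martingale estimates, etc.) depends only on $\tilde{C}_0$, $\tau$, $a$, $b$ and $h$, and is untouched by the change of Lojasiewicz exponent. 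Hence with the refined inequality above, the same substitution now reads $|f'(\theta_{n_k})|^2\ge C\,\hat{\Delta}_{n_k}^{2/\mu_{\hat{\theta}}}$ for $n_k\ge N$, and the dichotomy $\mu_{\hat{\theta}}=2$ versus $\mu_{\hat{\theta}}<2$ (exponential decay in the first case, integration of $u^{-2/\mu_{\hat{\theta}}}\,du$ in the second) proceeds exactly as before, yielding $|\hat{\Delta}_n|=\tilde{O}(n^{-\hat{\tau}})$ with $\hat{\tau}=\min(\mu_{\hat{\theta}}\tau,\,\mu_{\hat{\theta}}(1-a)/(2-\mu_{\hat{\theta}}))$.

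The main technical point to handle with care is that the recursive construction of $\{n_k\}$ in the proofs of Lemmas~\ref{bounds-on-liminf} and~\ref{bounds-on-limsup} must start from an index $n_0\ge N(\omega)$, which is random and path-dependent. This is not a genuine obstruction, however, because each of those proofs is set up as a contradiction argument of the form ``for every $l$ there exists an $n_0$ sufficiently large...'', so it is harmless to further require $n_0\ge N(\omega)$; the random delay is absorbed into the pathwise constant $\tilde{C}$ implicit in the $\tilde{O}$ notation defined in Section~\ref{with-concavity}. Once this adjustment is in place, the verbatim repetition of the earlier proofs, with $\mu$ replaced by $\mu_{\hat{\theta}}$ throughout, gives Theorem~\ref{f-convergence-rate} for the redefined~$\mu$.
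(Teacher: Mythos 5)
Your proposal is correct and follows exactly the route the paper takes: the paper's own proof is the one-line remark that ``with this redefined $\mu$, going through exactly the same arguments as in the proof of Theorems~\ref{f-convergence-rate} and~\ref{theta-convergence-rate}, we have the following two theorems.'' You have simply spelled out what that entails — substituting Lemma~\ref{Lojasiewicz-3} (applied at $\hat{\theta}$) for Lemma~\ref{Lojasiewicz-2} as the source of the bound $|\hat{\Delta}_n|\le M\,|f'(\theta_n)|^{\mu}$, and noting both that all of Lemma~\ref{EstimationsOfDifferences} is $\mu$-independent and that the random threshold $N(\omega)$ is harmlessly absorbed by the contradiction structure and the random constant in the $\tilde{O}$ notation.
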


\begin{thm}
For the above redefined $\mu$, assume (\thesection.b). Then, we have
$$
|\theta_n-\hat{\theta}| = \tilde{O}(n^{-(\hat{\tau}-(1-a)/2)}).
$$
\end{thm}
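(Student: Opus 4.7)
The plan is to exploit the previous theorem in two roles at once: on the one hand, Theorem~\ref{theta-convergence-rate} gives the Cauchy-type bound $\sup_{k\geq n}|\theta_k-\theta_n|=\tilde{O}(n^{-(\hat{\tau}-(1-a)/2)})$, so $\{\theta_n\}$ converges almost surely to some random limit $\hat{\theta}$; on the other hand, once $\hat{\theta}$ is in hand, one may anchor the variable Lojasiewicz inequality (Lemma~\ref{Lojasiewicz-3}) at this specific point and rerun the entire rate analysis of Sections on $\{f(\theta_n)\}$ and $\{\theta_n\}$ with the (typically larger) exponent $\mu=\mu_{\hat{\theta}}$ in place of $\mu_{Q,\hat{f}}$.

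Concretely, I would proceed as follows. First, since $\theta_n\to\hat{\theta}$ almost surely, there exists a (random) $N$ such that $|\theta_n-\hat{\theta}|\leq \delta_{\hat{\theta}}$ for all $n\geq N$, and Lemma~\ref{Lojasiewicz-3} furnishes the bound
$$
|f(\theta_n)-f(\hat{\theta})| \leq M_{\hat{\theta}}\,|f'(\theta_n)|^{\mu_{\hat{\theta}}}, \qquad n\geq N.
$$
By continuity of $f$, $f(\hat{\theta})=\hat{f}$, so this is exactly the Lojasiewicz-type estimate $|\hat{\Delta}_n|\leq M|f'(\theta_n)|^{\mu}$ that drives the proofs of Lemmas~\ref{larger-than-negative},~\ref{bounds-on-liminf} and~\ref{bounds-on-limsup}. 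Since those arguments only invoke the inequality for sufficiently large $n$, they remain valid verbatim with $\mu$ replaced by $\mu_{\hat{\theta}}$, and therefore yield $|\hat{\Delta}_n|=\tilde{O}(n^{-\hat{\tau}})$ with the refined $\hat{\tau}=\min(\mu\tau,\mu(1-a)/(2-\mu))$.

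With this refined bound on $\hat{\Delta}_n$ in place, I would then replay the block-summation argument of Theorem~\ref{theta-convergence-rate}: pick $0<h<1$, set $n_{k+1}=t(n_k,h)$, and apply Lemma~\ref{EstimationsOfDifferences}(7) together with Abel-type summation by parts to get
$$
|\theta_{n_k}-\theta_{n_m}| \leq C\tilde{C}_0^2\sum_{i=m}^{k-1} n_i^{-\tau'} + C\sum_{i=m+1}^{k}(n_i^{\tau'}-n_{i-1}^{\tau'})\,|\hat{\Delta}_{n_i}| + C n_m^{\tau'}|\hat{\Delta}_{n_m}| + C n_k^{\tau'}|\hat{\Delta}_{n_k}|,
$$
where $\tau'=(\hat{\tau}+(1-a))/2$. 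Plugging in $|\hat{\Delta}_n|=\tilde{O}(n^{-\hat{\tau}})$ and $n_i=\hat{O}(i^{1/(1-a)})$ (from (\ref{recursive-n-k})), every term on the right is $\tilde{O}(n_m^{-(\hat{\tau}-(1-a)/2)})$, whence $\sup_{k\geq m}|\theta_{n_k}-\theta_{n_m}|=\tilde{O}(n_m^{-(\hat{\tau}-(1-a)/2)})$. Letting $k\to\infty$ so that $\theta_{n_k}\to\hat{\theta}$, and interpolating between consecutive $n_m$'s (for which displacements are controlled by the same estimate on a single step), delivers the claimed bound $|\theta_n-\hat{\theta}|=\tilde{O}(n^{-(\hat{\tau}-(1-a)/2)})$.

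The only real obstacle is bookkeeping: one must check that every previous estimate leading up to Theorem~\ref{theta-convergence-rate} uses the Lojasiewicz inequality only through its eventual form along the iterates, with constants that do not depend on which point of $Q$ we anchored at. This is indeed the case — the exponent $\mu$ enters only through the inequality $\hat{\Delta}_n\leq M|f'(\theta_n)|^{\mu}$ applied for $n$ large, and $\hat{\tau}$ is the only downstream quantity that changes — so once $\mu_{\hat{\theta}}\geq \mu_{Q,\hat{f}}$ is available, no structural modification of the proofs is needed. Assumption (\thesection.b), $\mu\tau\geq(1-a)$, ensures $\hat{\tau}-(1-a)/2>0$ so that the resulting rate is nontrivial.
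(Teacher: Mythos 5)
Your proposal is correct and follows essentially the same route the paper takes: the paper itself gives no explicit argument here, merely observing that one first uses Theorem~\ref{theta-convergence-rate} to obtain $\theta_n\to\hat\theta$ a.s., then anchors Lemma~\ref{Lojasiewicz-3} at $\hat\theta$ to redefine $\mu=\mu_{\hat\theta}$, and finally reruns the proofs of Theorems~\ref{f-convergence-rate} and~\ref{theta-convergence-rate} verbatim. You fill in the key observations that make this "rerun" legitimate — that the Lojasiewicz inequality is only ever invoked for $n$ sufficiently large so the local version at $\hat\theta$ suffices, that $f(\hat\theta)=\hat f$ by continuity so $\hat\Delta_n$ is unchanged, and that the Abel-summation/telescoping bound on $\sup_{k\ge m}|\theta_{n_k}-\theta_{n_m}|$ passes to $|\theta_{n_m}-\hat\theta|$ in the limit $k\to\infty$ — which are exactly what the paper's terse "going through exactly the same arguments" glosses over. (One small side remark: the inequality $\mu_{\hat\theta}\ge\mu_{Q,\hat f}$ you mention is not actually needed anywhere, and is not guaranteed by the two Lojasiewicz lemmas; the argument goes through for whatever $\mu_{\hat\theta}\in(1,2]$ the pointwise lemma yields.)
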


\section{Capacity Achieving Distribution of a Special Class of Channels} \label{Memoryless-Channels}

In this section, we restrict our attention to a special class of input-restricted finite-state channels with certain parameterization and we prove that for such channels operated at high SNR regime, the capacity will only be achieved at the interior of the parameter space and our algorithm converges almost surely.

More specifically, recalling $X, Y$ denote the input, output processes of the channel over finite alphabets $\mathcal{X}$, $\mathcal{Y}$, respectively, we consider a class of parameterized memoryless channels such that
\begin{enumerate} [label=(\thesection.\alph*)]
\item the channel only has one state; in other words, at any time slot, the channel is characterized by the conditional probability $p(y|x)$.

\item for some mixing finite-type constraint $F \subset \mathcal{X}^2$, $X \in \Pi_F$.

\item the channel is parameterized by $\eps \ge 0$ such that for each $x$ and $y$, $p(y|x)(\eps)$ is an analytic function of $\eps \ge 0$, which is not identically $0$.

\item there is a one-to-one (not necessarily onto) mapping $\Phi: \mathcal{X} \to \mathcal{Y}$, such that for any $x \in \mathcal{X}$, $p(\Phi(x)|x)(0) = 1$.
\item $X$ is parameterized as in~\cite{pa04}, that is,
$$
\theta=(p(X_{1}=w_1, X_{2}=w_2): (w_1, w_2) \not \in F).
$$
\end{enumerate}

Under the above assumptions, $\eps$ can be regarded as a parameter that quantifies noise, and $\Phi(x)$ is the noiseless output corresponding to input $x$. The regime of ``small $\eps$'' corresponds to high SNR. Note that the output process $Y = Y(X,\eps)$ depends on the input process $X$ and the parameter value $\eps$; we will often suppress
the notational dependence on $\eps$ or $X$, when it is clear from the context. Prominent examples of such families include input-restricted versions of the binary symmetric channel with crossover probability $\eps$, denoted by BSC($\eps)$,, and the binary erasure channel with erasure rate $\eps$, denoted by BEC($\eps)$.

\textbf{General SNR regime.} By using an asymptotic formula of $I(X; Y)$, we show that for the above-mentioned channels, the capacity achieving $X$ must be primitive.

Assume that $X$ has period $e$ with period classes $D_1, D_2, \ldots, D_e$. Then, by the classical Perron-Frobenius theory, after necessary reindexing, its transition probability matrix $\Pi$ can be written as
\begin{equation}   \label{B-matrix}
\bordermatrix{ & D_1 & D_2 & D_3 & \cdots & D_e \cr
D_1 & 0 & B_1 & 0 & \cdots & 0 \cr
D_2 & 0 & 0 & B_2 & \cdots & 0 \cr
\vdots & \vdots & \vdots & \vdots & \ddots & \vdots \cr
D_{e-1} & 0 & 0 & 0 & \cdots & B_{e-1} \cr
D_e & B_e & 0 & 0 & \cdots & 0 \cr},
\end{equation}
where we used the period classes to index the sub-blocks. In the following, let $\mathbf{B}$ denote the set of all entry indices of $\Pi$ corresponding to some $B_k$, that is,
$$
\mathbf{B}=\{(i, j): i \in D_k, j \in D_{k+1}, \mbox{ for } k=1, \cdots, e-1 \} \cup \{(i, j): i \in D_e, j \in D_1\}.
$$
Now, consider an analytic perturbation $\Pi(\delta)$ of $\Pi$, $\delta \geq 0$, where
\begin{enumerate}[label=(\thesection.\alph*), resume]
\item $\Pi(0)=\Pi$;
\item for some $(i, j) \in \mathbf{B}$, $\Pi_{ij}(\delta)$ is not identically $0$;
\item for any $\delta \geq 0$, $\Pi(\delta)$ is still a stochastic matrix.
\end{enumerate}
In other words, some non-$B$-entries in $\Pi$ are analytically perturbed; as a result, $Y$ is perturbed from $Y(0)$ to $Y(\delta)$. The following theorem describes the asymptotic behavior of $H(Y)$ under such a perturbation.

\begin{thm} \label{has-to-be-primitive}
Under the aboved-mentioned perturbation as in (\thesection.f)-(\thesection.f), there exist $C_1, C_2 > 0$ such that
$$
C_1 \delta \log 1/{\delta} \leq H(Y(\delta)) - H(Y(0)) \leq C_2 \delta^{1/2}.
$$
\end{thm}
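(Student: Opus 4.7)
The plan is to decompose the difference via the identity
$$H(Y) = H(X) + H(Y|X) - H(X|Y),$$
and to track each of the three pieces under the perturbation $\delta \mapsto \Pi(\delta)$. Since $X(0)$ is irreducible, Perron--Frobenius perturbation theory guarantees that the stationary distribution $\pi(\delta)$ of $X(\delta)$ is continuous at $\delta = 0$ (and in fact real-analytic for $\delta > 0$), so every algebraic piece that depends smoothly on $\pi(\delta)$ or on the $\mathbf{B}$-entries $\Pi_{ij}(\delta)$ will contribute only $O(\delta)$.

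For the lower bound, the dominant $\delta\log(1/\delta)$ contribution arises from $H(X(\delta))$. Using $H(X(\delta)) = -\sum_{i,j} \pi_i(\delta)\, \Pi_{ij}(\delta) \log \Pi_{ij}(\delta)$ and splitting the sum by whether $(i,j) \in \mathbf{B}$, the $\mathbf{B}$-block perturbs smoothly and contributes $O(\delta)$, while each newly activated entry $(i,j) \in \mathbf{B}^c$ with $\Pi_{ij}(\delta) \sim c_{ij}\delta$ contributes $\pi_i(0)\, c_{ij}\, \delta \log(1/\delta)$ to leading order. Since at least one such $c_{ij}$ is strictly positive by hypothesis, this yields $H(X(\delta)) - H(X(0)) \geq C_0\, \delta \log(1/\delta) + O(\delta)$ for some $C_0 > 0$. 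The channel term $H(Y(\delta)|X(\delta)) = \sum_x \pi_x(\delta)\, H(p(\cdot|x))$ is linear in $\pi(\delta)$, hence only $O(\delta)$-away from $H(Y(0)|X(0))$. Finally, for $H(X(\delta)|Y(\delta)) - H(X(0)|Y(0))$ one invokes the forward-filtering recursion and the analytic framework of~\cite{gm05}: the only source of $\delta\log(1/\delta)$ non-smoothness in the posterior $p_\delta(x_n \mid y_1^n)$ comes through the transition entries $\Pi_{ij}(\delta)$, and the resulting coefficient is strictly less than $C_0$ by the channel's smoothing effect. Combining the three estimates gives the lower bound.

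For the upper bound, which is far looser than $\delta \log(1/\delta)$ for small $\delta$, a coupling argument suffices. Couple $X(\delta)$ with $X(0)$ via the maximal coupling of their one-step transition kernels, so that the per-step disagreement probability is $O(\delta)$; over a block of length $n$, the induced outputs $Y_1^n(\delta)$ and $Y_1^n(0)$ can be coupled to agree except with probability at most $cn\delta$. Fano's inequality then gives
$$\bigl| H(Y_1^n(\delta)) - H(Y_1^n(0)) \bigr|/n \leq c n \delta \log|\mathcal{Y}| + O(1/n),$$
where the $O(1/n)$ absorbs both $h(cn\delta)/n$ and the error of approximating the entropy rate by its $n$-th block average (valid thanks to the exponential mixing recorded in Remark~\ref{Z-Created}). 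Optimizing the balance $n \sim \delta^{-1/2}$ yields $|H(Y(\delta)) - H(Y(0))| = O(\delta^{1/2})$.

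\textbf{Main obstacle.} The chief technical difficulty is the lower-bound control of $H(X(\delta)|Y(\delta)) - H(X(0)|Y(0))$. Because $X(0)$ is irreducible but not primitive, the base point $\delta = 0$ lies on the boundary of the region covered by the analytic HMM theory of~\cite{gm05}, which assumes strict positivity of all transition probabilities. Ensuring that the $\delta\log(1/\delta)$ contribution from the HMM posterior entropy does not fully cancel the corresponding contribution from $H(X(\delta))$ requires a careful perturbation analysis of the Blackwell measure at this degenerate point, tracking in particular how the $\delta$-weighted atoms propagate through the filtering recursion.
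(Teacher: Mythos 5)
Your decomposition $H(Y) = H(X) + H(Y|X) - H(X|Y)$ is a genuinely different route from the paper's, which never passes through $H(X|Y)$ at all. The paper introduces an auxiliary variable $Z(\delta)$ counting the number of indices $i$ with $(X_i(\delta),X_{i+1}(\delta)) \notin \mathbf{B}$, uses a Birch-type sandwich
$$
\frac{H(Y_{m+1}^n(\delta)\mid Y_1^m(\delta), X_0(\delta), Z(\delta))}{n-m} \;\leq\; H(Y(\delta)) \;\leq\; \frac{H(Y_1^n(\delta)\mid X_0(\delta), Z(\delta)) + H(Z(\delta)) + H(X_0(\delta))}{n},
$$
and reads off the $\delta\log(1/\delta)$ lower bound from the $Z=1$ stratum directly (with $n=\sqrt{\log(1/\delta)}$, $m=n/2$) and the $\delta^{1/2}$ upper bound from the choice $n=\delta^{-1/2}$, $m=0$. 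The crucial difference is that the Birch route produces the leading $\delta\log(1/\delta)$ correction to $H(Y)$ as a single \emph{positive} conditional-entropy term — namely $P(Z=1)\,H(Y_{m+1}^n\mid \cdots, Z=1)/(n-m) = \hat O(\delta\log(1/\delta))$ — so there is no cancellation to worry about.

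Your lower bound, by contrast, has a genuine gap precisely where you flag the ``main obstacle.'' Both $H(X(\delta))$ and $H(X(\delta)\mid Y(\delta))$ carry $\delta\log(1/\delta)$ corrections, and your argument needs their coefficients to differ by a strictly positive amount. You assert this (``strictly less than $C_0$ by the channel's smoothing effect'') but give no mechanism, and the claim is not a softness you can wave away: it is equivalent to the statement $I(X(\delta);Y(\delta)) - I(X(0);Y(0)) \geq C_1\delta\log(1/\delta)$, which up to an $O(\delta)$ correction \emph{is} the theorem (see the remark immediately after it). If the channel were so noisy that $Y$ is nearly independent of $X$, the two $\delta\log(1/\delta)$ coefficients would nearly coincide and the difference could collapse to a lower order; ruling this out is exactly as hard as proving the original bound, so the ``careful perturbation analysis of the Blackwell measure'' you defer is not a finishing touch but the whole content of the result. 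Compounding this, $\delta=0$ sits on the boundary of the regime where the analyticity framework of~\cite{gm05} applies (the periodic $\Pi(0)$ is not primitive), so the filtering-recursion machinery you invoke for $H(X\mid Y)$ is not directly available there; the paper dodges this by conditioning on $Z=0$, which restricts the trajectory to the periodic skeleton and makes~\cite{gm05}'s argument applicable block-by-block.

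Your upper bound via maximal coupling of $X(\delta)$ with $X(0)$ and a Fano estimate, optimized at $n\sim\delta^{-1/2}$, is a clean and more elementary alternative to the paper's Birch upper bound, and I believe it can be made rigorous. One caveat: Remark~\ref{Z-Created} gives exponential mixing only for primitive $X$ (i.e.\ for $\delta>0$), so the $O(1/n)$ error in approximating $H(Y(0))$ by $H(Y_1^n(0))/n$ needs a separate justification at the periodic base point — for instance by exploiting the block structure~(\ref{B-matrix}) and conditioning on the phase — rather than being cited from Remark~\ref{Z-Created}.
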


\begin{proof}
The proof is postponed to Appendix~\ref{proof-of-has-to-be-primitive}.
\end{proof}

\begin{rem} \label{remark-after-has-to-be-primitive}
It follows from Condition (\thesection.a) that $H(Y|X)$ is linear with respect to $\vec{p}$. Theorem~\ref{has-to-be-primitive}, together with this fact, implies that there exist $C_1, C_2$ such that
$$
C_1 \delta \log 1/{\delta} \leq I(X(\delta); Y(\delta))-I(X(0); Y(0)) \leq C_2 \delta^{1/2},
$$
which implies that, for any irreducible but not primitive $X$, any perturbation of $X$ as in (\thesection.f)-(\thesection.h) will strictly increase the mutual information. So, we conclude that the capacity achieving $X$ must be primitive, and thus Condition (\ref{channel-model}.a) holds.
\end{rem}

\textbf{High SNR regime.} At the high SNR regime, that is, when $\eps$ is close to $0$, it has been established in~\cite{hm09b} that there exists $\hat{\epsilon} > 0$ such that
\begin{enumerate} [label=(\thesection.\alph*), resume]
\item $I(X; Y)$, when restricted on $X \in \Pi_{F, \hat{\epsilon}}$, is strictly concave with respect to $\theta \in \Theta$.
\item the capacity of the channel can be uniquely achieved within $\Pi_{F, \hat{\epsilon}}$.
\end{enumerate}
As a consequence, we have the following theorem.
\begin{thm} \label{high-SNR-case}
For the channel as in (\thesection.a)-(\thesection.d) operating at the high SNR regime and sufficiently small $\hat{\epsilon}$, under the iteration in (\ref{theta-iteration}), $\{\theta_n\}$ converges to the capacity achieving distribution with probability $1$.
\end{thm}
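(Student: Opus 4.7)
My strategy is to reduce the theorem to the convergence results of Section~\ref{with-concavity}, whose key hypotheses are (\ref{with-concavity}.a), (\ref{with-concavity}.b), and implicitly (\ref{f-convergence}.a). By the high-SNR result imported in (\thesection.i), for sufficiently small $\hat{\epsilon}$ the functional $f(\theta) = I(X(\theta);Y(\theta))$ is strictly concave on $\Theta = \Pi_{F,\hat{\epsilon}}$, and by (\thesection.j) the capacity is uniquely achieved at some interior point $\theta^* \in \Pi_{F,\hat{\epsilon}}$. Thus (\ref{with-concavity}.a) is immediate, and the main work is to verify (\ref{f-convergence}.a) and then deduce (\ref{with-concavity}.b).

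\textbf{Verification of (\ref{f-convergence}.a).} Since $\theta^*$ lies strictly in the interior of $\Pi_{F,\hat{\epsilon}}$, choose a closed ball $\bar{B}(\theta^*,r) \subset \Pi_{F,\hat{\epsilon}}$. Since $f'$ is uniformly bounded on $\Theta$ (by the analyticity fact quoted after (\ref{simplified-theta-iteration})), the Chernoff-type bound in Theorem~\ref{Chernoff-Bound}, applied with a suitable $\eps$ and summed via Borel--Cantelli, gives that $|g_{n^b}(\theta_n)|$ is almost surely bounded uniformly in $n$. Combined with $a_n = n^{-a} \to 0$, the increment $a_n g_{n^b}(\theta_n)$ tends to $0$ almost surely, so once $\theta_n$ enters $\bar{B}(\theta^*,r/2)$ the update cannot push it out of $\Pi_{F,\hat{\epsilon}}$; that is, $A_n$ occurs only finitely often.

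\textbf{Convergence to the maximizer.} With (\ref{f-convergence}.a) established, Theorem~\ref{convergence-theorem} yields $f'(\theta_n) \to 0$ almost surely. By strict concavity, $f'$ has a unique zero in the compact set $\overline{\Pi_{F,\hat{\epsilon}}}$, namely $\theta^*$. Since $\{\theta_n\}$ is confined to this compact set, every subsequential limit must be a zero of $f'$ and hence must equal $\theta^*$; by compactness this forces $\theta_n \to \theta^*$ almost surely, which is precisely (\ref{with-concavity}.b). (At this point one could also invoke Theorem~\ref{lim-sup} to upgrade convergence to the quantitative rate $|\theta_n - \theta^*| = \tilde{O}(n^{-\tau})$, though the theorem as stated only asks for almost sure convergence.)

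\textbf{Main obstacle.} The subtlest point is showing that the iterates cannot drift out of the high-SNR concavity region $\Pi_{F,\hat{\epsilon}}$ before settling near $\theta^*$, since strict concavity may fail on the larger set $\Pi_F$. Rigorously justifying this requires an escape-time argument: one must show that, starting from an initial $\theta_0 \in \Pi_{F,\hat{\epsilon}}$, the cumulative drift $\sum a_i f'(\theta_i)$ (which points toward $\theta^*$ by concavity) dominates the noise contribution $\sum a_i R_i(\theta_i)$ (which is bounded by Lemma~\ref{random-O-term}), so that $\theta_n$ is trapped in a compact subset of $\Pi_{F,\hat{\epsilon}}$ almost surely. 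Once this trapping is secured, the preceding paragraphs apply verbatim.
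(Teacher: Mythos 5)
Your overall strategy matches the paper's: reduce the theorem to the convergence machinery of Sections~\ref{f-convergence} and~\ref{with-concavity} by observing that (\ref{Memoryless-Channels}.i) gives (\ref{with-concavity}.a), that Theorem~\ref{convergence-theorem} together with strict concavity gives (\ref{with-concavity}.b), and that (\ref{Memoryless-Channels}.j) identifies the maximizer with the capacity achieving distribution. The paper's proof is a two-sentence assertion of exactly this chain, leaning on the remark in Section~\ref{with-concavity} that (\ref{with-concavity}.a) plus Theorem~\ref{Chernoff-Bound} yields (\ref{f-convergence}.a). So the route is the same; you just fill in more of the scaffolding.

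However, your paragraph purporting to \emph{verify} (\ref{f-convergence}.a) does not close the logical loop. Showing that $a_n g_{n^b}(\theta_n) \to 0$ almost surely only tells you that each individual proposed step is eventually small; it does not rule out the cumulative drift carrying $\theta_n$ arbitrarily close to $\partial\Theta$, where even a tiny step would trigger $A_n$, and it certainly does not show that $\theta_n$ ever \emph{enters} $\bar{B}(\theta^*,r/2)$ in the first place. The sentence ``once $\theta_n$ enters $\bar{B}(\theta^*,r/2)$ the update cannot push it out of $\Pi_{F,\hat{\epsilon}}$'' is a single-step statement and does not give a trapping region: the very next iterate may be in $\bar{B}(\theta^*,r)\setminus\bar{B}(\theta^*,r/2)$, and nothing in that paragraph prevents a slow escape. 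You correctly diagnose this yourself in the ``Main obstacle'' paragraph --- one needs to show that the concavity-induced drift toward $\theta^*$ dominates the noise controlled by Lemma~\ref{random-O-term}, so the iterates are confined to a compact subset of the interior --- but that paragraph is an IOU, not a proof. That Lyapunov/trapping argument is precisely the content of the claim ``(\ref{with-concavity}.a) + Theorem~\ref{Chernoff-Bound} $\Rightarrow$ (\ref{f-convergence}.a)'' that the paper states without proof in Section~\ref{with-concavity}; it should replace your current (\ref{f-convergence}.a) paragraph rather than appear as an afterthought. Once (\ref{f-convergence}.a) is genuinely secured, your subsequent step --- $f'(\theta_n)\to 0$ a.s.\ from Theorem~\ref{convergence-theorem}, plus the unique zero of $f'$ on $\overline{\Pi_{F,\hat{\epsilon}}}$ under strict concavity, forces $\theta_n \to \theta^*$ --- is correct and matches the paper.
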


\begin{proof}
Note that Condition (\thesection.a) and Theorem~\ref{convergence-theorem} imply Conditions (\ref{with-concavity}.a) and (\ref{with-concavity}.b); and Condition(\thesection.b) implies that the global maximum $\theta^*$ indeed corresponds to the capacity achieving distribution. The theorem then immediately follows.
\end{proof}

\begin{exmp} \label{not-concave-example}
Consider a binary symmetric channel with crossover probability $\eps > 0$. Let $X$ be a binary input Markov chain with the transition probability matrix
\begin{equation}
\label{matrixP} \left[
\begin{array}{cc}
1-\pi & \pi \\
1& 0
\end{array}
\right],
\end{equation}
where $0 \leq \pi \leq 1$. Apparently, $X$ is supported on the so-called $(1,\infty)$-RLL constraint~\cite{lm95}, which simply means that the string ``$11$'' is forbidden. Let $Y$ denote the corresponding output process. Assume that $X$ is parameterized by $\theta=(p(00), p(01), p(10))$, where $p(10)=1$ is in fact a constant. It can be checked that Conditions (\thesection.a)-(\thesection.d) are all satisfied, so when $\eps$ is sufficiently small, Conditions (\thesection.i)-(\thesection.j) are satisfied and thus Theorem~\ref{high-SNR-case} holds.

On the other hand, it has been shown that for the output process $Y$, as $\eps \to 0$,
\begin{equation} \label{et2}
H(Y)=H(X) + \frac{\pi(2-\pi)}{1+\pi} \eps \log (1/\eps) +O(\eps),
\end{equation}
where the $O(\eps)$-term is analytic with respect to $p$ (see Theorem $2.18$ of~\cite{hm09b}). It then follows that
$$
H(X|Y)=H(X)+H(Y|X)-H(Y)=H(\eps)-\frac{\pi(2-\pi)}{1+\pi} \eps \log (1/\eps) +O(\eps),
$$
where $H(\eps)=\eps \log 1/\eps+(1-\eps) \log 1/(1-\eps)$. One can readily verify that $-\pi(2-\pi)/(1+\pi)$ is strictly convex with respect to $\theta$, which implies the strict convexity (rather than concavity) of $H(X|Y)$ when $\eps$ is small enough. So, the concavity conjecture in~\cite{pa04} is not true in general, and thus the conditions guaranteeing the convergence of the GBAA are not satisfied.
\end{exmp}

\section*{Appendices}\appendix

\section{Proof of Theorem~\ref{has-to-be-primitive}}  \label{proof-of-has-to-be-primitive}

First of all, we define
$$
Z(\delta)=Z(X_1^n(\delta))=\begin{cases}
0 & (X_i(\delta), X_{i+1}(\delta)) \in \mathbf{B} \mbox{ for all $i \in \{1, \cdots, n-1\}$}\\
1 & (X_i(\delta), X_{i+1}(\delta)) \not \in \mathbf{B} \mbox{ for exactly one $i \in \{1, \cdots, n-1\}$} \\
2 & (X_i(\delta), X_{i+1}(\delta)) \not \in \mathbf{B} \mbox{ for more than one $i \in \{1, \cdots, n-1\}$}
\end{cases}.
$$
Next, applying the Birch bound~\cite{bi62}, we derive the following key inequality for this proof:
\begin{equation}  \label{the-birch-bound}
\hspace{-1cm} \frac{H(Y_{m+1}^n(\delta)|Y_1^m(\delta), X_0(\delta), Z(\delta))}{n-m} \leq H(Y) \leq \frac{H(Y_{1}^n|X_0(\delta), Z(\delta))}{n}+\frac{H(Z(\delta))}{n}+\frac{H(X_0(\delta))}{n},
\end{equation}
for any $m \leq n$.

\textbf{The lower bound part.} We first prove that there exists $C_1 > 0$ such that
$$
H(Y(\delta)) \geq H(Y(0)) + C_1 \delta \log 1/{\delta},
$$
which immediately implies the lower bound part of the theorem.
In this part, we set
\begin{equation}  \label{set-m-n-1}
n= \sqrt{\log \delta} \mbox{ and } m = n/2.
\end{equation}
By definition, we have
\begin{align*}
\hspace{-2cm}  H(Y_{m+1}^n(\delta)|Y_1^m(\delta), X_0(\delta), Z(\delta))/(n-m) & = \sum_{x_0} p^{\delta}(x_0, Z=0) H(Y_{m+1}^n(\delta)|Y_1^m(\delta), X_0(\delta), Z(\delta)=0)/(n-m) \\
&+\sum_{x_0} p^{\delta}(x_0, Z=1) H(Y_{m+1}^n(\delta)|Y_1^m(\delta), X_0(\delta), Z(\delta)=1)/(n-m) \\
&+\sum_{x_0} p^{\delta}(x_0, Z=2) H(Y_{m+1}^n(\delta)|Y_1^m(\delta), X_0(\delta), Z(\delta)=2)/(n-m). \\
& \triangleq T_1+T_2+T_3
\end{align*}
where $p^{\delta}(x_0, Z=0)$ means $P(X_0(\delta)=x_0, Z(\delta)=0)$.

We next give estimates for the each of three terms defined as above.

For $T_3$, notice that $n \delta < 1$ for sufficiently small $\delta$ and then
$$
\sum_{x_0} p^{\delta}(x_0, Z=2) \leq n^2 (C_0 \delta)^2+ n^3 (C_0 \delta)^3+\cdots \leq \frac{C_0^2}{1-n C_0 \delta} n^2 \delta^2,
$$
for some $C_0 > 0$. It then follows that
\begin{align} \label{T-3}
\nonumber T_3 & = \sum_{x_0} p^{\delta}(x_0, Z=2) H(Y_{m+1}^n(\delta)|Y_1^m(\delta), X_0(\delta), Z(\delta)=2)/(n-m) \\
\nonumber & \leq \sum_{x_0} p^{\delta}(x_0, Z=2) H(Y_{m+1}^n(\delta))/(n-m) \\
\nonumber & \leq \sum_{x_0} p^{\delta}(x_0, Z=2) \log |\mathcal{Y}| \\
& = O(n^2 \delta^2).
\end{align}

For $T_2$, one verifies that for any $x_0$, there exist constants $C_1, C_2 > 0$, $0 < \lambda_1 < \lambda_2 < 1$ such that
$$
C_1 n \delta \lambda_1^n \leq p^{\delta}(y_1^n|x_0, Z=1) \leq C_2 n \delta \lambda_2^n.
$$
Similarly, for any $x_0$, there exist $C_3, C_4 > 0$, and the same $0 < \lambda_1 < \lambda_2 < 1$ as above such that
$$
C_3 m \lambda_1^m \leq p^{\delta}(y_1^m|x_0, Z=1) \leq C_4 m \lambda_2^m.
$$
It then follows that for any $x_0$,
$$
C_5 \delta \lambda_2^n/\lambda_1^m \leq p^{\delta}(y_{m+1}^n|y_1^m, X_0, Z=1) \leq C_6 \delta \lambda_2^n/\lambda_1^m,
$$
which, together with (\ref{set-m-n-1}), implies that
$$
H(Y_{m+1}^n(\delta)|Y_1^m(\delta), X_0(\delta), Z(\delta)=1)= \hat{O} (\log 1/\delta)+ O(n \log \lambda_2)+ O(m \log \lambda_1).
$$
This, together with the fact
$$
p(x_0, Z=1)=\hat{O}(n \delta),
$$
implies that
\begin{equation}  \label{T-2}
T_2=\hat{O} (\delta \log 1/\delta)+ O(n \delta \log \lambda_2)+ O(m \delta \log \lambda_1).
\end{equation}

For $T_1$, notice that it can be rewritten as
$$
T_1=\sum p^{\delta}(y_1^n, x_0, Z=0) \log p^{\delta}(y_{m+1}^n|y_1^m, x_0, Z=0)/(n-m).
$$
One then verifies that
$$
\left| p^{\delta}(y_1^n, x_0, Z=0)-p^0(y_1^n, x_0, Z=0)|_{\delta=0} \right|= O(n \delta) p^0(y_1^n, x_0, Z=0),
$$
which implies that
$$
\hspace{-1cm}  \left|\frac{\sum p^0(y_1^n, x_0, Z=0) \log p^{\delta}(y_{m+1}^n|y_1^m, x_0, Z=0)}{n-m} \right.
\left. -\frac{\sum p^{\delta}(y_1^n, x_0, Z=0) \log p^{\delta}(y_{m+1}^n|y_1^m, x_0, Z=0)}{n-m} \right|=O(n \delta).
$$
When fixing $x_0$ and assuming $Z=0$, the analyticity argument in~\cite{gm05} can be used to prove that
$$
\sum p^0(y_1^n, x_0, Z=0) \log p^{\delta}(y_{m+1}^n|y_1^m, x_0, Z=0)/(n-m)
$$
exponentially converges to an analytic function of $\delta$. It then follows that
for some $0 < \rho < 1$
\begin{equation} \label{T-1}
T_1=H(Y(0))+O(\rho^m)+O(\delta).
\end{equation}
Combining (\ref{T-3}), (\ref{T-2}) and (\ref{T-1}), we then have
$$
H(Y_{m+1}^n(\delta)|Y_1^{m}, X_0(\delta), Z(\delta))/(n-m)=H(Y(0))+\hat{O}(\delta \log 1/\delta).
$$

\textbf{The upper bound part.} We then prove that there exists $C_2 > 0$,
$$
H(Y(\delta)) \leq H(Y(0)) + C_2 \delta^{1/2},
$$
which immediately implies the upper bound part of the theorem. For this part, setting
\begin{equation}  \label{set-m-n-2}
n=\delta^{-1/2} \mbox{ and } m=0.
\end{equation}
Using a parallel argument as in the lower bound part, we can still derive (\ref{T_3}), (\ref{T_2}) and (\ref{T_1}) and then
$$
H(Y_1^n|X_0(\delta), Z(\delta))/n=H(Y)_{\delta=0, 0}+O(\delta^{1/2}).
$$

It can verified that
$$
p(Z(\delta)=1)= O(n \delta), \quad p(Z(\delta)=2) = O(n^2 \delta^2),
$$
which, together with the straightforward fact $H(X_0(\delta))/n=O\left(\frac{1}{n}\right)$, implies that
$$
H(Z(\delta))= - \sum_{i=0}^2 p(Z(\delta)=i) \log p(Z(\delta)=i) = O(n \delta \log \delta)+O(n \delta \log n),
$$
and consequently
$$
\frac{H(Z(\delta))}{n} = O(\delta \log \delta)+O(\delta \log n).
$$
The upper bound part then follows from all the above estimates and (\ref{the-birch-bound}).

\end{document}